\colorlet{DarkGreen}{green!50!black}
\renewcommand{\vec}[1]{\ensuremath\boldsymbol{#1}}
\newcommand{\cd}{\ \leftarrow \ }
\newcommand{\bcq}{\text{\sf BCQ}}
\newcommand{\set}[1]{\left\{ #1 \right\}}
\newcommand{\setof}[2]{\left\{ #1 \suchthat #2 \right\}}
\newcommand{\proofseq}{\text{\sf ProofSeq}}
\newcommand{\myskip}{\hskip .6em}
\newcommand{\zy}{\text{\sf ZY}}
\newcommand{\filter}{\text{\sf filter}}
\newcommand{\minimaxw}{\text{\sf Minimaxwidth}}
\newcommand{\maximinw}{\text{\sf Maximinwidth}}
\newcommand{\vertexbound}{\text{\sf VB}}
\newcommand{\daentropic}{\text{\sf DAEB}}
\newcommand{\outputbound}{\text{\sf LogSizeBound}}
\newcommand{\outputsize}{|\text{\sf output}|}
\newcommand{\dapolymatroid}{\text{\sf DAPB}}
\newcommand{\csma}{\textsf{CSMA}}
\newcommand{\sa}{\text{\sf SA}}
\newcommand{\ed}{\text{\sf ED}}
\newcommand{\ec}{\text{\sf ECP}}
\newcommand{\vd}{\text{\sf VD}}
\newcommand{\panda}{\textsf{PANDA}}
\newcommand{\Dom}{\textsf{Dom}}
\newcommand{\faqcs}{\text{\sf FAQ-SS}}
\newcommand{\faq}{\text{\sf FAQ}}
\newcommand{\AGM}{\textsf{AGM}}
\newcommand{\agm}{\textsf{AGM}}
\newcommand{\csp}{\textsf{CSP}}
\newcommand{\FD}{\textsf{FD}}
\newcommand{\dc}{\textsf{DC}}
\newcommand{\cc}{\textsf{CC}}
\newcommand{\td}{\textsf{TD}}
\newcommand{\hdc}{\textsf{HDC}}
\newcommand{\hcc}{\textsf{HCC}}
\newcommand{\hfd}{\textsf{HFD}}
\newcommand{\image}{\text{\sf image}}
\newcommand{\tw}{\text{\sf tw}}
\newcommand{\ghtw}{\text{\sf ghtw}}
\newcommand{\fhtw}{\text{\sf fhtw}}
\newcommand{\adw}{\text{\sf adw}}
\newcommand{\subw}{\text{\sf subw}}
\newcommand{\dasubw}{\text{\sf da-subw}}
\newcommand{\dafhtw}{\text{\sf da-fhtw}}
\newcommand{\edafhtw}{\text{\sf eda-fhtw}}
\newcommand{\edasubw}{\text{\sf eda-subw}}
\newcommand{\calH}{\mathcal H}
\newcommand{\calA}{\mathcal A}
\newcommand{\calB}{\mathcal B}
\newcommand{\calK}{\mathcal K}
\newcommand{\calC}{\mathcal C}
\newcommand{\calP}{\mathcal P}
\newcommand{\calV}{{[n]}}
\newcommand{\mcalV}{\mathcal V}
\newcommand{\calE}{\mathcal E}
\newcommand{\calF}{\mathcal F}
\newcommand{\calT}{\mathcal T}
\newcommand{\calM}{\mathcal M}
\newcommand{\calR}{\mathcal R}
\newcommand{\calG}{\mathcal G}
\newcommand{\bB}{{\mathbf B}}
\newcommand{\incomp}{\perp}
\newcommand{\eat}[1]{}
\newcommand{\flow}{{\sf inflow}}
\newcommand{\flowsa}{{\sf flow}}
\newcommand{\outflow}{{\sf outflow}}
\newcommand{\opt}{{\textsf{OPT}}}
\newcommand{\obj}{{\textsf{OBJ}}}
\newcommand{\Deg}{\deg}
\newcommand{\poly}{\mathrm{poly}}
\newcommand{\F}{\mathbb{F}}
\newcommand{\R}{\mathbb{R}}
\newcommand{\Mod}{\text{\sf M}}
\newcommand{\Q}{\mathbb{Q}}
\newcommand{\N}{\mathbb{N}}
\newcommand{\argmax}{\mathop{\text{argmax}}}
\newcommand{\np}{\mathsf{NP}}
\newcommand{\ptime}{\mathsf{PTIME}}
\newcommand{\fpt}{\mathsf{FPT}}
\newcommand{\pr}{\mathop{\textnormal{Pr}}}
\newcommand{\be}{\begin{enumerate}}
\newcommand{\ee}{\end{enumerate}}
\newcommand{\bi}{\begin{itemize}}
\newcommand{\ei}{\end{itemize}}
\newcommand{\beq}{\begin{equation}}
\newcommand{\eeq}{\end{equation}}
\newcommand{\bp}{\begin{proof}}
\newcommand{\ep}{\end{proof}}
\newcommand{\bcor}{\begin{cor}}
\newcommand{\ecor}{\end{cor}}
\newcommand{\bthm}{\begin{thm}}
\newcommand{\ethm}{\end{thm}}
\newcommand{\blmm}{\begin{lmm}}
\newcommand{\elmm}{\end{lmm}}
\newcommand{\bdefn}{\begin{defn}}
\newcommand{\edefn}{\end{defn}}
\newcommand{\bprop}{\begin{prop}}
\newcommand{\eprop}{\end{prop}}
\newcommand{\bconj}{\begin{conj}}
\newcommand{\econj}{\end{conj}}
\newcommand{\bopm}{\begin{opm}}
\newcommand{\eopm}{\end{opm}}
\newcommand{\brmk}{\begin{rmk}}
\newcommand{\ermk}{\end{rmk}}
\newcommand{\norm}[1]{\|#1\|}
\newcommand{\suchthat}{\ | \ }
\newcommand{\inner}[1]{\langle #1 \rangle}
\newcommand{\mv}[1]{\mathbf{#1}}
\theoremstyle{plain}                   
\newtheorem{thm}{Theorem}[section]
\newtheorem{lmm}[thm]{Lemma}
\newtheorem{prop}[thm]{Proposition}
\newtheorem{cor}[thm]{Corollary}
\theoremstyle{definition}              
\newtheorem{opm}{Question}
\newtheorem{conj}[thm]{Conjecture}
\newtheorem{ex}[thm]{Example}
\newtheorem{defn}[thm]{Definition}
\newtheorem{rmk}[thm]{Remark}
\newtheorem{claim}{Claim}
\newcommand{\bbox}{
\begin{center}
\begin{tabular}{|c|}
\hline
}
\newcommand{\ebox}{
\\
\hline
\end{tabular}
\end{center}
}
\newlength{\toppush}
\def\subjname{Probabilistically Checkable Proofs and Inapproximability}
\def\doheading#1#2#3{\vfill\eject\vspace*{-\toppush}%
  \vbox{\hbox to\textwidth{{\bf}
  \subjnum: \subjname
  \hfil Lecturer: Hung Q. Ngo}%
  \hbox to\textwidth{{\bf} SUNY at Buffalo, Fall 2004\hfil#3\strut}%
  \hrule}
}
\newcommand{\defeq}{\stackrel{\mathrm{def}}{=}}
\algrenewcommand\algorithmicrequire{\textbf{Input:}}
\algrenewcommand\algorithmicensure{\textbf{Output:}}
\algrenewcommand\algorithmicwhile{\textbf{While}}
\algrenewcommand\algorithmicfor{\textbf{For}}
\algrenewcommand\algorithmicreturn{\textbf{Return}}
\algrenewcommand\algorithmicif{\textbf{If}}
\newcommand{\X}{\mathcal{X}}
\xpatchcmd{\@sect}{\uppercase}{\MakeTextUppercase}{}{}
\xpatchcmd{\@sect}{\uppercase}{\MakeTextUppercase}{}{}
\title[Shannon-flow inequalities, submodular width, and disjunctive datalog]{What do Shannon-type Inequalities, 
Submodular Width, and Disjunctive Datalog have to do with one another?}
\author{Mahmoud Abo Khamis}
\affiliation{%
  \institution{Relational\underline{AI}}
  \city{Berkeley}
  \state{CA}
  \country{USA}
}
\author{Hung Q. Ngo}
\affiliation{%
  \institution{Relational\underline{AI}}
  \city{Berkeley}
  \state{CA}
  \country{USA}
}
\author{Dan Suciu}
\affiliation{%
   \institution{University of Washington}
   \city{Seattle}
   \state{WA}
   \country{USA}
}
\thanks{An extended abstract of this manuscript appeared in the Proceedings of the 36th ACM Symposium on Principles of Database Systems {\em (PODS '17)} \cite{panda-pods}.\\
This work is partly supported by NSF grant 1535565.}
\begin{document}

\begin{abstract}
Recent works on bounding the output size of a conjunctive query with
functional dependencies and degree constraints have shown a deep connection between
fundamental questions in information theory and database theory.
We prove analogous output bounds for {\em disjunctive datalog rules}, and
answer several open questions regarding the tightness and looseness of these
bounds along the way.
Our bounds are intimately related to Shannon-type information inequalities.
We devise the notion of a ``proof sequence'' of a specific class of
Shannon-type information inequalities called ``Shannon flow inequalities''. 
We then show how such a proof sequence can be interpreted as symbolic instructions 
guiding an algorithm called $\panda$, which
answers disjunctive datalog rules within the time that the size bound predicted.
We show that $\panda$ can be used as a black-box to devise algorithms matching
precisely the fractional hypertree width and the submodular width runtimes
for aggregate and conjunctive queries {\em with} functional dependencies and 
degree constraints.

Our results improve upon known results in three ways. First, our bounds and
algorithms are for the much more general class of disjunctive datalog rules, of 
which conjunctive queries are a special case. 
Second, the runtime of $\panda$ matches precisely the submodular width bound, 
while the previous algorithm by Marx has a runtime that is polynomial in this 
bound. Third, our bounds and algorithms work for queries with input 
cardinality bounds, functional dependencies, {\em and} degree constraints.

Overall, our results show a deep connection between three seemingly
unrelated lines of research; and, our results on proof sequences for Shannon
flow inequalities might be of independent interest.
\end{abstract}

\maketitle


\section{Introduction}
\label{sec:the:problems}

This paper answers four major questions that resulted from four
different research threads, and establishes new connections between those threads.

\subsection{Output-Size Bound for Full Conjunctive Queries}

Grohe and Marx~\cite{DBLP:journals/talg/GroheM14},
Atserias, Grohe, and Marx~\cite{AGM}, and Gottlob, Lee, Valiant and Valiant \cite{GLVV}
developed a deep connection between the output size bound of a
conjunctive query with (or without) functional dependencies (FD) and
information theory. Our first problem is to extend this bound to
degree constraints, and to study whether the bound is tight.

We associate a full conjunctive query $Q$ to a hypergraph
$\calH\defeq ([n],\calE)$, $\calE \subseteq 2^{[n]}$ (where $[n]=\{1,\ldots,n\}$). The query's variables are
$A_i$, $i \in [n]$. Its atoms are $R_F, F \in \calE$. The query is:
\begin{equation}
   Q(\mv A_{[n]}) \cd \bigwedge_{F\in\calE} R_F(\mv A_F), \label{eqn:conjunctive:query}
\end{equation}
where $\mv A_J$ denotes the tuple $(A_j)_{j\in J}$, for
any $J \subseteq[n]$. Our goal is to compute an upper bound on the output
size, when the input database satisfies a set of {\em degree constraints}.

\bdefn[Degree, cardinality, and FD constraints]
For $X\subset Y\subseteq F \in \calE$,
define
\begin{equation}
   \Deg_F(\mv A_Y | \mv A_X) \defeq \max_{\mv t} |\Pi_{\mv A_Y}(\sigma_{\mv
   A_X=\mv t}(R_F))|,
\end{equation}
where $\Pi$ and $\sigma$ are the projection and selection operators in relational algebra
respectively.
Then, a {\em degree constraint} is an assertion of the form
$\Deg_F(\mv A_Y | \mv A_X) \leq N_{Y|X}$, where $N_{Y|X}$ is a natural number.
A {\em cardinality constraint} is an
assertion of the form $|R_F| \leq N_F$, for some $F \in \calE$; it is exactly
the degree constraint $\Deg_F(\mv A_F | \emptyset) \leq N_{F|\emptyset} \defeq N_F$.
A {\em functional dependency} $\mv A_X \rightarrow \mv A_Y$ is a
degree constraint with $N_{X\cup Y|X}=1$.
In particular, degree constraints strictly generalize both cardinality
constraints and FDs.
\edefn

Handling queries with degree constraints has a strong practical motivation.
For example, Armbrust et al.~\cite{DBLP:conf/cidr/ArmbrustFPLTTO09,
DBLP:journals/pvldb/ArmbrustCKFFP11,
DBLP:conf/sigmod/ArmbrustLKFFP13}
described a new approach to query evaluation, called {\em scale-independent
query processing}, which guarantees a fixed runtime even
when the size of the database increases without bound; this guarantee
is provided by asking developers to write explicit degree constraints,
then using heuristics to derive upper bounds on the query output.
Thus, improved upper bounds on the size of the query answer have
immediate applications to scale-independent query processing. Some complexity
results on the associated decision problem (``is the output size of the query
bounded?'') were considered
in~\cite{DBLP:journals/tods/BenediktCT16,DBLP:journals/pvldb/BenediktLT15,DBLP:journals/pvldb/CaoFWY14}.

Built in part on earlier work by~Friedgut and Kahn~\cite{MR1639767} and Chung et
al.~\cite{MR859293},
the first output-size upper bound for full conjunctive queries was established
in~\cite{DBLP:journals/talg/GroheM14,AGM}. Their bound,
known today as the {\em $\agm$-bound} (see Section~\ref{sec:background}),
was tight, but only for queries with cardinality constraints.
Extensions of the bound to handle FDs and degree constraints were discussed in
Gottlob et al.~\cite{GLVV} and Abo Khamis et al.~\cite{csma}, respectively, who
left open the question of whether these bounds are tight.
Our first question is whether the upper bounds in~\cite{GLVV, csma} for queries
with FDs or degree constraints (in addition to cardinality constraints) are
tight.

To set the technical context for this question, we briefly describe how the
bounds were derived.  Fix an input database $\mv D$ and consider the joint
distribution on random variables $\mv A_{[n]}$ where each output tuple
$\mv t \in Q(\mv D)$ is selected uniformly with probability $1/|Q(\mv D)|$. For
any $S\subseteq \calV$, let
$H(\mv A_S)$ denote the marginal entropy on the variables $\mv A_S$.
Then,\footnote{All logs are in base 2, unless otherwise stated.} by uniformity
$H(\mv A_{[n]})=\log |Q(\mv D)|$,
and $H(\mv A_Y| \mv A_X) \leq \log N_{Y|X}$ for every degree constraint.
A function $h : 2^{\mv A_{[n]}} \to \R_+$ is said to be {\em entropic} if there
is a joint distribution on $\mv A_{[n]}$ such that $h(\mv A_S)$ is the marginal
entropy on $\mv A_S$, $S\subseteq [n]$.
We just proved the {\em entropic bound} of a query, which states that
$\log |Q| \leq \max_h h(\mv A_{[n]})$, where $h$ ranges over all entropic
functions satisfying the given degree constraints.
Recently, Gogacz and Toru{\'n}czyk~\cite{szymon-2015} showed
that the entropic bound is tight given cardinality and FD constraints.
However, they did not address general degree constraints.

The problem with the entropic bound is that we do not know how to compute it
(except for the special case when all degree constraints are cardinality
constraints), partly because the entropic cone is characterized by infinitely
many {\em non-Shannon-type inequalities}~\cite{Yeung:2008:ITN:1457455,matus2007infinitely}.  To
overcome this limitation, Gottlob et al.~\cite{GLVV} replace
entropic functions (which are difficult) by polymatroids (which are
easier).  A {\em polymatroid} is a set function $h : 2^{[n]} \to \R_+$
that is non-negative,
{\em monotone} (i.e. satisfies $h(X) \leq h(Y)$ for all $X \subseteq Y\subseteq [n]$),
and {\em submodular} (i.e. satisfies $h(X \cup Y) + h(X \cap Y) \leq h(X) + h(Y)$ for all $X, Y \subseteq [n]$),
with $h(\emptyset)=0$.
Every entropic function $h$ is also a polymatroid,
if we write $h(S)$ for $h(\mv A_S)$ (see Section~\ref{sec:background}).
Linear inequalities satisfied by all polymatroids are called
{\em Shannon-type inequalities}~\cite{Yeung:2008:ITN:1457455}.
The {\em polymatroid bound} of a full conjunctive query is
$\log |Q| \leq \max_h h(\mv A_{[n]})$, where $h$ ranges over all
polymatroids satisfying the given constraints.
%
The polymatroid bound, while at least as large as the entropic bound,
can be shown to be tight for cardinality constraints, because
the AGM bound is exactly the polymatroid bound for cardinality constraints
(see Proposition~\ref{prop:unifying:bound}) and it is tight~\cite{AGM}.
The polymatroid bound is also tight for cardinality constraints with
certain sets of FDs~\cite{csma}.
We ask whether it is tight in more general settings:

\bopm Is the polymatroid bound tight for general degree constraints?  Or, at
least for queries with both cardinality and FD constraints?
\eopm

\begin{ex} \label{ex:intro:1}
  Consider the 4-cycle query:
  \begin{align}
     Q(A_1,A_2,A_3,A_4) & \cd R_{12}(A_1,A_2),R_{23}(A_2,A_3),
                              R_{34}(A_3,A_4),R_{41}(A_4,A_1) \label{eq:c4}
  \end{align}
  Assuming all input relations have size $\leq N$, then (a) the AGM
  bound is $|Q|\leq N^2$, (b) if we add the degree constraints
   $\Deg_{12}(A_1A_2 | A_1) \leq D$ and
   $\Deg_{{12}}(A_1A_2 | {A_2}) \leq D$ for some integer $D\leq \sqrt{N}$ then
  $|Q| \leq D\cdot N^{3/2}$, and (c) if we replace the degree constraints
  with FDs $A_1 \rightarrow A_2$ and $A_2\rightarrow A_1$ the bound
  reduces further to $|Q| \leq N^{3/2}$.  These bounds can be proven using
  only Shannon-type inequalities, thus they are polymatroid bounds.
  They are also asymptotically tight (see Appendix~\ref{app:sec:the:problems}).
\end{ex}

\noindent
\textbf{Answer 1.} The polymatroid bound is {\em not} tight for queries with
cardinality and FD constraints! By adding a variable to a non-Shannon
inequality by Zhang-Yeung~\cite{DBLP:journals/tit/ZhangY98} and constructing
accordingly a database instance, we prove in
Section~\ref{subsec:glvv:bound:is:not:tight} the following theorem.

\begin{thm}
  For any integer $s>0$, there exists a query $Q$ with output size $\Theta(s)$
  and cardinality and FD constraints, such that the ratio between the
  polymatroid bound and the entropic bound is $\geq N^s$, where $N$ is the
  size of the database.
\label{thm:glvv:not:tight}
\end{thm}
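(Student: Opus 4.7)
The plan is to leverage a non-Shannon-type inequality---specifically the Zhang--Yeung inequality on four variables---to create a concrete gap between the polymatroid bound and the entropic bound, and then to amplify this gap to $N^s$ via a product construction. The translation from ``bound gap'' to ``tightness failure'' comes for free: by the Gogacz--Toru\'nczyk theorem cited earlier, the entropic bound is already tight for queries with only cardinality and FD constraints, so any gap between the polymatroid bound and the entropic bound is literally a gap between the polymatroid bound and the true worst-case output size.

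First I would design a base query $Q_0$ of constant size. Start from a polymatroid $h^*$ on four variables $\{A,B,C,D\}$ that violates the Zhang--Yeung inequality by some strictly positive margin $\delta$; such an $h^*$ is precisely what Zhang and Yeung exhibited to prove that not every polymatroid is entropic. ``Adding a variable'' means introducing a fifth variable $E$ together with atoms and FDs involving $E$ whose cardinality and FD constraints pin down the values $h(S)$, for every relevant $S \subseteq \{A,B,C,D\}$, to the values prescribed by $h^*$. Because $h^*$ is a polymatroid, it is feasible for the polymatroid LP, so the polymatroid bound is at least $h^*(ABCDE) = M$. Any feasible entropic function, however, must satisfy the Zhang--Yeung inequality, which, combined with the enforced marginal values, strictly upper-bounds $h(ABCDE)$ below $M$. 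Hence the polymatroid bound of $Q_0$ exceeds the entropic bound by a constant $c>0$ in log scale.

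Next I would amplify by taking $s$ disjoint copies of $Q_0$ on fresh variables $(A^{(i)}, B^{(i)}, C^{(i)}, D^{(i)}, E^{(i)})$ for $i \in [s]$, with the same cardinality and FD constraints in each copy. Both polymatroids and entropic functions decompose additively across independent coordinates, so the polymatroid bound and the entropic bound of the product query are each $s$ times those of $Q_0$. The log-gap scales to $s\cdot c$, and by choosing $N$ such that $c \log N \geq 1$ (equivalently, by rescaling the constraint exponents by a constant), the ratio is at least $N^s$. The resulting query has $\Theta(s)$ atoms and $\Theta(s)$ variables. Finally, invoking Gogacz--Toru\'nczyk produces a database instance $\mv D$ with $\log |Q(\mv D)|$ equal, up to lower-order terms, to the entropic bound, which converts the bound gap into an instance-level gap and completes the proof.

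The main obstacle I expect is the base-case construction: one must choose the atoms and FDs in $Q_0$ so that (i) every value $h^*(S)$ is exactly pinned by the constraints, so $h^*$ actually maximizes the polymatroid LP; (ii) $h^*$ itself remains feasible after the pinning; and (iii) the added variable $E$ does not inadvertently create a relaxation that allows some entropic function to also reach $M$. This amounts to a careful numerical translation of the rational extremal polymatroid witness of the Zhang--Yeung inequality into an integral system of cardinality and functional-dependency constraints---which is why the argument uses the specific Zhang--Yeung inequality, whose violating polymatroid is known explicitly, rather than an abstract non-Shannon witness.
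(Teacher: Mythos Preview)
Your high-level plan---start from the Zhang--Yeung inequality on four variables, add a fifth variable, then amplify by taking variable-disjoint copies---is exactly the paper's strategy. The gap is in the base case: the description of what the fifth variable does is not right, and the actual mechanism is the heart of the proof.

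You propose to use the fifth variable and its FDs to ``pin down the values $h(S)$ to the values prescribed by $h^*$.'' This cannot be done: cardinality constraints are only upper bounds $h(F)\le c$, and FDs are only equalities of the form $h(X\cup Y)=h(X)$; neither fixes an individual $h(S)$ to a prescribed positive value. More importantly, even if you could pin those marginals, Zhang--Yeung is a linear relation among marginals on $\le 3$ of the four original variables; it says nothing about $h([n])$. You need a mechanism that turns ZY into an inequality with $h([n])$ on the left and only cardinality-bounded terms on the right, and your sketch does not supply one.

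The paper's mechanism is this. Write ZY (on variables $A,B,X,Y$) as
\[
h(AB)+4h(AXY)+h(BXY)\ \le\ R(h)\ -\ \bigl(h(A)+2h(X)+2h(Y)\bigr),
\]
where $R(h)$ collects positive two-variable terms. First, declare $AB$, $AXY$, $BXY$ to be keys of the full schema; adding the six conditional terms $h([n]\mid AB)+4h([n]\mid AXY)+h([n]\mid BXY)$ to both sides turns the left into $6\,h([n])$, and under the FDs those added terms vanish on the right. Second---and this is the purpose of the fifth variable $C$---declare $AC$, $XC$, $YC$ to be keys as well. Then from Shannon $h(A)+h(C)\ge h(AC)=h([n])$ (and similarly for $X,Y$) you obtain $5h([n])\le h(A)+2h(X)+2h(Y)+5h(C)$. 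Adding this to the previous inequality cancels the negative block and yields $11\,h([n])\le R(h)+5h(C)$ for every entropic $h$ satisfying the FDs, hence an entropic bound of $\tfrac{43}{11}\log N$. A specific polymatroid satisfying all the FDs and cardinality caps with $h([n])=4\log N$ then gives a gap of $N^{1/11}$; taking $11s$ disjoint copies gives $N^s$. The fifth variable is not a device for pinning marginals; it is a device for absorbing the negative terms of ZY into additional copies of $h([n])$.

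One minor point: the final appeal to Gogacz--Toru\'nczyk is unnecessary. The theorem as stated compares the polymatroid bound to the entropic bound, not to an actual instance, so no worst-case database needs to be constructed.
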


\subsection{Size Bound for Disjunctive Datalog Rules}
Disjunctive datalog~\cite{DBLP:journals/tods/EiterGM97,DBLP:conf/datalog/AlvianoFLPPT10} is a powerful extension
of datalog.
In this paper we are interested in a single disjunctive-datalog rule:
\begin{equation}
   P: \ \ \ \bigvee_{B \in \calB}T_B(\mv A_B) \cd \bigwedge_{F\in\calE} R_F(\mv A_F)
   \label{eqn:disjunctive:datalog:query}
\end{equation}
The body is similar to that of a conjunctive query, while the head is
a disjunction of output relations $T_B$, which we call {\em
targets}. Given a database instance $\mv D$, a {\em model} of $P$ is a tuple
$\mv T=(T_B)_{B\in\calB}$ of relations, one for each target,
such that the logical implication indicated by the rule holds.
More precisely, for any
tuple $\mv t$, if $\Pi_F(\mv t) \in R_F$ for every input relation
$R_F$, then there exists a target $T_B \in \mv T$ such that
$\Pi_B(\mv t) \in T_B$.  We write $\mv T \models P$ to denote the fact
that $\mv T$ is a model.  Define the size of a model to be
$\max_B |T_B|$, and define the {\em output size} of $P$ to be the minimum size
over all models:
\begin{align}
|P(\mv D)| \defeq \min_{\mv T: \mv T \models P} \max_{B \in \calB} |T_B|\label{eq:def:size}
\end{align}

Our second question is to find an output size upper bound for a disjunctive
datalog rule whose input database $\mv D$ satisfies the given degree
constraints. If the rule has a single target then it becomes a
conjunctive query: a model is any superset of the answer, and
the output size is the standard size of the query's answer.  We thus
expect the upper bound to come in two flavors, entropic and
polymatroid, as is the case for full conjunctive queries.

\bopm Find the entropic and polymatroid output size bounds of a
disjunctive datalog rule, under general degree constraints.
Determine if it is tight.
\eopm

\begin{ex} \label{ex:intro:2}
  Consider the disjunctive datalog rule, where input relations have sizes $\leq
   N$:
    \begin{align*}
      P:\ \ \ & T_{123}(A_1,A_2,A_3)\vee T_{234}(A_2,A_3,A_4) \cd
            R_{12}(A_1,A_2),R_{23}(A_2,A_3),R_{34}(A_3,A_4).
    \end{align*}
    Intuitively, for every tuple $\mv t = (a_1,a_2,a_3,a_4)$ in
    $R_{12} \Join R_{23} \Join R_{34}$ we want to have either $(a_1,a_2,a_3)$ in $T_{123}$ or $(a_2,a_3,a_4)$ in $T_{234}$ or both. A model of size $N^3$ can be obtained
    trivially by populating the target $T_{123}$ with all triples
    obtained from the active domain, but we show below that
    $|P(\mv D)| \leq N^{3/2}$, for all $\mv D$ whose relation sizes are $\leq
    N$.
\end{ex}

\noindent
\textbf{Answer 2.} To describe the answer to the second question, we recall
some standard notations~\cite{Yeung:2008:ITN:1457455}.
We identify set-functions
$h : 2^{[n]} \to \R_+$ with vectors in $\R_+^{2^n}$, and we use both
$h(\mv A_S)$ and $h(S)$ to denote $h_S$, where $A_1, A_2, \ldots$
are (random) variables. (We will use $h(S)$ and $h(\mv A_S)$ interchangeably in
this paper, depending on context.
The reason is that, $h(\mv A_S)$ is more apt for marginal entropies,
and $h(S)$ is more apt for polymatroids.)
The sets
$\Gamma^*_n \subset \overline \Gamma^*_n \subset \Gamma_n \subset
\R_+^{2^n}$
denote the set of entropic functions, its topological closure, and the
set of polymatroids. (See Figure~\ref{fig:set:functions} and Definition~\ref{def:set-functions}.)
We encode degree constraints by a set $\dc$ of triples
$(X,Y,N_{Y|X})$, specifying $\Deg_F(\mv A_Y | \mv A_X) \leq N_{ Y| X}$. Define
\begin{equation}
   \hspace{-8pt}
   \hdc \defeq \setof{h : 2^\calV \to \R_+}{\bigwedge_{(X,Y,N_{Y|X}) \in \dc}h( Y |  X) \leq \log N_{
      Y| X}}
    \label{eqn:hdc}
\end{equation}
to be the collection of set functions $h$ satisfying the constraints $\dc$, where
$h( Y |  X) \defeq h(Y) - h( X)$.\footnote{If $h$ was entropic, then
$h(Y | X) = h(Y)-h(X)$ is the conditional entropy. Recall also $X \subseteq
Y$ whenever $(X,Y,N_{Y|X}) \in \dc$.} Fix a closed
subset $\calF \subseteq \R_+^{2^n}$.  Define the {\em
$\log$-size-bound} with respect to $\calF$ of a disjunctive datalog rule $P$
to be the quantity:
\begin{equation}
   \outputbound_\calF(P) \defeq \max_{h\in\calF} \min_{B\in\calB} h(B).
   \label{eqn:intro:output:bound}
\end{equation}
The following is our second result, whose proof can be found in
Section~\ref{sec:size:bound:ddl}.

\bthm\label{thm:intro:size:bound:disjunctive}
Let $P$ be any disjunctive datalog rule~\eqref{eqn:disjunctive:datalog:query},
and $\dc$ be given degree constraints.
\bi
 \item[(i)] For any database instance $\mv D$ satisfying all constraints in
    $\dc$, the following holds:
\begin{eqnarray}
  \log |P(\mv D)|
  &\leq& \underbrace{\outputbound_{\overline\Gamma^*_n \cap \hdc}(P)}_{\text{entropic bound}}
         \label{eqn:intro:entropic:disjunctive:2}\\
  &\leq& \underbrace{\outputbound_{\Gamma_n \cap \hdc}(P)}_{\text{polymatroid bound}}
         \label{eqn:intro:polymatroid:disjunctive:2}
\end{eqnarray}
 \item[(ii)] The entropic bound above is asymptotically tight.
 \item[(iii)] The polymatroid bound is not tight, even if the constraints
   are all cardinality constraints, and even if all cardinality upperbounds are
   identical. Furthermore, the gap between the two bounds can be arbitrarily
   large.
\ei
\ethm


Inequalities~\eqref{eqn:intro:entropic:disjunctive:2} and
\eqref{eqn:intro:polymatroid:disjunctive:2} generalize the entropic
and polymatroid bounds from full conjunctive queries
(Proposition~\ref{prop:unifying:bound}) to arbitrary disjunctive datalog rules.
The tightness result $(ii)$ generalizes the main result in~\cite{szymon-2015},
which states that the entropic bound is
asymptotically tight for full conjunctive queries under FDs. Note that the
non-tightness result $(iii)$ is incomparable to the non-tightness
result in Theorem~\ref{thm:glvv:not:tight}.

\begin{ex} From~\eqref{eqn:intro:polymatroid:disjunctive:2},
the bound $|P(\mv D)| \leq N^{3/2}$ in Example~\ref{ex:intro:2} follows
by applying twice the submodularity law for polymatroids:
\begin{align*}
   3\log N
   & \geq h(A_1A_2) + h(A_2A_3) + h(A_3A_4) &\text{(cardinality constraints)} \\
   & \geq h(A_1A_2A_3) + h(A_2) + h(A_3A_4) &\text{(submodularity)} \\
   & \geq h(A_1A_2A_3) + h(A_2A_3A_4) &\text{(submodularity and $h(\emptyset) = 0$)} \\
   & \geq 2 \min(h(A_1A_2A_3),h(A_2A_3A_4)) &\\
   & \geq 2\log |P|. &\text{(From~\eqref{eqn:intro:polymatroid:disjunctive:2})}
\end{align*}
\label{ex:intro:2b}
\end{ex}

Table~\ref{tab:bounds} summarizes size bounds for full conjunctive queries and disjunctive datalog rules and states their tightness properties.
\colorlet{clr:entropy}{blue}
\colorlet{clr:polymatroid}{red}
\begin{table}[th!]
   \begin{tabular}{|c|c|c|c|}
      \hline
      \rowcolor{gray!30}
      &
      Bound &
      {\color{clr:entropy}Entropic} Bound&
      {\color{clr:polymatroid}Polymatroid} Bound\\
      \hline\hline
      \multirow{4}{*}{\rotatebox{90}{$\begin{array}{c}\text{Full Conjunctive}\\ \text{Query $Q$}\end{array}$}}&
      \begin{tabular}{c}
         \\
         Definition\\
         ~
      \end{tabular}&
      \begin{tabular}{c}
      $\displaystyle{\log|Q| \leq \max_{h \in
         {\color{clr:entropy}\overline\Gamma^*_n} \cap \hdc}h([n])}$\\
         (See~\cite{GLVV,csma})
      \end{tabular}&
      \begin{tabular}{c}
      $\displaystyle{\log|Q| \leq\max_{h \in {\color{clr:polymatroid}\Gamma_n} \cap \hdc}h([n])}$\\
      (See~\cite{GLVV,csma})
      \end{tabular}\\
      \cline{2-4}
      &
      \begin{tabular}{c}
         Cardinality\\
         Constraints ($\cc$)
      \end{tabular}&
      \begin{tabular}{c}
         $\agm$ bound~\cite{DBLP:journals/talg/GroheM14,AGM}\\
         (Tight~\cite{AGM})
      \end{tabular}&
      \begin{tabular}{c}
         $\agm$ bound~\cite{DBLP:journals/talg/GroheM14,AGM}\\
         (Tight~\cite{AGM})
      \end{tabular}\\
      \cline{2-4}
      &
      $\cc$ and $\FD$&
      \begin{tabular}{c}
         {\color{clr:entropy}Entropic} Bound for $\FD$~\cite{GLVV}\\
         ({\color{clr:entropy}Tight}~\cite{szymon-2015})
      \end{tabular}&
      \begin{tabular}{c}
         {\color{clr:polymatroid}Polymatroid} Bound for $\FD$~\cite{GLVV}\\
         \textbf{({\color{clr:polymatroid}Not tight} [Thm.~\ref{thm:glvv:not:tight}])}
      \end{tabular}\\
      \cline{2-4}
      &
      \begin{tabular}{c}
         Degree\\
         Constraints ($\dc$)
      \end{tabular}&
      \begin{tabular}{c}
         {\color{clr:entropy}Entropic} Bound for $\dc$~\cite{csma}\\
         \textbf{({\color{clr:entropy}Tight} [Thm.~\ref{thm:intro:size:bound:disjunctive}, (ii)])}
      \end{tabular}&
      \begin{tabular}{c}
         {\color{clr:polymatroid}Polymatroid} Bound for $\dc$~\cite{csma}\\
         \textbf{({\color{clr:polymatroid}Not tight} [Thm.~\ref{thm:glvv:not:tight}])}
      \end{tabular}\\
      \hline\hline
      \multirow{3}{*}{\rotatebox{90}{$\begin{array}{c}\text{Disjunctive Datalog}\\ \text{Rule $P$}\end{array}$}}&
      \begin{tabular}{c}
         \\
         Definition\\
         ~
      \end{tabular}&
      \begin{tabular}{c}
      $\displaystyle{\log |P(\mv D)| \leq \max_{h \in
         {\color{clr:entropy}\overline\Gamma^*_n} \cap \hdc}\min_{B\in\calB}
         h(B)}$\\
         (\textbf{[Thm.~\ref{thm:intro:size:bound:disjunctive}, (i)]})
      \end{tabular}&
      \begin{tabular}{c}
      $\displaystyle{\log |P(\mv D)| \leq\max_{h \in {\color{clr:polymatroid}\Gamma_n} \cap \hdc}\min_{B\in\calB} h(B)}$\\
         (\textbf{[Thm.~\ref{thm:intro:size:bound:disjunctive}, (i)]})
      \end{tabular}\\
      \cline{2-4}
      &
      \begin{tabular}{c}
         Cardinality\\
         Constraints ($\cc$)
      \end{tabular}&
      \begin{tabular}{c}
         {\color{clr:entropy}Entropic} Bound for\\
         Disjunctive Datalog with $\cc$\\
         \textbf{({\color{clr:entropy}Tight}
         [Thm.~\ref{thm:intro:size:bound:disjunctive}, (ii)])}
      \end{tabular}&
      \begin{tabular}{c}
         {\color{clr:polymatroid}Polymatroid} Bound for\\
         Disjunctive Datalog with $\cc$\\
         \textbf{({\color{clr:polymatroid}Not tight}
         [Thm.~\ref{thm:intro:size:bound:disjunctive}, (iii)])}
      \end{tabular}\\
      \cline{2-4}
            &
      \begin{tabular}{c}
         $\cc$ and $\FD$
      \end{tabular}&
      (same as below)&
      (same as below)\\
      \cline{2-4}
      &
      \begin{tabular}{c}
         Degree\\
         Constraints ($\dc$)
      \end{tabular}&
      \begin{tabular}{c}
         {\color{clr:entropy}Entropic} Bound for\\
         Disjunctive Datalog with $\dc$\\
         \textbf{({\color{clr:entropy}Tight}
         [Thm.~\ref{thm:intro:size:bound:disjunctive}, (ii)])}
      \end{tabular}&
      \begin{tabular}{c}
         {\color{clr:polymatroid}Polymatroid} Bound for\\
         Disjunctive Datalog with $\dc$\\
         \textbf{({\color{clr:polymatroid}Not tight}
         [Thm.~\ref{thm:intro:size:bound:disjunctive}, (iii)}\\
         \hspace{1.7cm}\textbf{or Thm.~\ref{thm:glvv:not:tight}])}
      \end{tabular}\\
      \hline
   \end{tabular}
\caption{Summary of entropic and polymatroid size bounds for full conjunctive queries and for disjunctive datalog rules along with their tightness properties.
The top half of the table depicts bounds for full conjunctive queries while
the bottom half depicts bounds for disjunctive datalog rules.
The ``Definition'' row shows definitions of both the entropic and polymatroid bounds.
The ``Cardinality Constraints ($\cc$)'' row shows the special cases of those definitions when only cardinality constraints are given.
The ``$\cc$ and $\FD$'' row shows the cases when both cardinality constraints and functional dependencies are given.
The ``Degree Constraints ($\dc$)'' row shows the most general cases.
New results due to this work are marked in \textbf{bold}.
}
\label{tab:bounds}
\end{table}

\subsection{Algorithm evaluating disjunctive datalog rules}
\label{subsec:intro:algo-disjunctive}

A {\em worst-case optimal  algorithm} is an algorithm for computing a query in
time within a poly-log factor of a tight worst-case output size bound. Such
algorithms are known for full conjunctive queries under cardinality
constraints~\cite{NPRR,LFTJ,skew,tetris} and FDs~\cite{csma}.
Our next problem is finding a (worst-case optimal?) algorithm for a disjunctive
datalog rule $P$, under degree constraints. More precisely, given an input
database $\mv D$ satisfying given degree constraints, compute a model $\mv T$ in time no
larger than the worst-case bound for $|P(\mv D)|$ under those constraints.

Notice that we allow the algorithm to compute {\em any} model, and not
necessarily a minimal model. This is unavoidable in order to guarantee
the runtime proportional to a good upperbound on $|P(\mv D)|$, as we explain
next.
A conjunctive query $Q$ is a
single-target disjunctive datalog rule $P_Q$. If $Q$ is full, then
from {\em any} model $\mv T$ of $P_Q$ we can answer $Q$ by semijoin-reducing
$\mv T$ with each input relation. Thus, any algorithm evaluating
disjunctive datalog rules can also be used to answer (i.e. compute a minimal
model for) a full conjunctive query. However, this does not hold for non-full
conjunctive queries. For example, if $Q$ is Boolean, then $P_Q$ has a single
target $T_\emptyset()$, and its size is trivially bounded by $|P(\mv D)| \leq
1$; for trivial information theoretic reasons, we simply cannot answer an
arbitrary Boolean conjunctive query in $O(1)$-time.
By allowing the algorithm to compute any model, we can answer a Boolean query
trivially by returning $T_\emptyset = \set{()}$, since this is always a model.
Our third problem is:

\bopm Design an algorithm to compute a model for a given disjunctive datalog
rule, under given degree constraints, with runtime matching the
polymatroid bound~\eqref{eqn:intro:polymatroid:disjunctive:2} above.
\eopm


\noindent
\textbf{Answer 3.}
Details are presented in Sections~\ref{sec:ps} and~\ref{sec:panda}.
We summarize the ideas here.
We present an algorithm
called $\panda$ ({\bf P}roof-{\bf A}ssisted e{\bf N}tropic {\bf D}egree-{\bf A}ware),
which computes a model of a disjunctive datalog rule $P$
within the runtime predicted by the bound~\eqref{eqn:intro:polymatroid:disjunctive:2}.
$\panda$ is derived using a novel principle that we introduced in
\cite{csma}. First, one has to provide ``evidence'', called {\em proof
sequence}, that the polymatroid bound is correct.
Second, each step in the sequence is interpreted as a
relational operator (one of: join, horizontal partition, union),
leading to a model of $P$.

We elaborate a bit more on how a proof sequence arises from the
polymatroid bound~\eqref{eqn:intro:polymatroid:disjunctive:2}.
This bound seems difficult to handle at first glance:
while the feasible region $\Gamma_n \cap \hdc$ is polyhedral,
the objective of~\eqref{eqn:intro:output:bound} is non-linear. We start
by proving in Lemma~\ref{lmm:lambda:1:reformulation}
that it is equivalent to a linear program: there exist constants
$\lambda_B \geq 0$, for $B \in \calB$,
for which:
\begin{equation}
  \max_{h\in \Gamma_n \cap \hdc} \min_{B\in\calB} h(B) =
   \max_{h\in \Gamma_n \cap \hdc} \sum_{B\in\calB} \lambda_B h(B)
\label{eq:intro:polymatroid:bound:lp}
\end{equation}
The right hand side of~\eqref{eq:intro:polymatroid:bound:lp} is simpler to deal with.
In particular, from linear programming duality and Farkas's lemma we show in
Proposition~\ref{prop:sfi:ddl:target} that one can compute non-negative coefficients
$\delta_{Y|X}$ for which the following hold:
\begin{align}
   \max_{h\in \Gamma_n \cap \hdc} \sum_{B\in\calB} \lambda_B h(B)
   &= \sum_{(X,Y,N_{Y|X}) \in \dc}
   \delta_{Y|X} \cdot \log N_{Y|X}, \myskip{ \text{ and}}\\
   \sum_{B\in \calB} \lambda_{B} \cdot h(B)
   &\leq \sum_{(X,Y,N_{Y|X}) \in \dc}
   \delta_{Y|X} \cdot h(Y|X), \myskip \forall h \in \Gamma_n.
\label{eqn:intro:sfi}
\end{align}
Inequality~\eqref{eqn:intro:sfi}, which holds for any polymatroid $h \in
\Gamma_n$, is called a {\em Shannon-flow inequality};
it is a (vast) generalization of Shearer's lemma~\cite{MR859293}.
Note that~\eqref{eqn:intro:sfi} implies
$\log |P| \leq \sum {\delta_{Y|X}} \log N_{Y|X}$.
Thus, the first task is to prove (i.e. provide evidence for) the inequality
(\ref{eqn:intro:sfi}).

A key technical result in the paper is
Theorem~\ref{thm:ps:construction:1} which, stated informally, says
that inequality~\eqref{eqn:intro:sfi} can be proved using a sequence of
rules of one of the following four types, where
$ X \subseteq  Y$:
\begin{align}
   \text{\em Submodularity} &&h( Y| X) \rightarrow h( Y \cup Z| X \cup Z),\label{eq:ps:submod}\\
   \text{\em Monotonicity} &&h( Y) \rightarrow h( X),\label{eq:ps:mono}\\
   \text{\em Composition} &&h( X) + h( Y| X) \rightarrow h( Y),\label{eq:ps:comp}\\
   \text{\em Decomposition} &&h( Y) \rightarrow h( X) + h( Y| X)\label{eq:ps:decomp}.
\end{align}
We think of the above rules as rewrite rules that transform the terms on the LHS into the
terms on the RHS. Moreover for each one of the above rules, the RHS is guaranteed to be
smaller than or equal to the LHS for all $h \in \Gamma_n$.
To explain the theorem, assume for the sake of discussion that all
coefficients in (\ref{eqn:intro:sfi}) are natural numbers.  Then both
sides of~\eqref{eqn:intro:sfi} can be seen as bags of terms, and the theorem says that there
exists a sequence of rewritings using the four rules above,
transforming the term bag on the RHS of~\eqref{eqn:intro:sfi} to the term bag on the LHS.  Obviously, if such a
sequence exists, then inequality~\eqref{eqn:intro:sfi} holds, because each
rewriting replaces a term (or sum of two terms) with a smaller or equal term (or
sum of two terms). The converse statement is non-obvious.   For example in
our prior work~\cite{csma} we found that, without the decomposition
rule~\eqref{eq:ps:decomp}, the remaining three rules along with the additional submodularity rule $h(A)+h(B)\rightarrow h(A\cup B)+h(A\cap B)$ are {\em not} a complete proof system: there exists
a Shannon-flow inequality without a proof sequence consisting only of those rules.

Finally, $\panda$ consists of interpreting each step in the proof
sequence as a relational operation, leading to:

\bthm $\panda$ computes a model of a disjunctive datalog rule $P$
under degree constraints $\dc$ in time
\footnote{In this paper, the big-$O$ notation is in data-complexity, hiding a factor
   that is query-dependent and data-independent.
   The big-$\tilde O$ additionally hides a single $\log$-factor in data-complexity.}
$$\tilde O\left(N + \poly(\log N)\cdot \prod_{(X,Y,N_{Y|X})\in\dc}N_{
  Y| X}^{\delta_{ Y| X}}\right),$$
where $\displaystyle{\sum_{(X,Y,N_{Y|X})\in\dc} \delta_{Y|X}\log N_{Y|X} =
\outputbound_{\Gamma_n\cap \hdc}(P)}$.
\label{thm:main:panda:disjunctive}
\ethm

We now illustrate how the
inequality in Example~\ref{ex:intro:2b} can be proved using the above set of
complete rules, and then use the proof to compute the query $P$ from
Example~\ref{ex:intro:2} in time $O(N^{3/2})$.

\begin{ex} \label{ex:intro:panda} Consider the disjunctive rule $P$ in
   Example~\ref{ex:intro:2}, which we repeat here:
   \begin{align*}
      P:\ \ \ & T_{123}(A_1,A_2,A_3)\vee T_{234}(A_2,A_3,A_4) \cd
      R_{12}(A_1,A_2),R_{23}(A_2,A_3),R_{34}(A_3,A_4).
   \end{align*}
   Assume all input relations have cardinality $\leq N$.  We
   illustrate $\panda$ by showing how to compute a model for this rule in time
   $O(N^{3/2})$.  To do this, the first step for us is to
   provide a proof sequence, showing that the output size of $P$ satisfies $|P| \leq N^{3/2}$.  While we have done this
   already in Example~\ref{ex:intro:2b}, we provide here an
   alternative proof sequence, using only the proof steps
   \eqref{eq:ps:submod}\ldots\eqref{eq:ps:decomp}.  As before we start by noticing
   $\log |P| \leq \min(h(A_1A_2A_3),h(A_2A_3A_4)) \leq
   \frac{1}{2}(h(A_1A_2A_3)+h(A_2A_3A_4))$,
   then we prove the Shannon-flow inequality:
   \[
      \frac 1 2 \bigl( h(A_1A_2A_3)+h(A_2A_3A_4) \bigr) \leq
      \frac 1 2 \bigl( h(A_1A_2)+h(A_2A_3)+h(A_3A_4) \bigr)
   \]
   using a proof sequence consisting of steps~\eqref{eq:ps:submod}\ldots\eqref{eq:ps:decomp}:
   \begin{eqnarray*}
      h(A_1A_2)+h(A_2A_3)+h(A_3A_4) &\stackrel{(1)}{\rightarrow}& \\
      h(A_1A_2|A_3) + h(A_2A_3) + h(A_3A_4) &\stackrel{(2)}{\rightarrow}& \\
      \left[h(A_1A_2|A_3) + h(A_3)\right] + \left[h(A_2A_3) + h(A_4|A_3)\right] &\stackrel{(3)}{\rightarrow}& \\
      \left[h(A_1A_2|A_3) + h(A_3)\right] + \left[h(A_2A_3) + h(A_4|A_2A_3)\right] &\stackrel{(4)}{\rightarrow}& \\
      h(A_1A_2A_3) + h(A_2A_3A_4) & &
   \end{eqnarray*}
(And since each one of the terms $h(A_1A_2)$, $h(A_2A_3)$ and $h(A_3A_4)$ is bounded by $\log N$, this proves that $\log |P| \leq 3/2 \log N$.)
   $\panda$ associates each term in the above proof sequence to some relation,
   and interprets each proof step as a
   relational operator on the relations associated to the terms in that proof step.
   (See Figure~\ref{fig:panda1}.)
   Initially, the terms $h(A_1A_2)$, $h(A_2A_3)$ and $h(A_3A_4)$ are associated with the input relations $R_{12}$, $R_{23}$ and $R_{34}$ respectively.
   Step (1) is a submodularity step: $\panda$ does nothing but keeps track that the
   new term $h(A_1A_2|A_3)$ is associated with the relation $R_{12}$.
   Step (2) is a
   decomposition step:  $\panda$ partitions $R_{34}(A_3,A_4)$
   horizontally into $R_3'(A_3)$ and $R_{34}'(A_3,A_4)$, where $R_3'$
   contains all values $a_3$ that are ``heavy hitters'', meaning that
   $|\sigma_{A_3=a_3}(R_{34})| \geq \sqrt N$,
   and $R_{34}'$ consists of all pairs $(a_3,a_4)$ with $a_3$ being
   ``light hitters''. $R_3'$ becomes associated with the term $h(A_3)$
   while $R_{34}'$ becomes associated with $h(A_4|A_3)$.
   Step (3) is another submodularity where $\panda$ associates the new term $h(A_4|A_2A_3)$ with the relation $R_{34}'$.
   Step (4) are two compositions,
   interpreted as joins.  $\panda$ computes the first target
   $T_{123}(A_1,A_2,A_3)= R_{12}(A_1,A_2) \Join R_3'(A_3)$, and the
   second target
   $T_{234}(A_2,A_3,A_4) = R_{23}(A_2,A_3) \Join R_{34}'(A_3,A_4)$.
   Both joins take time $O(N^{3/2})$, because
   $|R_3'|\leq |R_{34}|/ \sqrt N \leq \sqrt N$ and $\Deg_{R_{34}'}(A_3A_4 | A_3) < \sqrt N$.
   (Note that without the horizontal partitioning into heavy and light hitters in step (3), both $|R_3'|$ and $\Deg_{R_{34}'}(A_3A_4 | A_3)$ could have been as large as $N$, hence both joins could have taken up to $N^2$ time, which would have exceeded our budget of $N^{3/2}$.)
\end{ex}

\begin{figure}[ht!]
\tikzstyle{panda-fig:node} = [draw, ellipse, inner sep = .05cm]
\tikzstyle{panda-fig:edge} = [->]
\centering{
\begin{tikzpicture}[scale=.75, every node/.style={transform shape}]
\colorlet{clr-t123}{red!50!blue}
\colorlet{clr-t234}{DarkGreen!50!blue}
\begin{scope}[shift={(0,0)}]
   \node[panda-fig:node] at(5,0) (n1) {${\color{red}h(A_1A_2)}+{\color{DarkGreen}h(A_2A_3)}+{\color{blue}h(A_3A_4)}$};
   \node[panda-fig:node] at(5,-2) (n11) {${\color{red}h(A_1A_2|A_3)}+{\color{DarkGreen}h(A_2A_3)}+{\color{blue}h(A_3A_4)}$};
   \node[panda-fig:node] at(2.5,-4) (n111) {${\color{red}h(A_1A_2|A_3)}+{\color{blue}h(A_3)}$};
   \node[panda-fig:node] at(7.5,-4) (n112) {${\color{DarkGreen}h(A_2A_3)}+{\color{blue}h(A_4|A_3)}$};
   \node[panda-fig:node] at(2.5,-6) (n1111) {$\color{clr-t123}h(A_1A_2A_3)$};
   \node[panda-fig:node] at(7.5,-6) (n1121) {${\color{DarkGreen}h(A_2A_3)}+{\color{blue}h(A_4|A_2A_3)}$};
   \node[panda-fig:node] at(7.5,-8) (n11211) {$\color{clr-t234}h(A_2A_3A_4)$};
   \draw[panda-fig:edge,red] (n1)--(n11) node[midway,left] {\color{red}submodularity};
   \draw[panda-fig:edge,blue] (n11)--(n111) node[midway,right=.1cm] {\color{blue}decomposition};
   \draw[panda-fig:edge,blue] (n11)--(n112);
   
   \draw[panda-fig:edge,color={clr-t123}] (n111)--(n1111) node[midway, left] {\color{clr-t123}composition};
   \draw[panda-fig:edge,blue] (n112)--(n1121) node[midway, right] {\color{blue}submodularity};
   \draw[panda-fig:edge,color={clr-t234}] (n1121)--(n11211)node[midway, right] {\color{clr-t234}composition};
   \node at (5,-9) {\Large Proof steps};
\end{scope}

\draw[gray] (10,.5)--(10,-8.5);

\begin{scope}[shift={(9.7cm,0)}]
   \node[panda-fig:node] at(5,0) (n1) {${\color{red}R_{12}(A_1,A_2)}, {\color{DarkGreen}R_{23}(A_2,A_3)}, {\color{blue}R_{34}(A_3,A_4)}$};
   \node[panda-fig:node] at(5,-2) (n11) {${\color{red}R_{12}(A_1,A_2)}, {\color{DarkGreen}R_{23}(A_2,A_3)}, {\color{blue}R_{34}(A_3,A_4)}$};
   \node[panda-fig:node] at(2.5,-4) (n111) {${\color{red}R_{12}(A_1,A_2)}, {\color{blue}R'_3(A_3)}$};
   \node[panda-fig:node] at(7.5,-4) (n112) {${\color{DarkGreen}R_{23}(A_2,A_3)}, {\color{blue}R_{34}'(A_3,A_4)}$};
   \node[panda-fig:node] at(2.5,-6) (n1111) {$\color{clr-t123}T_{123}(A_1,A_2,A_3)$};
   \node[panda-fig:node] at(7.5,-6) (n1121) {${\color{DarkGreen}R_{23}(A_2,A_3)}, {\color{blue}R_{34}'(A_3,A_4)}$};
   \node[panda-fig:node] at(7.5,-8) (n11211) {$\color{clr-t234}T_{234}(A_2,A_3,A_4)$};
   
   \draw[panda-fig:edge,red] (n1)--(n11) node[midway,left] {\color{red}no OP};
   \draw[panda-fig:edge,blue] (n11)--(n111) node[midway,right=.3cm, align=center] {\color{blue}data\\partitioning};
   \draw[panda-fig:edge,blue] (n11)--(n112);
   \draw[panda-fig:edge,color={clr-t123}] (n111)--(n1111) node[midway, left] {\color{clr-t123}\large $\Join$};
   \draw[panda-fig:edge,blue] (n112)--(n1121) node[midway, right] {\color{blue}no OP};
   \draw[panda-fig:edge,color={clr-t234}] (n1121)--(n11211)node[midway, right] {\color{clr-t234}\large $\Join$};
   \node at (5,-9) {\Large Algorithmic steps};
\end{scope}
\end{tikzpicture}
}
\caption{Illustration of the proof sequence and the corresponding $\panda$ algorithm from Example~\ref{ex:intro:panda}.}
\Description{Illustration of the proof sequence and the corresponding $\panda$ algorithm from Example~\ref{ex:intro:panda}.}
\label{fig:panda1}
\end{figure}
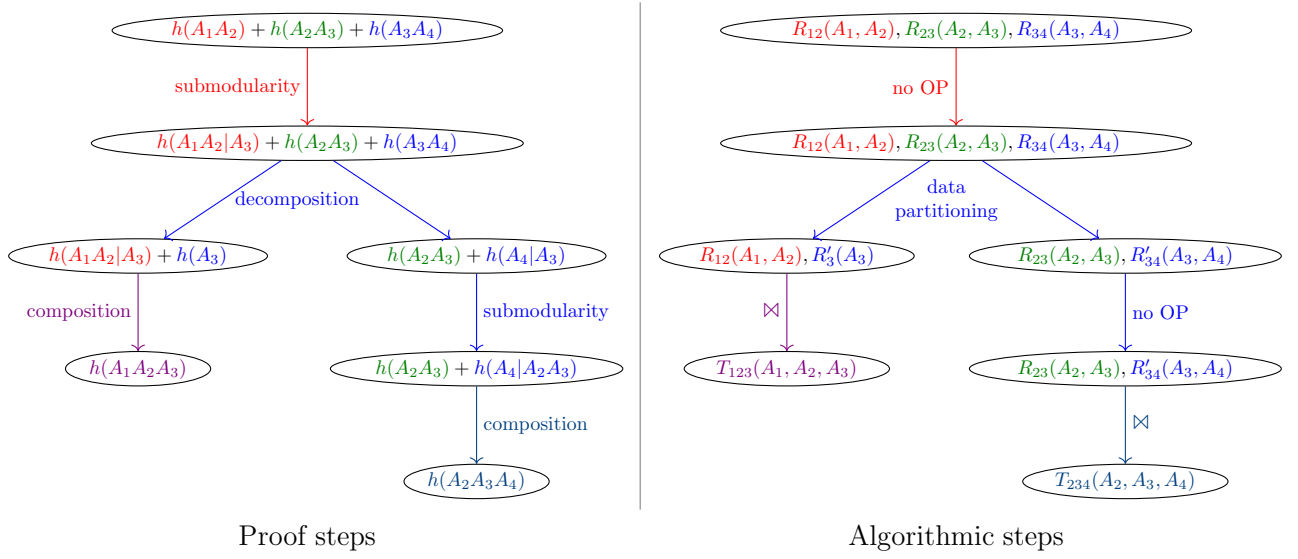

Example~\ref{ex:intro:panda} has the nice property that the two terms $h(A_4|A_3)$ and
$h(A_3)$ resulting from the decomposition step (2) {\em diverged},
i.e. were used in different targets.  This allowed $\panda$ to place
each tuple from $R_{34}$ in either $R_3'$ or $R_{34}'$: no need to
place in both, since these relations are not joined later.  However,
we could neither prove nor disprove the divergence property in
general. Instead, $\panda$ in general conservatively places each tuple in {\em
   both} relations, yet it must ensure
$|R'_3(A_3)| \cdot \Deg_{R'_{34}}(A_3A_4 | A_3) \leq |R_{34}|$.  For that
it creates $\log N$ bins, with tuples whose degree is in
$[2^i, 2^{i+1})$, for $i=0,\ldots,\lfloor \log N \rfloor$, and processes
each bin separately.  This needs to be repeated at each non-divergent
decomposition step, hence the additional $\poly(\log N)$ factor in the
runtime.

\subsection{Towards Optimal Algorithms for Conjunctive Queries}
\label{subsec:intro:algo-conjunctive}
What is an optimal runtime to compute a given conjunctive query?
A common belief is that its cost is of the form
$\tilde O(N^d + \outputsize)$, where $N$ is the size of the input
database, $N^d$ represents the ``intrinsic'' cost of the query,
and $\outputsize$ is the unavoidable cost of reporting the output.
Worst-case optimal algorithms are not optimal in this sense.
They are only good for inputs whose intrinsic cost is about the same as the
worst-case output size.
As described below, there are algorithms whose runtimes are more output-sensitive.
If the query is Boolean, then
the output size is always $1$, and the cost is totally dominated by
the intrinsic cost of the query; for simplicity we discuss here only
Boolean queries, but our discussion extends to other conjunctive and aggregate
queries~\cite{faq} (see Section~\ref{sec:conclusion}).
Thus, an optimal algorithm should compute a
Boolean query in time $\tilde O(N^d)$, with an exponent $d$ as small
as possible. Generalizing to degree constraints, it should
compute the query in time
$\tilde O(\prod N_{Y|X}^{\delta_{Y|X}})$, where
$N_{Y|X}$ are the degree bounds, and the product is minimized.

In search of a yardstick for optimality, we borrow from the long
history of research on fixed-parameter tractability.  To a recursively
enumerable class
$\calC$ of Boolean conjunctive queries (equivalently, hypergraphs of $\csp$
problems) we associate the following decision problem, denoted by
$\bcq(\calC)$: given a query $Q\in \calC$ and an instance $\mv D$,
check if the query is true on the instance.  $\bcq(\calC)$ is {\em
polynomial-time solvable} if there exists an algorithm that runs in
polynomial time in $|Q|, |\mv D|$ (combined
complexity~\cite{DBLP:conf/stoc/Vardi82}).  The problem is {\em
fixed-parameter tractable} ($\fpt$, with parameter $|Q|$, the
query's size) if there is an algorithm solving every
$\calC$-instance in time $f(|Q|) \cdot |\mv D|^d$ for some fixed
constant $d$, where $f$ is any computable function.
It was known very early on that if $\calC$ is the class of
bounded tree-width then $\bcq(\calC)$ is in
$\ptime$~\cite{DBLP:conf/aaai/Freuder90}, and this was extended to
query width~\cite{DBLP:journals/tcs/ChekuriR00}, then (generalized)
hypertree width~\cite{DBLP:journals/jcss/GottlobLS02}, and fractional
hypertree width~\cite{DBLP:journals/talg/GroheM14}: boundedness of any
of these parameters implies $\bcq(\calC)$ is in $\ptime$.

Grohe's now classic result~\cite{DBLP:conf/focs/Grohe03} states that,
assuming $\textsf{W}[1] \neq \fpt$,
if the arities of the relations are bounded, then the converse also holds:
$\bcq(\calC) \in \fpt$, $\bcq(\calC) \in \ptime$,
and queries in $\calC$ having bounded width are three equivalent statements, for
any width notion above. In a beautiful paper, Marx~\cite{MR3144912} extended
this result to unbounded arity case, showing that $\bcq(\calC)$ is $\fpt$ iff
every $Q \in\calC$ has bounded submodular width, denoted by $\subw(Q)$.
His results suggest to us using
the submodular width as a yardstick for optimality. In
order to prove that bounded submodular width implies $\fpt$-membership, Marx
described a query evaluation algorithm that runs in time
$O(\poly(N^{\subw(Q)}))$.\footnote{It is not clear what the exact
  runtime of Marx's algorithm is. His theorem states that it is
  $O(\poly(N^{\subw(Q)}))$. Our best interpretation of Lemma 4.3 and
  Lemma 4.5 of~\cite{MR3144912} is that Marx's algorithm runs in time
  at least $O(N^{2\cdot\subw(Q)})$.}  We define an algorithm to be
{\em optimal} if its runtime is $\tilde O(N^{\subw(Q)})$.  While no
lower bounds are known to date to rule out faster algorithms for a
specific query, Marx's dichotomy theorem ruled out faster algorithms
for any recursively enumerable class of queries (see~\cite{MR3144912}
and Section~\ref{sec:background}).  Our fourth problem is:

\bopm Design an algorithm evaluating a Boolean conjunctive query $Q$
in $\tilde O(N^{\subw(Q)})$-time. Extend the notion of submodular width, and
the algorithm, to handle arbitrary degree constraints, to arbitrary conjunctive
and aggregate queries.
\eopm

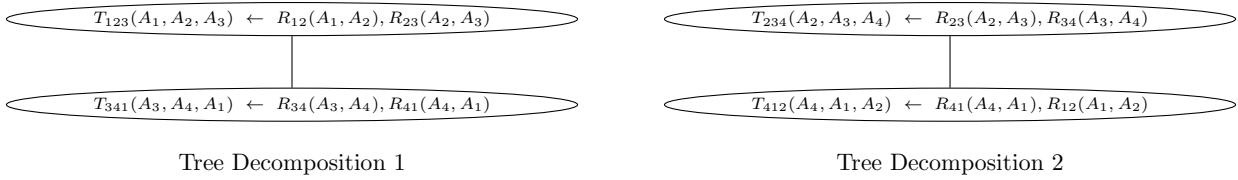
\begin{figure}
\begin{tikzpicture}
\begin{scope}[color=black]
\node[draw,ellipse] (123) {\scriptsize $T_{123}(A_1,A_2,A_3) \cd R_{12}(A_1, A_2), R_{23}(A_2, A_3)$};
\node[draw,ellipse, below = (.3in) and of 123] (134) {\scriptsize $T_{341}(A_3,A_4,A_1) \cd R_{34}(A_3, A_4), R_{41}(A_4, A_1)$};
\draw (123) -- (134) node (123134) {};
\node[below = (.3in) of 123134, anchor=center] {Tree Decomposition 1};
\end{scope}
\begin{scope}[color=black]
\node[draw,ellipse, right = .25in of 123] (124) {\scriptsize $T_{234}(A_2,A_3,A_4) \cd R_{23}(A_2, A_3), R_{34}(A_3, A_4)$};
\node[draw,ellipse, below = (.3in) and of 124] (234) {\scriptsize $T_{412}(A_4,A_1,A_2) \cd R_{41}(A_4, A_1), R_{12}(A_1, A_2)$};
\draw (124) -- (234) node (124234) {};
\node[below = (.3in) of 124234, anchor=center] {Tree Decomposition 2};
\end{scope}
\end{tikzpicture}
\caption{Two tree decompositions for the query in
  Example~\ref{ex:intro:1}.  Normally, each tree node is labeled with
  a set of variables, e.g. $\chi(t) = \set{A_1,A_2,A_3}$; for
  convenience we also show the atoms contained in those variables,
  i.e. $R_{12}(A_1,A_2),R_{23}(A_2,A_3)$, and also give a name to the
  associated full conjunctive query, e.g. $T_{123}(A_1,A_2,A_3)$.}
\Description{Two tree decompositions for the query in
Example~\ref{ex:intro:1}.}
  \label{fig:tree:decomposition}
\end{figure}

We briefly review the notion of submodular width, and its
relationship to other width parameters. Note that all known
width parameters considered only cardinality constraints.
A polymatroid $h$
is {\em edge-dominated} if $h(F) \leq 1, \forall F \in \calE$.
Edge domination is a normalized version of cardinality
constraints.
The submodular width is defined to be
$\subw(Q) \defeq \max_h \min_{(T, \chi)} \max_{t\in V(T)}
h(\chi(t)),$
where $h$ ranges over edge-dominated polymatroids, and $(T,\chi)$ over
tree decompositions of $Q$ (see Definition~\ref{defn:TD}).  Prior width
parameters such as tree-width~\cite{DBLP:conf/aaai/Freuder90},
generalized-~\cite{DBLP:journals/jcss/GottlobLS02} and fractional-
hypertree width~\cite{DBLP:journals/talg/GroheM14}
are defined by first defining the width of a tree decomposition, then choosing the
decomposition that minimizes this width
(see~\cite{DBLP:conf/pods/GottlobGLS16} for a nice survey).
Thus, there is always a {\em
best} tree decomposition $(T,\chi)$, and a query evaluation algorithm running
on that $(T,\chi)$; e.g., the fractional hypertree
width is $\fhtw(Q) \defeq \min_{(T, \chi)} \max_{t\in V(T)} \rho^*(\chi(t)),$
where $\rho^*$ is the fractional edge cover number of the set
$\chi(t)$.
In the submodular width, we are allowed to choose the tree decomposition $(T,
\chi)$ {\em after} we see the polymatroid $h$.
Marx showed that $\subw(Q) \leq \fhtw(Q)$, for all $Q$,
and there are classes of queries for which the gap is unbounded
(see also Example~\ref{ex:gap1}).

\noindent
\textbf{Answer 4.}  Our answer is presented in Section~\ref{sec:it}.
Briefly, we generalize the notions of fractional hypertree width $\fhtw(Q)$
and submodular width $\subw(Q)$ to account for arbitrary degree constraints, and
call them
{\em degree-aware fractional hypertree width / submodular width},
denoted by $\dafhtw(Q)$ and $\dasubw(Q)$, respectively.  In fact, we
describe a very general framework that captures virtually all previously
defined width-parameters, {\em and} extends them to degree constraints.
We then show how to use $\panda$ to compute a query in time whose
exponent is $\dasubw(Q)$ (which is bounded by $\dafhtw(Q)$), using the same
earlier principle: from a
proof of the bound of $\dasubw(Q)$, we derive an algorithm that
computes $Q$ in that bound.

\bthm $\panda$ computes any full or Boolean conjunctive query
$Q$ in time
$\tilde O(N + \poly(\log N)\cdot 2^{\dasubw(Q)}+\outputsize)$.
\label{thm:main:panda:submodular}
\ethm

We have chosen to present only results on the full and Boolean conjunctive query
cases in order to keep the paper more accessible and focused. Our results extend
straightforwardly to proper conjunctive queries and to aggregate queries
(in the sense of $\faq$-queries over one semiring~\cite{faq,AM00}) as well.
Section~\ref{sec:conclusion} briefly describes how these more general queries are
handled with $\panda$.

\begin{ex} \label{ex:intro:subw} Consider the Boolean variant of the
  4-cycle query $Q$ in Example~\ref{ex:intro:1}.  More precisely we
  ask the question: {\em does there exists a cycle of length 4?}; Alon,
  Yuster, and Zwick~\cite{AYZ97} described an algorithm solving this
  problem in time $O(N^{3/2})$, where $N$ is the number of edges.
  However, every traditional tree-decomposition-based evaluation
  algorithm takes time $N^2$.  Indeed, the query has only two
  non-trivial tree decompositions, shown in
  Figure~\ref{fig:tree:decomposition} and, for each tree, there exists a
  worst-case input on which the intermediate tables for that tree have
  size $N^2$. For example, given the instance
  $R_{12} = R_{34} = [N] \times [1]$,
  $R_{23}=R_{41} = [1] \times [N]$, both intermediate tables of the
  tree on the left have size $N^2$, and a similar worst-case instance
  exists for the tree on the right.  In fact, the fractional hypertree
  width of $Q$ is $\fhtw(Q) = 2$, because both trees have $\fhtw=2$.

  In contrast to the fractional hypertree width, the submodular width
  is adaptive: it chooses the tree based on $h$.  More precisely,
  $\subw(Q)$ is the maximum over edge-dominated polymatroids $h$ of
  the quantity
  \begin{equation}
      \min\bigl(\max(h(A_1A_2A_3),h(A_3A_4A_1)),\max(h(A_2A_3A_4),h(A_4A_1A_2))\bigr)
      \label{eq:4-cycle-subw}
  \end{equation}
  (where ``edge-dominated'' in this example means $h(A_1A_2), h(A_2A_3), h(A_3A_4)$, and $h(A_4A_1)$ are all $\leq 1$).
  Intuitively, $\max(h(A_1A_2A_3),h(A_3A_4A_1))$ is the complexity of the tree on
  the left, and the other $\max$ is the complexity of the tree on the right.
  $\panda$ starts by proving $\subw(Q) \leq 3/2$.  To do
  that, it applies the distributivity law of $\min$ over $\max$ on
  \eqref{eq:4-cycle-subw}:
  \begin{align}
\mbox{Eq.~\eqref{eq:4-cycle-subw}} =
     \max\bigl(  \nonumber
    & \min(h(A_1A_2A_3),h(A_2A_3A_4)),
      \min(h(A_1A_2A_3),h(A_4A_1A_2)), \nonumber \\
    & \min(h(A_3A_4A_1),h(A_2A_3A_4)),
      \min(h(A_3A_4A_1),h(A_4A_1A_2))
      \bigr) \label{eq:min:max:dist}
  \end{align}
then it proves one inequality for each term under $\max$:
\begin{eqnarray}
   \min(h(A_1A_2A_3),h(A_2A_3A_4)) &\leq & 1/2\bigl(h(A_1A_2)+h(A_2A_3)+h(A_3A_4)\bigr),\label{eq:4-cycle-subw-1}\\
   \min(h(A_1A_2A_3),h(A_4A_1A_2)) &\leq & 1/2\bigl(h(A_4A_1)+h(A_1A_2)+h(A_2A_3)\bigr),\label{eq:4-cycle-subw-2}\\
   \min(h(A_3A_4A_1),h(A_2A_3A_4)) &\leq & 1/2\bigl(h(A_2A_3)+h(A_3A_4)+h(A_4A_1)\bigr),\label{eq:4-cycle-subw-3}\\
   \min(h(A_3A_4A_1),h(A_4A_1A_2)) &\leq & 1/2\bigl(h(A_3A_4)+h(A_4A_1)+h(A_1A_2)\bigr).\label{eq:4-cycle-subw-4}
\end{eqnarray}
Example~\ref{ex:intro:2b} showed the first inequality, the other three
are similar.  Since $h$ is edge-dominated, every RHS is $\leq 3/2$,
implying that Eq.~\eqref{eq:4-cycle-subw} is $\leq 3/2$, which proves
the claim that $\subw(Q) \leq 3/2$.
%
$\panda$ computes the query in time $\tilde O(N^{\subw(Q)})$ as follows.
First, it interprets each inequality above as the output size bound of a
disjunctive datalog rule:
\begin{eqnarray*}
   P_1:\ \ \ & T_{123}(A_1,A_2,A_3)\vee T_{234}(A_2,A_3,A_4) \cd
   R_{12}(A_1,A_2),R_{23}(A_2,A_3),R_{34}(A_3,A_4).\\
   P_2:\ \ \ & T_{123}(A_1,A_2,A_3)\vee T_{412}(A_4,A_1,A_2) \cd
   R_{41}(A_4,A_1),R_{12}(A_1,A_2),R_{23}(A_2,A_3).\\
   P_3:\ \ \ & T_{341}(A_3,A_4,A_1)\vee T_{234}(A_2,A_3,A_4) \cd
   R_{23}(A_2,A_3),R_{34}(A_3,A_4),R_{41}(A_4,A_1).\\
   P_4:\ \ \ & T_{341}(A_3,A_4,A_1)\vee T_{412}(A_4,A_1,A_2) \cd
   R_{34}(A_3,A_4),R_{41}(A_4,A_1),R_{12}(A_1,A_2).
\end{eqnarray*}
Next, $\panda$ evaluates each rule using the algorithm mentioned
earlier in Section~\ref{subsec:intro:algo-disjunctive}; the first rule
above was shown in Example~\ref{ex:intro:2} and the corresponding algorithm with runtime $N^{3/2}$ was explained in Example~\ref{ex:intro:panda}.
Each one of the rules $P_1$ and $P_2$ computes a different table $T_{123}$, and the union of both is taken as a single intermediate table $T_{123}$.
Similarly, the four rules result in the intermediate tables $T_{341}, T_{234}, T_{412}$,
which -along with $T_{123}$-
correspond to all tree nodes in Figure~\ref{fig:tree:decomposition}.
The runtime so far is $N^{3/2}$.
Now, $\panda$ semi-join reduces each one of those intermediate tables with all input relations,
i.e.~it semi-join reduces $T_{123}(A_1,A_2,A_3)$ with $R_{12}(A_1,A_2)$ and with $R_{23}(A_2,A_3)$, and so on: This step is needed to remove spurious tuples.
Finally, $\panda$ computes separately each one of the two trees from Figure~\ref{fig:tree:decomposition} (viewed as an acyclic
Boolean query and using Yannakakis
algorithm~\cite{DBLP:conf/vldb/Yannakakis81}), in time $N^{3/2}$
(since each one of the intermediate tables has size at most $N^{3/2}$), then
returns the logical $\bigvee$ of the two results.

One way to get the intuition behind the four rules above is to combine them
into a single rule, where the head is the conjunction of the four
heads, then apply the (reverse) distributivity law to the head
\begin{multline}
  (T_{123} \vee T_{234}) \wedge (T_{123} \vee T_{412}) \wedge (T_{341} \vee T_{234}) \wedge (T_{341} \vee T_{412}) = \\
  \bigl(T_{123}(A_1,A_2,A_3) \wedge T_{341}(A_3, A_4, A_1)\bigr) \vee
  \bigl(T_{234}(A_2,A_3,A_4) \wedge T_{412}(A_4,A_1,A_2)\bigr)
  \cd\\
  R_{12}(A_1,A_2),R_{23}(A_2,A_3),R_{34}(A_3,A_4),R_{41}(A_4,A_1).
   \label{eqn:4:cycle:distributivity}
\end{multline}
This captures the intuition of the fact that $\panda$ places each four-cycle $(a_1,a_2,a_3,a_4)$ either in both intermediate tables of the first tree
$T_{123}(A_1,A_2,A_3) \wedge T_{341}(A_3, A_4, A_1)$, or in both intermediate tables of the second
tree $T_{234}(A_2,A_3,A_4) \wedge T_{412}(A_4,A_1,A_2)$.

As we explained, no single tree is sufficient to compute $Q$ in time
$O(N^{3/2})$.  Instead, each of the four rules $P_1,\dots,P_4$ directs
tuples to either one tree or the other. For example, on the worst-case
instance above (where $R_{12} = R_{34} = [N] \times [1]$,
$R_{23}=R_{41} = [1] \times [N]$), at most $N^{3/2}$ of the $N^2$ four-cycles are inserted by
$P_1$ into $T_{123}$ (in the left tree), the others are spilled over
to $T_{234}$ (in the right tree).  However, we prove that the two
trees {\em together} are sufficient to compute $Q$,
%
%
%
by showing that every four-cycle $\mv a = (a_1,a_2,a_3,a_4)$ is
inserted in both nodes of either tree, thus, by taking the $\bigvee$ of the two trees,
$\panda$ computes the query correctly.  Indeed, suppose
otherwise: Since $\mv a$ is missing from the left tree, assume
w.l.o.g. that $(a_1,a_2,a_3)$ is missing from node $T_{123}$; similarly,
since $\mv a$ is missing from the right tree, assume that $(a_2,a_3,a_4)$ is missing from
$T_{234}$.
%
But that implies that $T_{123} \vee T_{234}$ is not a model of the
disjunctive datalog rule $P_1$ above, which is a contradiction.  Thus,
each tuple $\mv a$ is fully included in some tree.
\end{ex}

\section{Background and Related Work}
\label{sec:background}

Throughout the paper, we use the following convention.  The
non-negative reals, rationals, and integers are denoted by
$\R_+,\Q_+$, and $\N$ respectively. For a positive integer $n$, $[n]$ denotes
the set $\{1,\dots,n\}$.

The function $\log$ without a base specified is
base-$2$, i.e. $\log = \log_2$. Uppercase $A_i$ denotes a
variable/attribute, and lowercase $a_i$ denotes a value in the
discrete domain $\Dom(A_i)$ of the variable.  For any subset
$S\subseteq [n]$, define $\mv A_S = (A_i)_{i\in S}$,
$\mv a_S = (a_i)_{i\in S} \in \prod_{i\in S}\Dom(A_i)$.  In
particular, $\mv A_S$ is a tuple of variables and $\mv a_S$ is a tuple
of specific values with support $S$.  Occasionally we use $\mv t_S$ to
denote a tuple with support $S$. For any two finite sets $S$ and $T$,
let $S^T$ denote the collection of all maps $f : T \to S$. Such a map
$f$ is also viewed as a vector whose coordinates are indexed by
members of $T$ and whose coordinate values are members of
$S$.\footnote{This is standard combinatorics notation. There are
  $|S|^{|T|}$ of those maps.}

\bdefn
Let $n$ be a positive integer.
A function $f : 2^{\calV} \to \R_+$ is called a (non-negative)
{\em set function} on $\calV$.
A set function $f$ on $\calV$ is {\em modular} if
$f(S) = \sum_{v\in S} f(\{v\})$ for all $S\subseteq \calV$,
is {\em monotone} if $f(X) \leq f(Y)$ whenever $X \subseteq Y$,
is {\em subadditive} if $f(X\cup Y)\leq f(X)+f(Y)$
for all $X,Y\subseteq \calV$,
and is {\em submodular} if $f(X\cup Y)+f(X\cap Y)\leq f(X)+f(Y)$
for all $X,Y\subseteq \calV$.
A function $h : 2^{\mv A_{[n]}} \to \R_+$ is said to be {\em entropic} if there
is a joint distribution on $\mv A_{[n]}$ such that $h(\mv A_S)$ is the marginal
entropy on $\mv A_S$, $S\subseteq [n]$. We also write $h(S)$ for $h(\mv A_S)$,
and thus entropic functions are also set functions.
\edefn

Unless specified otherwise,
we will only consider {\em non-negative} and {\em monotone} set functions
$f$ for which $f(\emptyset) = 0$; this assumption will be implicit in the
entire paper.

\bdefn
\label{def:set-functions}
Let $\Mod_n$, $\sa_n$, and $\Gamma_n$ denote the set of all (non-negative and
monotone) modular, subadditive, and submodular set functions on
$\calV$, respectively.
Let $\Gamma^*_n$ denote the set of all entropic functions on $n$ variables, and
$\overline\Gamma^*_n$ denote its topological closure.\footnote{Note that any non-negative modular function is
monotone.  The notations $\Gamma_n, \Gamma^*_n,\overline \Gamma^*_n$ are
standard in information theory~\cite{Yeung:2008:ITN:1457455}.}
\edefn

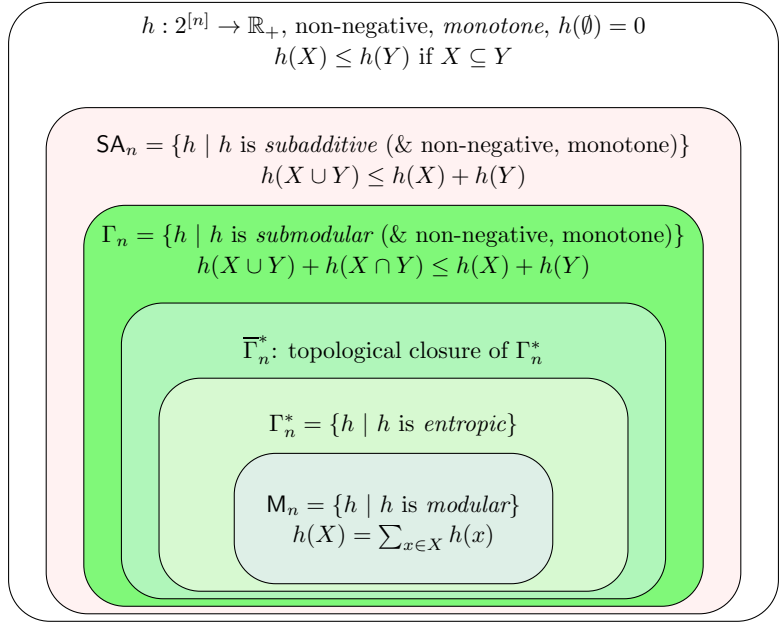
\begin{figure*}[t!]
\centering    \begin{tikzpicture}
      \node[fill opacity=0.5, rounded corners=.55cm, draw, fit={(-1,0) (9,8)}] (all) { };
      \node[] at (4, 7.6) {\shortstack{$h : 2^{[n]} \to \R_+$, non-negative,
      {\em monotone}, $h(\emptyset) =0$\\ $h(X)\leq h(Y)$ if $X \subseteq Y$}};
      \node[fill opacity=0.5, rounded corners=.55cm, draw, fit={(-.5,0.1)
      (8.5,6.6)}, fill=red!10] (sa) { };
      \node[] at (4, 6.0) {\shortstack{ $\sa_n = \{ h \suchthat h \text{ is
      {\em subadditive} (\& non-negative, monotone)} \}$ \\ $h(X\cup Y) \leq
      h(X)+h(Y)$}};
      \node[fill opacity=0.5, rounded corners=.55cm, draw, fit={(0,0.2)
      (8,5.3)}, fill=green] (gamma-n) { };
      \node[] at (4, 4.8) {\shortstack{$\Gamma_n = \{ h
      \suchthat h \text{ is {\em submodular} (\& non-negative, monotone)} \}$\\ $h(X\cup Y)
      +h(X\cap Y) \leq h(X)+h(Y)$}};
      \node[fill opacity=0.5, rounded corners=.55cm, draw, fit={(0.5,0.3)
      (7.5,4)}, fill=cyan!10] (bar-gamma-star) { };
      \node[] at (4, 3.5) {$\overline\Gamma^*_n$: topological closure of $\Gamma^*_n$};
      \node[fill opacity=0.5, rounded corners=.55cm, draw, fit={(1,0.4) (7,3)},
      fill=yellow!10] (gamma-star) { };
      \node[] at (4, 2.5) {$\Gamma^*_n = \{ h \suchthat h \text{ is {\em entropic}}\}$};
      \node[fill opacity=0.5, rounded corners=.55cm, draw, fit={(2,0.5) (6,2)},
      fill=blue!10] (mn) { };
      \node[] at (4, 1.2) {\shortstack{$\Mod_n = \{ h \suchthat h \text{ is {\em modular}}\}$ \\ $h(X) = \sum_{x\in
      X}h(x)$}};
   \end{tikzpicture}
   \caption{Hierarchy of set functions}
   \Description{Hierarchy of set functions}
\label{fig:set:functions}
\end{figure*}
The following chain of inclusion is either
known~\cite{Yeung:2008:ITN:1457455}, or straightforward to show (see also
Figure~\ref{fig:set:functions}):
\begin{prop}
For any positive integer $n$, we have
\begin{equation}
   \Mod_n \subseteq \Gamma^*_n \subseteq \overline \Gamma^*_n
   \subseteq \Gamma_n \subseteq \sa_n.
   \label{eq:chain:M-SA-}
\end{equation}
Furthermore, when $n \geq 4$, all inclusions are strict:
$\Mod_n \subsetneq \Gamma^*_n \subsetneq \overline \Gamma^*_n
\subsetneq \Gamma_n \subsetneq \sa_n$.
\end{prop}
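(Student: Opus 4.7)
The plan is to verify the four inclusions in order, and then, for $n \ge 4$, to exhibit a witness for each of the four strict separations.

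For $\Mod_n \subseteq \Gamma^*_n$, given a modular $h$ with $h(\{i\}) = c_i \ge 0$, I would realize it as the joint entropy of mutually independent random variables $A_1,\dots,A_n$ whose individual marginal entropies are $c_i$; any non-negative real is the entropy of some discrete distribution, and independence yields $h(S) = \sum_{i \in S} c_i$ for every $S$. The inclusion $\Gamma^*_n \subseteq \overline\Gamma^*_n$ holds by definition of topological closure. For $\overline\Gamma^*_n \subseteq \Gamma_n$, observe that non-negativity, monotonicity, and submodularity are all Shannon-type inequalities, i.e., finite systems of linear weak inequalities valid on $\Gamma^*_n$; such inequalities are preserved under topological closure, so every limit of entropic functions is a polymatroid. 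Finally, $\Gamma_n \subseteq \sa_n$ is a one-line chain: $h(X \cup Y) \le h(X \cup Y) + h(X \cap Y) \le h(X) + h(Y)$, using non-negativity and submodularity.

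For strictness when $n \ge 4$, I would give four separate witnesses. For $\Mod_n \subsetneq \Gamma^*_n$, take two perfectly correlated non-trivial variables, e.g.\ $A_1 = A_2$ uniform on $\{0,1\}$ and the remaining $A_i$ constant; then $h(\{1,2\}) = 1 < 2 = h(\{1\}) + h(\{2\})$, so $h$ is not modular. For $\Gamma^*_n \subsetneq \overline\Gamma^*_n$, I would quote the theorem of Matú\v{s}~\cite{matus2007infinitely} that $\Gamma^*_n$ is not topologically closed once $n \ge 4$. For $\overline\Gamma^*_n \subsetneq \Gamma_n$, I would invoke the Zhang--Yeung non-Shannon-type inequality~\cite{DBLP:journals/tit/ZhangY98}: it is a linear inequality valid on all of $\Gamma^*_n$, hence on $\overline\Gamma^*_n$, yet one can exhibit a polymatroid violating it. For $\Gamma_n \subsetneq \sa_n$, I would construct a symmetric function that depends on four of the $n$ variables, say $f(S) = g(|S \cap \{1,2,3,4\}|)$ with $g(0)=0,\ g(1)=2,\ g(2)=3,\ g(3)=5,\ g(4)=6$; monotonicity is immediate, subadditivity reduces to case-checking $g(a+b) \le g(a)+g(b)$ for $a+b \le 4$, while submodularity fails at $X=\{1,2\},\ Y=\{2,3\}$ since $g(3) + g(1) = 7 > 6 = g(2)+g(2)$.

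The main obstacle is the two deep information-theoretic separations $\Gamma^*_n \subsetneq \overline\Gamma^*_n$ and $\overline\Gamma^*_n \subsetneq \Gamma_n$; my plan is to cite Matú\v{s} and Zhang--Yeung for these rather than reprove them. Everything else in the chain is elementary and can be verified directly from the definitions.
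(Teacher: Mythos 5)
Your proposal is correct. The paper itself does not prove this proposition --- it states that the chain is ``either known or straightforward to show'' and defers to \cite{Yeung:2008:ITN:1457455} --- so you are supplying an argument where the paper supplies none. All four inclusions are verified exactly as one would expect, and your four separating witnesses are the standard ones: the correlated-pair example for $\Mod_n\subsetneq\Gamma^*_n$, non-closedness of $\Gamma^*_n$ for the second gap, the Zhang--Yeung inequality for the third, and an explicit monotone subadditive non-submodular function for the fourth. I checked your last witness: with $g(0,\dots,4)=(0,2,3,5,6)$ one indeed has $g(\min(4,a+b))\leq g(a)+g(b)$ for all $a,b$ (the tight cases are $g(3)=5=g(1)+g(2)$ and $g(4)=6=g(2)+g(2)$), while $g(3)+g(1)=7>6=g(2)+g(2)$ kills submodularity at $X=\{1,2\}$, $Y=\{2,3\}$. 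Your third witness is also consistent with what the paper does later: Section~\ref{subsec:glvv:bound:is:not:tight} exhibits precisely such a polymatroid (Fig.~\ref{fig:yeung}, restricted to $A,B,X,Y$) violating inequality~\eqref{eqn:zy:non:shannon:h}. Two minor remarks on attribution and sharpness, neither of which is a gap: the non-closedness of $\Gamma^*_n$ is originally due to Zhang and Yeung (for $n=3$ already), whereas \cite{matus2007infinitely} is the ``infinitely many non-Shannon inequalities'' result, so citing the former is more apt for that step; and your first, second, and fourth separations actually hold for smaller $n$ (respectively $n\geq 2$, $n\geq 3$, $n\geq 3$), with only the Zhang--Yeung separation genuinely requiring $n\geq 4$ --- the proposition's uniform threshold of $n\geq 4$ is just the point at which all four become strict simultaneously.
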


Throughout the paper, we will work on multi-hypergraphs $\calH=(\calV,\calE)$
(i.e. a hyperedge may occur multiple times in $\calE$).

\bdefn
\label{def:ed}
Given a hypergraph $\calH=(\calV, \calE)$, define the following two set functions:
\begin{eqnarray}
\ed \defeq& \{ h \suchthat h : 2^{\calV} \to \R_+, h(F) \leq 1,
\forall F \in \calE\}\label{eqn:ed}\\
\vd \defeq& \{ h \suchthat h : 2^{\calV} \to \R_+, h(\{v\}) \leq 1,
\forall v \in \calV\}\label{eqn:vd}
\end{eqnarray}
$\ed$ stands for {\em edge-dominated} and $\vd$ stands for {\em
vertex-dominated}.
\edefn

Given a set function $h$ and a scalar $s$, we will use $s\cdot h$ to denote $h$
scaled by $s$. If $\calC$ is a class of set functions, then
$s \cdot \calC \defeq \{s\cdot h \suchthat h \in \calC\}$. Specifically, we will be
interested in $\log N\cdot \ed$ and $\log N \cdot \vd$.

To each hypergraph $\calH$ we associate a {\em full conjunctive
query}, or {\em natural join query},
$Q(\mv A_{[n]}) \leftarrow \bigwedge_{F\in \calE} R_F(\mv A_F)$, as discussed
in Section~\ref{sec:the:problems}, where for each hyperedge
$F\in \calE$ there is an input relation $R_F$ with attributes
$\mv A_F$; the set $F$ is called the {\em support} of relation $R_F$.
The hypergraph is a multi-hypergraph because for the same hyperedge
$F$ there can be multiple relations $R_F$ with the same support $F$.
Similarly, we associate the {\em Boolean query},
$Q() \leftarrow \exists \mv A_{[n]} \bigwedge_{F\in \calE} R_F(\mv A_F)$, and
drop the existential quantifiers, writing
$Q() \leftarrow \bigwedge_{F\in \calE} R_F(\mv A_F)$.

For each $F\in \calE$, let $N_F \defeq |R_F|$, where
$N_F \in \N \cup \{\infty\}$. We set $N_F=\infty$ if $R_F$ is not a
materialized relation, a negation of a relation, or if its
size is not known.
Throughout this paper, let
\begin{equation}
   N = \max_{F\in\calE, |R_F|<\infty} N_F.
   \label{eqn:N}
\end{equation}

\subsection{Bounds for queries with cardinality constraints}
\label{subsec:no-FD-HDC}

\subsubsection{Bounds on the worst-case output size}
In the case of a full conjunctive query, the runtime is as least at
large as the size of its answer, hence there is interest in finding
upper bounds on the query answers.  Bounding the worst-case output
size $|Q|$ of a natural join query $Q$ is a well-studied problem.
There is a hierarchy of such bounds: the vertex bound, integral edge
cover bound, and the fractional edge cover bound (also called the
$\AGM$-bound).

\begin{align}
   |Q| & \leq \vertexbound(Q) \defeq N^{n} & \text{Vertex bound} \\
   |Q| & \leq 2^{\rho(Q, (N_F)_{F\in \calE})}
   & \text{Integral edge cover bound} \label{eqn:integral:ecb}\\
   |Q| & \leq \AGM(Q, (N_F)_{F\in\calE}) \defeq 2^{\rho^*(Q, (N_F)_{F\in \calE})}
   & \text{Fractional edge cover bound } \label{eqn:agm:bound}\\
   && \text{(also called {\em AGM bound})}\nonumber
\end{align}

The vertex bound is trivial. The quantities
$\rho(Q, (N_F)_{F\in\calE})$ and
$\rho^*(Q, (N_F)_{F\in\calE})$ are defined via the edge cover polytope:
\begin{align}
   \ec &\defeq \left\{ \vec\lambda \suchthat
\vec\lambda\in \R_+^\calE,
\sum_{\substack{F\in\calE\\ v \in F}}\lambda_F \geq 1, \forall v \in \calV \right\}
   & \text{({\em edge-cover polytope})}.
\end{align}
A vector $\vec\lambda \in \ec$ is called a {\em fractional edge cover}
of the hypergraph, while a vector $\vec\lambda \in \ec \cap \{0, 1\}^\calE$
is called an {\em integral edge cover}, or just an {\em edge cover} of
$\calH$. Then,
\begin{align}
\rho(Q, (N_F)_{F\in\calE})
   &\defeq \min \setof{\log \left(\prod_{F\in\calE}N_F^{\lambda_F}
     \right)}{\vec\lambda \in \ec\cap \{0, 1\}^\calE} \label{eqn:rho:Q:NF}\\
\rho^*(Q, (N_F)_{F\in\calE})
   &\defeq \min \setof{\log \left(\prod_{F\in\calE}N_F^{\lambda_F}
     \right)}{\vec\lambda \in \ec}
\end{align}

The edge cover bounds are dependent on the input relations' sizes.
Often, to state a bound that is independent of the input size,
researchers use cruder approximations of the bound:
\begin{align}
   \rho(Q) &\defeq \frac{1}{\log N} \cdot \rho(Q, (N_F=N)_{F\in\calE}) & \text{the {\em integral edge cover
   number} of } \calH,\\
   \rho^*(Q) &\defeq\frac{1}{\log N} \cdot  \rho^*(Q, (N_F=N)_{F\in\calE}) & \text{the {\em fractional edge
   cover number} of } \calH,\label{eq:defn:rho-star-Q}
\end{align}

(assuming $N>1$.) In particular, note that $2^{\rho(Q, (N_F=N)_{F\in\calE})} = N^{\rho(Q)}$, and
similarly for the $\rho^*$ case.
The AGM-bound by Atserias, Grohe, and Marx~\cite{AGM}
built on earlier works \cite{MR859293,MR1639767,MR599482,GM06}.
One remarkable property of the $\agm$-bound is that it is asymptotically tight.
In addition, there are known algorithms~\cite{NPRR,skew,LFTJ,tetris} with runtime $\tilde
O(2^{\rho^*(Q, (N_F)_{F\in\calE})})$: they are {\em worst-case optimal}.
(Recall that in this paper, the big-$O$ notation is in data-complexity, hiding a factor
that is query-dependent and data-independent.
The big-$\tilde O$ additionally hides a single $\log$-factor in data-complexity.)

\subsubsection{Islands of tractability for Boolean conjunctive queries (and $\csp$ problems)}
We have briefly discussed various width parameters of a Boolean conjunctive query
in Section~\ref{subsec:intro:algo-conjunctive}. In particular, bounded widths often imply
tractability. For unbounded-arity inputs,
Marx~\cite{MR3144912} showed that the submodular width is the best one can hope
for in terms of being in $\fpt$,
unless the exponential time hypothesis fails.
Figure~\ref{fig:hierarchy} explains
the important classes of widths known thus far.\footnote{Redrawn from Marx's
slides at \url{http://www.cs.bme.hu/~dmarx/papers/marx-weizmann-hypergraph.pdf}}
The hierarchy is strict in the sense that there are (infinite) classes of
queries which have bounded submodular width but unbounded fractional hypertree
width, bounded fractional hypertree width but unbounded (generalized)
hypertree width, and so forth.
\begin{figure*}[t!]
\centering 
\begin{tikzpicture}[scale = .9]
\def\nextAngle{0}
\tikzset{
    next angle/.style={
        in=#1+180,
        out=\nextAngle,
        prefix after command= {\pgfextra{\def\nextAngle{#1}}}
    },
    start angle/.style={
        out=#1,
        nangle=#1,
    },
    nangle/.code={
        \def\nextAngle{#1}
    }
}

   \draw [fill=red!10,thick] (0, 0) -- (10, 0) -- (10, 7) -- (0, 7) -- (0, 0);
   \draw [fill=yellow!10,thick] (1.5,0) to[start angle=110,next angle=75] (1,4) to[next angle=0] (5,6)
   to[next angle=-75] (9,4) to[next angle=-110] (8.5,0);
   \draw [fill=cyan!10,thick] (2.5,0) to[start angle=110,next angle=75] (2,3.5) to[next angle=0] (5,5)
   to[next angle=-75] (8,3.5) to[next angle=-110] (7.5,0);
   \draw [thick] (3.5,0) to[start angle=180-55,next angle=65] (2.5,3) 
								 to[next angle=-55] (5,3)
   							 to[next angle=-70] (6.5,0);
   \draw [thick,color=orange!40] (6.5,0) to[start angle=55,next angle=180-65] (7.5,3) 
								 to[next angle=180+55] (5,3)
   							 to[next angle=180+70] (3.5,0);
   \draw [thick,color=green!50] (4.2,0) to[start angle=180-55,next angle=65] (3.5,2) 
								 to[next angle=-55] (5,2)
   							 to[next angle=-70] (5.8,0);
	\node [] at (4.8,1.2) {\tiny Bounded};
	\node [] at (4.8,0.8) {\tiny Treewidth};
	\node [] at (3.5,3.2) {\tiny Bounded};
	\node [] at (3.5,2.9) {\tiny (generalized)};
	\node [] at (3.5,2.6) {\tiny Hypertree Width};
	\node [] at (6.5,3.2) {\tiny Bounded};
	\node [] at (6.5,2.9) {\tiny fractional};
	\node [] at (6.5,2.6) {\tiny edge cover number};
	\node [] at (5.5,4.6) {\tiny Bounded fractional};
	\node [] at (5.5,4.2) {\tiny hypertree width};
	\node [] at (5,5.65) {\tiny Bounded};
	\node [] at (5,5.25) {\tiny submodular width};

	\draw[thick,fill=white] (2.5, 3.7) rectangle (4, 4.3) node[pos=.5,color=green!60!black] {\sf PTIME};
	\draw[thick,fill=white] (2.0, 4.8) rectangle (3.2, 5.2) node[pos=.5,color=orange!60!black] {\sf FPT};
	\draw[thick,fill=white] (0.5, 6) rectangle (2.3, 6.8) node[pos=.5,color=red] {\sf not FPT};
\end{tikzpicture}
\caption{Islands of tractability for conjunctive queries and constraint satisfaction problems.}
\Description{Islands of tractability for conjunctive queries and constraint satisfaction problems.}
\label{fig:hierarchy}
\end{figure*}

\subsubsection{Tree decompositions and their widths}
This section defines more precisely commonly used width parameters, which are
parameters of the tree decompositions of the input query.
Tree decompositions capture conditional independence among variables in a query,
facilitating
dynamic-programming.  We refer the reader to the recent survey by
Gottlob et al.~\cite{DBLP:conf/pods/GottlobGLS16} for more details on
historical contexts, technical descriptions, and open problems thereof.

\bdefn
\label{defn:TD}
A {\em tree decomposition} of a hypergraph $\calH=(\calV,\calE)$
is a pair
$(T,\chi)$, where $T$ is a tree and $\chi: V(T) \to 2^{\calV}$ maps each node
$t$ of the tree to a subset $\chi(t)$ of vertices such that
(1) Every hyperedge $F\in \calE$ is a subset of some $\chi(t)$, $t\in V(T)$,
(2) For every vertex $v \in \calV$,
    the set $\{t \suchthat v \in \chi(t)\}$ is a non-empty
    (connected) sub-tree of $T$.
Somewhat confusingly, the sets $\chi(t)$ are often called the {\em bags}
of the tree decomposition.
\edefn
A compact method of defining width parameters is the
framework introduced by
Adler \cite{adler:dissertation}.

\bdefn
Let $\calH=(\calV,\calE)$ be a hypergraph, and
$g : 2^\calV \to \mathbb R_+$ be a set function.
The {\em $g$-width} of a tree decomposition $(T, \chi)$ is
$\max_{t\in V(T)} g(\chi(t))$.
The {\em $g$-width of $\calH$} is the {\em minimum} $g$-width
over all tree decompositions of $\calH$.
Note that the $g$-width of a hypergraph is a {\em minimax} function.
\edefn

\bdefn
For any subset of vertices $B\subseteq \calV$, denote by
$\calH_B = (B, \setof{F\cap B}{F \in \calE})$ the hypergraph
restricted to $B$.  Define $s(B) = |B|-1$, $\rho(B) = \rho(\calH_B)$
the integral edge cover number of $\calH_B$, and
$\rho^*(B) = \rho^*(\calH_B)$, its fractional edge cover number. Then,
the {\em treewidth} of $\calH$, denoted by $\tw(\calH)$, is the
$s$-width of $\calH$.  The {\em generalized hypertree width} of
$\calH$, denoted by $\ghtw(\calH)$ is the $\rho$-width of $\calH$.
And, the {\em fractional hypertree width} of $\calH$, denoted by
$\fhtw(\calH)$, is the $\rho^*$-width of $\calH$.
\label{defn:traditional:widths}
\edefn

Very recently, Fischl et al.~\cite{2016arXiv161101090F} showed that, checking whether
a given hypergraph has a fractional hypertree width or a generalized
hypertree width at most $2$ is $\np$-hard, settling two important open
questions.

The common approach to compute a Boolean query $Q$
using the tree decomposition is to compute the full conjunctive query
associated to the hypergraph $\calH_{\chi(t)}$ at each tree node $t$,
then run Yannakakis' algorithm~\cite{DBLP:conf/vldb/Yannakakis81} on
the acyclic conjunctive query consisting of their results.
It is known~\cite{faq,
DBLP:journals/pvldb/BakibayevKOZ13,
AM00,
DBLP:journals/ai/Dechter99,
DBLP:journals/ai/KohlasW08}
that a vast number of problems in graphical model inference,
database query computation, constraint satisfaction,
and logic can be solved using this strategy.
However, this tree-decomposition-first strategy
has a drawback that once we stick with a tree decomposition we are forced to suffer
the worst-case instance for that tree decomposition.
Marx~\cite{MR3144912,DBLP:journals/mst/Marx11} had a wonderful
observation: if we partition the data first, and then use a different
tree decomposition for each part of the data, then we can in some cases
significantly improve the runtime.
This idea leads to the notions of {\em adaptive width} and {\em submodular width}
of a query, where in essence data partitioning and query decomposition are used interleavingly.

\bdefn
The {\em adaptive width} $\adw(\calH)$ of $\calH$ and
the {\em submodular width} $\subw(\calH)$ of $\calH$ are defined by
\begin{eqnarray}
   \adw(\calH) &\defeq& \max_{h \in \ed \cap \Mod_n} \min_{(T,\chi)} \max_{t \in V(T)} h(\chi(t)),\label{eqn:adw}\\
   \subw(\calH) &\defeq& \max_{h \in \ed \cap \Gamma_n} \min_{(T,\chi)} \max_{t \in V(T)} h(\chi(t))\label{eqn:subw},
\end{eqnarray}
where  $\ed$ is the set of  edge-dominated functions (Definition~\ref{def:ed}).
\edefn

Marx~\cite{MR3144912} proved that
$\adw(\calH) \leq \subw(\calH)\leq\fhtw(\calH)$, and that
$\subw(\calH) = O(\adw^4(\calH))$, for any $\calH$.
There are classes of queries for which $\subw(\calH)$ is bounded
by a constant while $\fhtw(\calH)$ grows with the query size
(see also Example~\ref{ex:gap1}).
Marx also designed an algorithm showing that Boolean conjunctive queries can be
solved in data-complexity time $O(\poly(N^{\subw(\calH)}))$.

The definitions of the traditional width parameters and the adaptive and
submodular width parameters are somewhat disconnected from one another.
In Section~\ref{sec:it}, we will explain how all these width parameters can
be unified using the same information-theoretic framework which can deal with
much more general constraints on the input queries. We will also present
a simple example showing the unbounded gap between $\subw(\calH)$ and $\fhtw(\calH)$
for some class of graphs.

A tree decomposition is {\em non-redundant} if no bag is a
subset of another.
A tree decomposition $(T_1,\chi_1)$ is {\em dominated} by another tree
decomposition $(T_2,\chi_2)$ if every bag of $(T_1,\chi_1)$ is a subset of some
bag of $(T_2,\chi_2)$.
Let $\td(\calH)$ denote the set of all non-redundant tree
decompositions of $\calH$ such that no tree decomposition in $\td(\calH)$ is dominated
by another.
We will also use $\td(Q)$ to denote $\td(\calH)$ where $\calH$ is $Q$'s
hypergraph. In many cases, we drop the qualifier $Q$ or $\calH$ for brevity.
We shall use the following crude estimate in Section~\ref{sec:it}.

\bprop\label{prop:td:properties}(Bounding the number of tree decompositions and the size of each tree decomposition)
Given a hypergraph with $n$ vertices, $|\td|\leq n!$ and
the number of bags of a non-redundant tree decomposition
is at most $n$.
In particular, $|\td| = O(1)$ in data complexity.
\eprop
\bp
Every non-dominated non-redundant tree decomposition can be constructed from a
variable ordering~\cite{faq}, and there are $n!$ variable orderings.
When we run GYO-elimination on such a tree decomposition
with an arbitrary root-bag, every bag contains a variable that does not belong
to the parent bag.
\ep

The above bound is certainly an over-estimate even for graphs with an
exponential number of minimal non-redundant tree decompositions.
For example, if the graph is an
$n$-cycle, then every minimal non-redundant tree decomposition corresponds
precisely to a triangularization of an $n$-gon, whose number is the $(n-1)$th Catalan
number $\frac{1}{n-1}\binom{2n-1}{n-1} = o(4^n)$.

\subsection{Bounds for queries with FD  and degree constraints}
\label{subsec:FD-HDC}

The series of bounds and width parameters from Section~\ref{subsec:no-FD-HDC} were based on a single
class of statistics on the input relations: their sizes. In practice
we very often encounter queries with degree bounds, or functional
dependencies (FD) which can be seen as special cases of degree bounds
equal to 1.  The FDs come from two main sources: primary keys and
builtin predicates (such as $A_1 + A_2 = A_3$).  The degree
constraints come from more refined statistics of the input
(materialized) relations, or from some user-defined predicates, for
example the relation $\text{\sf edit-distance}(A_1,A_2) \leq 2$ has a bounded degree
that depends on the maximum length of the strings $A_1, A_2$.

\bdefn[Degree constraints and their guards]
A {\em degree constraint} is a triple $(X, Y, N_{Y|X})$ where
$X \subset Y \subseteq \calV$ and $N_{Y|X} \in \N \cup \{\infty\}$.
A relation/predicate $R_F$ is said to {\em guard} the degree constraint $(X, Y, N_{Y|X})$
if $X\subset Y \subseteq F$ and for every tuple $\mv t_X$ we have
\begin{equation}\label{eqn:degree}
   \deg_{R_F}(Y | \mv t_X) \defeq
   |\Pi_Y(\sigma_{\mv A_X = \mv t_X}(R_F))|
  \leq N_{Y|X}.
\end{equation}
The quantity on the left-hand side is called the {\em degree of $\mv t_X$}
with respect to $Y$ in relation $R_F$.
Note that a relation may guard multiple degree constraints.

To avoid writing $\log_2$ in many places,
define $n_{Y|X} \defeq \log_2 N_{Y|X}$.
We use $\dc$ to denote a set of degree constraints.
A {\em cardinality constraint} is a degree constraint of the form $(\emptyset, F, N_F)$.
We use $\cc$ to denote a set of cardinality constraints.
An FD $X\to Y$ is a degree constraint of the form $(X, Y, 1)$.
Thus, degree constraints strictly generalize both cardinality constraints and
FDs.
Similar to $\hdc$ defined by~\eqref{eqn:hdc}, we use $\hcc$ (and $\hfd$) to denote
the collections of set functions $h$ satisfying cardinality constraints $\cc$ (and FDs respectively).
\label{defn:guard}
\edefn

As an example on guards of degree constraints, consider an input relation $R(A_1,A_2,A_3)$ satisfying the
following conditions: for every value $a_1$ in the active domain of $A_1$,
there are at most $D$ different values $a_2 \in \Dom(A_2)$ such that
$(a_1,a_2)\in \Pi_{A_1A_2}(R)$. Then, $R$ guards the degree constraint
$(\{A_1\},\{A_1,A_2\}, D)$.

As explained earlier, and as shown in~\cite{csma},
the output size of $Q(\mv A_{\calV})$ can be bounded by
\begin{equation}
   \log_2 |Q| \leq
   \underbrace{\max_{h \in \hdc \cap \bar \Gamma^*_n}h(\calV)}_{\daentropic(Q)}
\leq
   \underbrace{\max_{h \in \hdc \cap \Gamma_n}h(\calV)}_{\dapolymatroid(Q)}
   \label{eqn:daeb:dbpb}
\end{equation}
$\daentropic$ and $\dapolymatroid$ stand for ``degree-aware'' entropic and
polymatroid bounds, respectively.
Note that \eqref{eqn:daeb:dbpb} is a special case of \eqref{eqn:intro:entropic:disjunctive:2} and \eqref{eqn:intro:polymatroid:disjunctive:2}.
The $\csma$ algorithm from~\cite{csma} can solve a Boolean query
$Q$ with known degree constraints
in time $\tilde O(N+\poly(\log N)\cdot 2^{\dapolymatroid(Q)})$.
We shall use the quantities $\daentropic(Q)$ and $\dapolymatroid(Q)$ in
Figure~\ref{fig:diagram}.

\section{Size bounds for full conjunctive queries}
\label{sec:size:bound:fcq}

\subsection{Basic observations}
\label{subsec:basic:bounds}

We revisit known output size bounds for conjunctive queries without FDs nor
degree constraints, and reformulate them from the perspective of the information
theoretic framework.
We first prove a simple lemma, which shows that, sometimes we can
``modularize'' an optimal submodular function under cardinality constraints.
Recall that $\cc$ denotes a set of cardinality constraints and $\hcc$ denotes
the collections of set functions $h$ satisfying the cardinality constraints.

\blmm[Modularization Lemma]
Given a
(not-necessarily full) conjunctive query $Q$ with hypergraph $\calH =
(\calV,\calE)$ and cardinality constraints $\cc$, let $B\subseteq \calV$ be arbitrary. Then, we have
\begin{equation}
   \max\{ h(B) \suchthat h \in \Mod_n \cap \hcc\}
   =
   \max\{ h(B) \suchthat h \in \Gamma^*_n \cap \hcc\}
   =
   \max\{ h(B) \suchthat h \in \Gamma_n \cap \hcc\}.
   \label{eqn:Mod:Gamma}
\end{equation}
\label{lmm:modularization}
\elmm
\bp
Up to renumbering, we can assume $B = [k]$ for some positive integer $k$.
Set $[0]=\emptyset$ by convention.
Let $h^* = \argmax\{ h(B) \suchthat h \in \Gamma_n \cap \hcc \}$.
Define a set function $\bar h$ as follows:
$\bar h(F) = \sum_{i\in F} (h^*([i])-h^*([i-1]))$, for any $\emptyset \neq
F \subseteq \calV$,
and $\bar h(\emptyset) = 0$.
Clearly $\bar h \in \Mod_n$ and $\bar h(B) = h^*(B)$.
Next, we prove $\bar h \in \hcc$ by proving
$\bar h(F) \leq h^*(F)$ for all $F\subseteq \calV$, by induction
on $|F|$. The base case when $|F|=0$ is trivial.
For the inductive step, let $j$ be the maximum integer in $F$, then
by noting that $|F\cap [j-1]|< |F|$, we have
\begin{eqnarray*}
   \bar h(F) &=& h^*([j]) - h^*([j-1]) + \sum_{i \in F-\{j\}} (h^*([i])-h^*([i-1]))\\
   &=& h^*([j]) - h^*([j-1]) + \bar h(F \cap [j-1])\\
   &=& h^*(F \cup [j-1]) - h^*([j-1]) + \bar h(F \cap [j-1])\\
   (\text{induction hypothesis}) &\leq & h^*(F \cup [j-1]) - h^*([j-1]) + h^*(F \cap [j-1])\\
   (\text{submodularity of $h^*$}) &\leq & h^*(F).
\end{eqnarray*}
Consequently, $\bar h \in \Mod_n \cap \hcc$ and thus
$\max\{ h(B) \suchthat h \in \Mod_n \cap \hcc\} \geq
\max\{ h(B) \suchthat h \in \Gamma_n \cap \hcc\}$. The reverse inequality is
trivial because $\Mod_n\subseteq \Gamma_n$.
\ep

Recall that a full conjunctive query is a special case of a disjunctive datalog
rule: It is a disjunctive datalog rule with only one target $B=\calV$.
Hence for a full conjunctive query $Q$, the log-size-bound defined
by~\eqref{eqn:intro:output:bound} simplifies to
\begin{equation}
   \outputbound_{\calF}(Q)\defeq \max_{h\in\calF} h(\calV).
   \label{eq:defn-logsizebound}
\end{equation}
The following simple proposition recaps major known bounds under one
umbrella.  Since the proposition is restricted to full conjunctive
queries rather than the more general disjunctive datalog rules, it makes
stronger claims about size bounds: In particular, some
bounds for full conjunctive queries collapse part of the hierarchy of function classes:
$\Mod_n \subseteq \overline\Gamma^*_n \subseteq \Gamma_n \subseteq \sa_n$.
In the following, we use notations defined in Section~\ref{sec:background}.
\bprop\label{prop:unifying:bound}
Let $Q$ be a full conjunctive query with no FDs whose hypergraph
is $\calH=(\calV,\calE)$.  Then the followings hold:
\begin{align}
\log \vertexbound(Q)
   &= \outputbound_{\calF \cap (\log N \cdot \vd)}(Q),
   &\forall \calF \in \{\Mod_n, \overline\Gamma^*_n, \Gamma_n, \sa_n\}
   \label{eqn:u1}\\
   \rho(Q, (N_F)_{F\in\calE}) &= \outputbound_{\sa_n \cap \hcc}(Q)\label{eqn:u2} \\
\rho^*(Q) \cdot \log N
   &= \outputbound_{\calF \cap (\log N \cdot \ed)}(Q),
   &\forall \calF \in \{\Mod_n, \overline\Gamma^*_n, \Gamma_n\}
   \label{eqn:u3}\\
   \log \agm(Q)
   &= \outputbound_{\calF \cap \hcc}(Q),
   &\forall \calF \in \{\Mod_n, \overline\Gamma^*_n, \Gamma_n\}
   \label{eqn:u4}
\end{align}
\eprop
\bp
By comparing~\eqref{eq:defn:rho-star-Q} to~\eqref{eqn:agm:bound}, we notice
that~\eqref{eqn:u3} is the special case of~\eqref{eqn:u4} where $N_F= N$ for all $F \in \calE$.
Hence~\eqref{eqn:u4} implies~\eqref{eqn:u3}.
Next, we prove~\eqref{eqn:u4}.
Note that $\log \AGM(Q)$ and $\outputbound_{\Mod_n\cap \hcc}(Q)$ are the solutions to dual
linear programs. In particular, from~\eqref{eqn:agm:bound}, we have
\begin{align*}
   \log \AGM(Q) = \min&\sum_{F \in \calE}\lambda_F \cdot \log N_F\\
   \text{such that }&
   \sum_{\substack{F\in\calE\\ v \in F}}\lambda_F \geq 1,& \forall v \in \calV\\
   &\lambda_F \geq 0,& \forall F \in \calE.
\end{align*}
And from~\eqref{eq:defn-logsizebound} and the definitions of $\Mod_n$ and $\hcc$, we have
\begin{align*}
   \outputbound_{\Mod_n\cap \hcc}(Q) = \max & \sum_{v \in [n]} h(v)\\
   \text{such that }& \sum_{v \in F} h(v) \leq \log N_F,& \forall F \in \calE,\\
   & h(v) \geq 0, & \forall v \in [n].
\end{align*}
By strong
duality of linear programming, we have
$\log \AGM(Q) =\outputbound_{\Mod_n\cap \hcc}(Q)$.
Lemma~\ref{lmm:modularization} implies
$\outputbound_{\Mod_n\cap \hcc}(Q) =
\outputbound_{\Gamma_n \cap \hcc}(Q)$.

To prove~\eqref{eqn:u1}, note that $\log \vertexbound(Q) = \outputbound_{\Mod_n
\cap (\log N \cdot \vd)}(Q)$ is trivial. Since $\Mod_n \subseteq \sa_n$ it is
sufficient to show that
$\outputbound_{\sa_n \cap (\log N \cdot \vd)}(Q) \leq n\log N =
\log\vertexbound(Q)$.
Let $h \in \sa_n \cap (\log N \cdot \vd)$ be arbitrary, then from sub-additivity
we have $h(\calV) \leq \sum_{v \in \calV}h(v) \leq \sum_{v\in \calV}\log N=n\log
N$, which completes the proof.

Lastly, we prove equality~\eqref{eqn:u2} which is the only non-trivial
statement in this proposition.
Recall the definition of $\rho(Q, (N_F)_{F\in\calE})$ from~\eqref{eqn:rho:Q:NF}.
Let $(\lambda_F)_{F\in\calE} \in \ec \cap \{0, 1\}^{\calE}$ be any integral edge cover
of $\calH$, and $h \in \sa_n \cap \hcc$ be arbitrary.
Let $\{F_1,\dots,F_k\}$ be the collection of all hyperedges in $\calE$
with $\lambda_{F_i}=1, \forall i\in [k]$.
Then,
\begin{align*}
   h(\calV)
   &\leq \sum_{i=1}^k h\left(F_i \setminus \bigcup_{j=1}^{i-1}F_j\right)
   &(\text{sub-additivity}) \\
   &\leq \sum_{i=1}^k h\left(F_i\right)
   &(\text{monotonicity})     \\
   &= \sum_{F \in\calE} \lambda_F \cdot h(F)\\
   &\leq \sum_{F \in \calE} \lambda_F \log N_F
   &(\text{cardinality constraints})                          \\
   &= \log \left( \prod_{F\in\calE} N_F^{\lambda_F} \right).
\end{align*}
This proves $\outputbound_{\sa_n \cap \hcc}(Q) \leq \rho(Q,(N_F)_{F\in\calE})$.
Conversely, suppose $h^*$ is an optimal solution to the linear program
$\max \setof{h(\calV)}{h \in \sa_n\cap\hcc}$, which is explicitly written as
\begin{align}
   \max \myskip h(\calV)\label{eqn:LP:sa}\\
   \text{s.t.} \myskip h(I\cup J) - h(I) - h(J)&\leq 0, & I \incomp J
   &&\text{ (i.e. } I \not\subseteq J \text{ and } J \not\subseteq I)\nonumber\\
   h(F) &\leq \log N_F, & F\in\calE\nonumber\\
   h(X) - h(Y) &\leq 0, & X \subset Y \subseteq \calV\nonumber\\
     h(Z) &\geq 0 & \emptyset\neq Z\subseteq \calV\nonumber.
\end{align}
We are done if we can prove the following claim:

\begin{claim}
There is an integral edge cover $(\lambda_F)_{F\in\calE}$ such that
$\sum_{F\in\calE} \lambda_F \log N_F \leq h^*(\calV)$.
\label{clm:unifying:bound:1}
\end{claim}

To prove this claim, we need the dual of the LP~\eqref{eqn:LP:sa}.
Associate a dual variable $\delta_F$ to each
cardinality constraint $h(F)\leq \log N_F$, a dual variable $\sigma_{I,J}$
to each sub-additivity constraint $h(I\cup J) - h(I) - h(J)\leq 0$,
and a dual variable $\mu_{X, Y}$ to each monotonicity constraint $h(X) - h(Y) \leq 0$.
For convenience, let $\delta_Z\defeq 0$ for any $Z\in  2^\calV\setminus \calE$.
For any $Z\subseteq \calV$, define $\flowsa(Z)$ to be
\begin{equation}
\flowsa(Z)\defeq\delta_Z + \sum_{I\incomp J: I\cup J = Z}\sigma_{I,J} - \sum_{J:J \incomp
   Z} \sigma_{Z,J}- \sum_{X:X\subset Z} \mu_{X, Z} +\sum_{Y:Z\subset Y} \mu_{Z, Y}.
   \label{eqn:flowsa}
   \end{equation}
Then, the dual LP is
\begin{eqnarray}
   \min \myskip &\sum_{F\in\calE} (\log N_F) \cdot \delta_F\label{eqn:dual:LP:sa}\\
   \text{s.t.}\myskip  &\flowsa(Z)& \geq  0, \myskip \emptyset \neq Z\subseteq \calV\nonumber\\
  & \flowsa(\calV)& \geq  1\nonumber\\
   &(\vec\delta, \vec\sigma, \vec\mu) &\geq \mv 0 \nonumber.
\end{eqnarray}

We need the following auxiliary claim.

\begin{claim}
Given a rational feasible dual solution $(\vec\delta,\vec\sigma,\vec\mu)$,
let $D$ be a common denominator for all entries in $(\vec\delta,\vec\sigma,\vec\mu)$.
Then, there exist $D$ integral edge covers
$\vec\lambda^{(i)} = (\lambda_F^{(i)})_{F\in\calE}$, $i \in [D]$,
such that for any $F\in\calE$, $\sum_{i=1}^D\lambda_F^{(i)}\leq D\cdot \delta_F$.
\label{clm:unifying:bound:2}
\end{claim}

Assuming Claim~\ref{clm:unifying:bound:2} holds, we prove Claim~\ref{clm:unifying:bound:1}.
Let $(\vec\delta^*,\vec\sigma^*,\vec\mu^*)$ be a rational optimal dual solution, then
$h^*(\calV)=\sum_{F\in\calE}(\log N_F)\cdot \delta^*_F$ by strong duality.
Applying Claim~\ref{clm:unifying:bound:2} on $(\vec\delta^*,\vec\sigma^*,\vec\mu^*)$, we get
\[ h^*(\calV)
\geq \sum_{F\in\calE} \sum_{i \in [D]} \frac{\lambda_F^{(i)}}{D}\log N_F
= \frac 1 D \sum_{i\in [D]} \sum_{F\in\calE} \lambda_F^{(i)}\log N_F
   \geq \min_{i\in [D]} \sum_{F\in\calE} \lambda_F^{(i)} \log N_F.
\]
This proves Claim~\ref{clm:unifying:bound:1}.

{\it Proof of Claim~\ref{clm:unifying:bound:2}:}
Note the following: if $\flowsa(B)>\delta_B$ for some $B\neq\emptyset$,
then from~\eqref{eqn:flowsa} and the fact that $(\vec\delta,\vec\sigma,\vec\mu)$
is feasible to the dual LP, there must be either
(1) $I\incomp J$ with $I\cup J=B$ and $\sigma_{I, J}>0$,
or (2) $Y\supset B$ with $\mu_{B, Y}>0$.
Let $w\defeq1/D$.
Initially, let $\calB=\{\calV\}$.
While there is $B\in \calB$ with $\flowsa(B)>\delta_B$, we do the following:
\bi
\item If Case (1) holds,
then we reduce $\sigma_{I, J}$ by $w$, remove $B$ from $\calB$, and add both $I$ and $J$ to $\calB$.
\item If Case (2) holds, then we reduce $\mu_{B, Y}$ by $w$, remove $B$ from $\calB$, and add $Y$ to $\calB$.
\ei
Either way, we are maintaining the following invariants (which initially hold):
\bi
\item $\flowsa(B)>0$ for any $B\in\calB$.
\item $\flowsa(Z)\geq 0$ for any $\emptyset\neq Z\subseteq \calV$.
\item For any $v\in \calV$, there is some $B\in\calB$ such that $v\in B$.
\ei
By the assumption of Claim~\ref{clm:unifying:bound:2}, any non zero value in
$(\vec\delta,\vec\sigma,\vec\mu)$ must be $\geq w$, hence
the above process terminates because at each iteration we are reducing
$\norm{\vec\sigma}_1+\norm{\vec\mu}_1$ by $w$.
When it does terminate,  we have $\delta_B\geq\flowsa(B)>0$ for all $B\in\calB$.
Note that $\delta_Z=0$ for all $Z \in 2^\calV - \calE$.
Hence, $(\lambda_F^{(1)})_{F\in \calE}$ is an integral edge cover where
\[
\lambda_F^{(1)}\defeq
\begin{cases}
1 \myskip\text{ if $F \in \calB$},\\
0 \myskip\text{ otherwise}.
\end{cases}
\]
This way, we have constructed the first integral edge cover $\vec\lambda^{(1)}$
out of the promised $D$ integral edge covers.
Now for every $B\in\calB$, we reduce $\delta_B$ (hence $\flowsa(B)$) by $w$
and remove $B$ from $\calB$.
Notice that except for the initial $\calV$ in $\calB$, the following was always maintained.
Before we added any $B$ to $\calB$ we increased $\flowsa(B)$ by $w$,
and before we removed any $B'$ from $\calB$ we decreased $\flowsa(B')$ by $w$.
Therefore, at the end $\flowsa(Z)$ is unchanged for all $Z\subset \calV$,
and $\flowsa(\calV)$ is reduced by $w$.
If the final $\flowsa(\calV)=0$, then we are done.
Otherwise, $\frac{1}{1-w}\cdot(\vec\delta,\vec\sigma,\vec\mu)$ is a feasible
dual solution with $(D-1)$ as a common denominator, and we can repeat the
process to construct the next integral edge cover.
\ep

The top part of Figure~\ref{fig:diagram} (which we shall elaborate more on
later)
depicts all the bounds in Proposition~\ref{prop:unifying:bound}.
Regarding the $\agm$-bound, the fact that the size bound on $\Gamma_n$ is equal
to the size bound on $\Mod_n$ allows us to compute the bound efficiently
(polynomial time in query complexity). By contrast, computing
the integral edge cover bound is $\np$-hard in query complexity, despite
being the worse bound, i.e. the bound over the more relaxed function class $\sa_n$.
From the above proposition and the chain of inclusions
$\Mod_n \subseteq \overline\Gamma^*_n \subseteq \Gamma_n \subseteq \sa_n$
and $\hcc \subseteq \vd\cdot \log N$,
we have
$\vertexbound(Q) \geq 2^{\rho(Q, (N_F)_{F\in\calE})} \geq \agm(Q)$ for any $Q$.

\subsection{The polymatroid size bound is not tight!}
\label{subsec:glvv:bound:is:not:tight}

Next, we prove Theorem~\ref{thm:glvv:not:tight}, which states that the
polymatroid bound is not tight for queries with
cardinality and FD constraints.

\colorlet{myGreen}{green!60!black}
\begin{figure}[th!]
\centering
   \begin{tikzpicture}[scale=0.7]
\node[] at (4, 0) (ABXYC){$ABXYC$}; \node[myGreen, left=-.1 of ABXYC] {4};

\node[] at (0, -2) (AX){$AX$}; \node[myGreen, left=-.1 of AX] {3};
\node[] at (2, -2) (BX){$BX$}; \node[myGreen, left=-.1 of BX] {3};
\node[] at (4, -2) (XY){$XY$}; \node[myGreen, left=-.1 of XY] {3};
\node[] at (6, -2) (AY){$AY$}; \node[myGreen, left=-.1 of AY] {3};
\node[] at (8, -2) (BY){$BY$}; \node[myGreen, left=-.1 of BY] {3};

\node[] at (.5, -4) (X){$X$}; \node[myGreen, left=-.1 of X] {2};
\node[] at (.5+7/3, -4) (A){$A$}; \node[myGreen, left=-.1 of A] {2};
\node[] at (.5+14/3, -4) (B){$B$}; \node[myGreen, left=-.1 of B] {2};
\node[] at (7.5, -4) (Y){$Y$}; \node[myGreen, left=-.1 of Y] {2};

\node[] at (4, -6) (0){$\emptyset$}; \node[myGreen, left=-.1 of 0] {0};

\node[] at (9.5, -3) (C){$C$}; \node[myGreen, left=-.1 of C] {2};

\path[] (0) edge (X);
\path[] (0) edge (A);
\path[] (0) edge (B);
\path[] (0) edge (Y);

\path[] (X) edge (AX);
\path[] (X) edge (BX);
\path[] (X) edge (XY);
\path[] (Y) edge (BY);
\path[] (Y) edge (AY);
\path[] (Y) edge (XY);
\path[] (A) edge (AX);
\path[] (A) edge (AY);
\path[] (B) edge (BX);
\path[] (B) edge (BY);

\path[] (AX) edge (ABXYC);
\path[] (BX) edge (ABXYC);
\path[] (XY) edge (ABXYC);
\path[] (AY) edge (ABXYC);
\path[] (BY) edge (ABXYC);

\draw[] (0) to [out=10, in=-100] (C);
\draw[] (C) to [out=100, in=-10] (ABXYC);
\end{tikzpicture}
\caption{A polymatroid $h$ (shown in {\color{myGreen}green})
on five variables $A, B, X, Y, C$.  For each
missing set of variables $\mv Z$, $h(\mv Z)\defeq h(\mv Z^+)$, where
$\mv Z^+$ is the smallest set shown in the figure that
contains $\mv Z$. This polymatroid proves Claim~\ref{clm:glvv:not:tight:2}.}
\Description{A polymatroid $h$ (shown in {\color{myGreen}green})
on five variables $A, B, X, Y, C$.  For each
missing set of variables $\mv Z$, $h(\mv Z)\defeq h(\mv Z^+)$, where
$\mv Z^+$ is the smallest set shown in the figure that
contains $\mv Z$. This polymatroid proves Claim~\ref{clm:glvv:not:tight:2}.}
\label{fig:yeung}
\end{figure}

\begin{proof}[Proof of Theorem~\ref{thm:glvv:not:tight}]
Consider the following which we shall call the {\em Zhang-Yeung} query:
\begin{align}
   Q(A,B,X,Y,C) & \cd K(A,B,X,Y,C), R(X,Y),S(A,X),T(A,Y),U(B,X),V(B,Y),W(C)
\end{align}
with given cardinality constraints
$|R|,|S|,|T|,|U|,|V| \leq N^3$, $|W|\leq N^2$ (none on $K$, or $|K|\leq N^c$
for large constant $c$), and the
following keys in $K$: $AB$, $ AXY$, $ BXY$, $AC$, $XC$, $YC$.
Hence $K$ satisfies the following functional dependencies:
\begin{align*}
   AB &\rightarrow XYC,& AXY &\rightarrow BC,\\
   BXY &\rightarrow AC,& AC &\rightarrow BXY,\\
   XC &\rightarrow ABY,& YC &\rightarrow ABX.
\end{align*}
We prove two claims.

\begin{claim}
The following inequality holds for all entropic functions $h \in \Gamma^*_5$:
\begin{align}
   11h(ABXYC) &\leq 3h(XY)+3h(AX)+3h(AY)+h(BX) +h(BY)+5h(C) \nonumber\\
   &+ (h(ABXYC|AB)+4h(ABXYC|AXY) +h(ABXYC|BXY)) \nonumber \\
   &+ (h(ABXYC|AC)+2h(ABXYC|XC)
   +2h(ABXYC|YC)).\label{eqn:claim1:entropic}
\end{align}
\label{clm:glvv:not:tight:1}
\end{claim}

\begin{claim}
There exists a polymatroid $h$ satisfying all cardinality
and FD constraints such that $h(ABXYC) = 4\log N$.
\label{clm:glvv:not:tight:2}
\end{claim}

The first claim above implies $11\log|Q| \leq 11 \log N^3 + 5 \log N^2 = 43 \log N$,
or $|Q| \leq N^{4 - \frac{1}{11}}$,
while the second claim shows that the polymatroid bound is $\geq N^4$.
In other words, the two claims show a ratio of $N^{\frac{1}{11}}$ between
the two bounds.
To construct a query with an amplified gap of $N^{s}$ between the two
bounds, consider a query which is a cross-product of $11s$ variable-disjoint
copies of the basic Zhang-Yeung query.
This proves Theorem~\ref{thm:glvv:not:tight}.

{\em Proof of Claim~\ref{clm:glvv:not:tight:1}.}
Let $I(\mv X; \mv Y | \mv Z) \defeq h(\mv X\mv Z)+h(\mv Y\mv Z)- h(\mv X\mv Y\mv Z)
- h(\mv Z)$ denote the conditional mutual information between random variables
$\mv X$ and $\mv Y$ conditioned on random variables $\mv Z$.
In a breakthrough paper in information theory,
Zhang and Yeung~\cite{DBLP:journals/tit/ZhangY98} proved that
$\overline \Gamma^*_4 \subsetneq \Gamma_4$ by proving that the
following inequality is a non-Shannon-type inequality
(see \cite[Th.15.7]{Yeung:2008:ITN:1457455}), i.e. an inequality satisfied by
all entropic functions and not satisfied by some polymatroid:
\begin{align*}
   2I(X; Y) & \leq I(A;B) + I(A;XY) + 3I(X;Y|A)+I(X;Y|B),
\end{align*}
or equivalently
\begin{align}
h(AB)+4h(AXY) +h(BXY)
   & \leq 3h(XY)+3h(AX) +3h(AY)+h(BX)+h(BY) \nonumber\\
   & - h(A)- 2h(X)-2h(Y). \label{eqn:zy:non:shannon:h}
\end{align}
Let $h$ be any entropic function on $5$ variables $A,B,X,Y,C$, then it still
satisfies inequality~\eqref{eqn:zy:non:shannon:h}, because the restriction of
$h$ on the $4$ variables $A,B,X,Y$ is the marginal entropy.
Now, add $h(ABXYC|AB)+4h(ABXYC|AXY)+h(ABXYC|BXY)$ to both sides:
\begin{align}
   6h(ABXYC) &\leq 3h(XY)+3h(AX)+3h(AY)+h(BX) +h(BY) - (h(A)+ 2h(X)+2h(Y)) \nonumber \\
   &+ (h(ABXYC|AB)+ 4h(ABXYC|AXY)+h(ABXYC|BXY))\label{eq:tmp1}
\end{align}

From these three Shannon-type inequalities, $h(A)+h(C)\geq h(AC)$,
$h(X)+h(C)\geq h(XC)$, and $h(Y)+h(C) \geq h(YC)$, we derive the
following:

\begin{align}
   5h(ABXYC) &\leq h(A)+2h(X)+2h(Y)+5h(C) \nonumber\\
   &+(h(ABXYC|AC)+2h(ABXYC|XC)+2h(ABXYC|YC)) \label{eq:tmp2}
\end{align}

By adding Eq.~(\ref{eq:tmp1}) and~(\ref{eq:tmp2}) we
obtain~\eqref{eqn:claim1:entropic}.

{\em Proof of Claim~\ref{clm:glvv:not:tight:2}.} The proof is shown in
Figure~\ref{fig:yeung}.  The figure shows a
polymatroid $h$ on the five variables $A, B, X, Y, C$.  For each
missing set of variables $\mv Z$, $h(\mv Z)\defeq h(\mv Z^+)$, where
$\mv Z^+$ is the smallest set shown in the figure that
contains $\mv Z$: in other words, the figure shows the closed sets of
a closure on $\set{A,B,X,Y,C}$.  For example,
$h(AB) \defeq h(AB^+) = h(ABXYC)=4$, etc.  One can check that
$\hat h \defeq \log N \cdot h$
is a polymatroid satisfying all cardinality constraints and functional
dependencies: for example $\hat h(XY)=3\log N$, $\hat h(AX)=3\log N$, and
$\hat h(AB) = \hat h(ABXYC)$ satisfies the FD
$AB\rightarrow XYC$.
\end{proof}

\section{Size bounds for disjunctive datalog rules}
\label{sec:size:bound:ddl}

\subsection{The entropic bound for disjunctive datalog}
\label{subsec:entropic:bound:ddl}

\blmm[Part $(i)$ of Theorem~\ref{thm:intro:size:bound:disjunctive}]
\label{lmm:size:bound:disjunctive}
For any disjunctive datalog rule $P$ of the
form~\eqref{eqn:disjunctive:datalog:query}, and any database instance
$\mv D$, the following holds:
\[ \log |P(\mv D)| \leq \outputbound_{\overline\Gamma^*_n \cap \hdc}(P)
              \leq \outputbound_{\Gamma_n \cap \hdc}(P).
\]
\elmm
\bp
Since $\overline \Gamma^*_n \subseteq \Gamma_n$, the second inequality
is trivial. To show the first inequality, let $T$ denote the set of all tuples
$\mv t$ satisfying the body of $P$;
construct a set $\overline T$ of tuples as follows.
We scan though tuples $\mv t \in T$ one at a time, and either add $\mv t$ to
$\overline T$ or ignore $\mv t$.
To decide whether to add $\mv t$ to $\overline T$, we also keep a collection
of tables
$\mv T = (\overline T_B)_{B\in\calB}$. These tables shall form a model of
the disjunctive datalog rule $P$.
Initially $\overline T$ and all the $\overline T_B$ are empty.
Consider the next tuple $\mv t$ taken from $T$. If $\Pi_B(\mv t) \in \overline T_B$ for
{\em any} target $B \in \calB$, then we ignore $\mv t$.
Otherwise, we add $\Pi_B(\mv t)$ to $\overline T_B$ for {\em every} $B\in \calB$, and
add $\mv t$ to $\overline T$. In the end, obviously the collection
$(\overline T_B)_{B\in\calB}$ is a model of the disjunctive datalog rule.
Furthermore, by construction the tuples $\mv t \in \overline T$ satisfy the
property that: for every two different tuples $\mv t, \mv t' \in \overline T$,
every $B\in\calB$,
we have $\Pi_B(\mv t) \neq \Pi_B(\mv t')$ and both
$\Pi_B(\mv t)$ and $\Pi_B(\mv t')$ are in $\overline T_B$.

Now, construct a joint probability distribution on $n$ variables by picking
uniformly a tuple from $\overline T$. Let $\overline h$ denote the entropy function
of this distribution. Then because $\Pi_B(\mv t) \neq \Pi_B(\mv t')$ for
every different $\mv t, \mv t' \in \overline T$ and $B\in\calB$:
\begin{equation}
   \log_2 |\overline T| = \overline h(\calV) = \overline h(B), \forall B \in \calB.
   \label{eqn:key:maximal:matching}
\end{equation}
Furthermore, $\overline h \in \overline\Gamma^*_n$ by definition,
and $\overline h \in \hdc$ because $\overline T \subseteq T$ and $T$ was the join
of input relations satisfying all degree constraints.  Consequently,
\begin{eqnarray*}
   \log |P(\mv D)| =   \min_{(T_B)_{B\in\calB} \models P}\max_{B\in\calB}\log_2 |T_B|
   &\leq&
   \max_{B\in\calB}\log_2 |\overline T_B|\\
   (\text{due to~\eqref{eqn:key:maximal:matching}})   &=&
   \max_{B\in\calB} \overline h(B)\\
   (\text{due to~\eqref{eqn:key:maximal:matching}})   &=&
   \min_{B\in\calB} \overline h(B)\\
   (\text{because } \bar h \in \overline\Gamma^*_n \cap \hdc)&\leq&
   \max_{h\in\overline\Gamma^*_n \cap \hdc}
   \min_{B\in\calB} h(B)\\
   &=& \outputbound_{\overline\Gamma^*_n \cap \hdc}(P).
\end{eqnarray*}
\ep

\subsection{The entropic bound is asymptotically tight under degree constraints}
\label{subsec:entropic:bound:is:tight}

Recall that $\hdc$, $\hcc$ and $\hfd$ denote
the collection of set functions $h$ satisfying the degree constraints, cardinality constraints and functional dependencies respectively.
Recently, Gogacz and Toru{\'n}czyk~\cite{szymon-2015} showed that\linebreak
$\outputbound_{\overline \Gamma^*_n \cap \hcc \cap \hfd}(Q)$ is an asymptotically
tight upper bound for full {\em conjunctive} queries $Q$
with {\em FDs}. This section proves a generalization of their result in two different directions: We show that
$\outputbound_{\overline \Gamma^*_n \cap \hdc}(P)$ is asymptotically tight for
{\em disjunctive datalog rules} $P$ with given {\em degree constraints} $\hdc$.

The proof is based on the same observation
made in~\cite{szymon-2015}: one can take the {\em group
characterizable entropy function} from Chan and Yeung~\cite{DBLP:journals/tit/ChanY02}
and turn it into a database instance.

Three technical issues we need to nail down to push the proof through
are the following.
First, we need to make sure that the database instance so constructed
satisfies the given degree constraints (which include FD and cardinality
constraints).
Second, it takes a bit more care to define what we meant by ``asymptotically tight''
in the case of queries with degree bounds.
When the input query has only input relation cardinality bounds and functional
dependencies, one can set all input relations to be of the same cardinality $N$ and
let $N$ go to infinity. The tightness of the bound is in the exponent
$\alpha$ (if the bound was $N^\alpha$). This was the result from Gogacz and
Toru{\'n}czyk. When the input query has degree bounds, it does not make sense to
set all degree bounds (and cardinality bounds) to be $N$: we want a finer level of
control over their relative magnitudes.
Third, and most importantly,
unlike in the full conjunctive query case where there is only one output;
in the disjunctive datalog rule case there are multiple models and we have to
show that {\em any} model to the rule must have size asymptotically
no smaller than the worst-case entropic bound.

We start with the construction from Chan and
Yeung~\cite{DBLP:journals/tit/ChanY02}.

\bdefn[Database instance from group system]
Let $G$ be a finite group and $G_1,\dots,G_n$ be $n$ subgroups of $G$.
A database instance associated with this group system $(G,G_1,\dots,G_n)$ is
constructed as follows. There are $n$ attributes $A_1,\dots,A_n$. The domain of
attribute $A_i$ is the left coset space $G/G_i$, i.e. the collection of all
left cosets $\{ gG_i \suchthat g \in
G\}$ of the subgroup $G_i$. For every $F\subseteq [n]$, define a relation $R_F$
by
\begin{equation}
R_F \defeq \{ (gG_i)_{i\in F} \suchthat g \in G\}.
\end{equation}
In other words, $R_F$ is the set of tuples $\mv a_F = (a_i)_{i\in F}$
on attributes $\mv A_F$, where $a_i = gG_i$ for some $g\in G$.
The group element $g$ is said to {\em define} the tuple $\mv a_F$.
\edefn

Given a group system $(G,G_1,\dots,G_n)$,
for any $\emptyset \neq F\subseteq [n]$, define
$G_F \defeq \bigcap_{i\in F} G_i$; and, define $G_\emptyset=G$.
Note that $G_Z$ is a subgroup of $G_Y$ for all $\emptyset \subseteq Y \subseteq Z \subseteq [n]$.
Recall the notion of degree defined in~\eqref{eqn:degree}. We prove the
following lemma, which is a slight generalization of Proposition 4 from
Chan~\cite{DBLP:journals/corr/abs-cs-0702064}.

\blmm\label{lmm:group:degree}
Let $\mv D$ be a database instance associated with the group system
$(G,G_1,\dots,G_n)$.
Let $\emptyset \subseteq Z \subset Y \subseteq [n]$ be two sets of attributes
($Z$ can be empty).
Then, for any tuple $\mv a_Z \in R_Z$:
\begin{equation} \deg_{R_Y}(Y | \mv a_Z) = \frac{|G_Z|}{|G_Y|}.  \end{equation}
In particular, $Z\to Y$ is an FD satisfied by the database
instance iff $G_Z$ is a subgroup of $\bigcap_{i\in Y-Z} G_i$.
\elmm
\bp
For any $F \subseteq [n]$,
two group elements $b, c\in G$ define the same tuple $\mv a_F$ in $R_F$,
i.e. $cG_i=bG_i, \forall i\in F$, if and only if $b^{-1}c \in \bigcap_{i\in F} G_i = G_F$,
which is equivalent to $b=cg$ for some $g \in G_F$.

Now suppose $\mv a_Z \in R_Z$ was defined by an element $b \in G$,
i.e. $\mv a_Z = (a_i)_{i \in Z} = (bG_i)_{i\in Z}$.
Consider any two group elements $c_1,c_2\in G$ which define the same $\mv a_Z$
tuple but different $\mv a_Y$ tuples.
First, $c_1,c_2 \in G$ define the same $\mv a_Z$ iff there are two elements
$g_1,g_2 \in G_Z$ such that $c_1=bg_1$ and $c_2=bg_2$.
Second, $c_1$ and $c_2$ define different $\mv a_Y$ tuples iff
$c_1^{-1}c_2 \notin G_Y$, which is equivalent to
$g_1^{-1}b^{-1}bg_2 = g_1^{-1}g_2 \notin G_Y$, which in turn is equivalent to
$g_1G_Y \neq g_2G_Y$.
Thus, the degree of $\mv a_Z$ in $R_Y$ is precisely the index of the subgroup
$G_Y$ in the group $G_Z$, which is equal to $|G_Z|/|G_Y|$.

If $Z \to Y$ is a functional dependency, then from the fact that $G_Y$ is a
subgroup of $G_Z$ and the inequality $|G_Z|/|G_Y| \leq 1$, we conclude that
$G_Z$ is equal to $G_Y$. However,
\begin{equation}
   G_Y = \bigcap_{i \in Y} G_i =
\bigcap_{i \in Z} G_i\cap \bigcap_{i \in Y-Z} G_i =
G_Z \cap \bigcap_{i \in Y-Z} G_i.
\label{eq:GY:GZ:GY-Z}
\end{equation}
Therefore, $G_Z$ has to be a subgroup of $\bigcap_{i \in Y-Z} G_i$.
For the opposite direction, suppose
$G_Z$ is a subgroup of $\bigcap_{i \in Y-Z} G_i$.
From~\eqref{eq:GY:GZ:GY-Z}, we have $G_Z = G_Y$.
\ep

\bdefn[Scaled-up degree constraints, $\hdc \times k$]
For a given set of degree constraints $\hdc$, let $\hdc \times k$ denote the
same set of constraints but with all the degree bounds multiplied by $k$.
The constraints $\hdc \times k$ are called the ``scaled-up'' degree constraints.
\label{defn:hdcXk}
\edefn

\blmm[Part $(ii)$ of Theorem~\ref{thm:intro:size:bound:disjunctive}]
Let $P$ be the {\em disjunctive datalog rule}~\eqref{eqn:disjunctive:datalog:query}
with degree constraints $\hdc$, where
$\outputbound_{\overline \Gamma^*_n \cap \hdc}(P) \geq 1.$
For any $0<\epsilon<1$, there exists a scale factor $k$
and a database instance $\mv D$ satisfying the scaled-up degree
constraints $\hdc \times k$ for which
the following holds:
\[ 
\log |P(\mv D)|
\geq \left( 1-\epsilon \right) \cdot
\outputbound_{\overline \Gamma^*_n \cap \hdc \times k}(P).
\]
\label{lm:entropic:tight:dc}
\elmm
\bp
Let $H$ be defined as follows:
\begin{equation}
   H \defeq \argmax_{ h \in \overline \Gamma^*_n \cap \hdc}\min_{B\in\calB} h(B).
   \label{eq:entropic:tight:H}
\end{equation}
Note that, $\overline\Gamma^*_n$ is a convex cone, and thus
$kH$ is an optimal solution to the problem
\[\max_{h \in \overline \Gamma^*_n \cap \hdc \times k}\min_{B\in\calB} h(B).\]
Following Chan and Yeung~\cite{DBLP:journals/tit/ChanY02} without loss
of generality we assume that there is a distribution on $n$ random variables
$A_1, \dots, A_n$ such that all their domains $\Dom(A_i)$ are discrete and
finite, and that $H$
is the joint entropy all of whose marginal entropies are rational numbers.
(In general, just as in Chan and Yeung~\cite{DBLP:journals/tit/ChanY02}, if
$H$ is not rational, or is not an entropy over discrete random variables of
finite domains, we construct a sequence of distributions over discrete and
finite domains whose entropies tend to $H$.)

For any tuple $\mv a_F \in \prod_{i\in F} \Dom(A_i)$, we use $p(\mv a_F)$
to denote $\pr[\mv A_F=\mv a_F]$, i.e. $p$ is the probability mass function.
Let $d$ denote the minimum common denominator of all probabilities $p(\mv
a_{[n]})$, over all tuples $\mv a_{[n]}$ for which $p(\mv a_{[n]})>0$.
Set $\delta$ such that $\epsilon = \frac{2\delta}{1+\delta}$,
and let $r$ be a multiple of $d$ sufficiently large so that
the following hold:
\begin{eqnarray}
   d\log_2 e + \frac d 2 \log_2 d &\leq& \delta r,\label{eq:delta-r}\\
   \log_2 r &\geq& d\log_2 (e^2).\label{eq:log2r}
\end{eqnarray}
Construct an $(n\times r)$-matrix $\mv M_r$ whose columns are in $\prod_{i=1}^n \Dom(A_i)$
where
column $\mv a_{[n]}$ appears precisely $r \cdot p(\mv a_{[n]})$ times.

The following construction of a group system is from
Chan and Yeung~\cite{DBLP:journals/tit/ChanY02}.
Let $G^r$ denote the group of permutations of columns of $\mv M_r$.
This group acts on the rows of $\mv M_r$.
For $i\in [n]$, let $G^r_i$ be the subgroup of $G^r$ that fixes the $i$th row of
$\mv M_r$, i.e $G^r$ is the stabilizer subgroup of $G^r$ with respect to the
$i$th row of $\mv M_r$.
Note that $|G^r_F| = \prod_{\mv a_F} (r\cdot p(\mv a_F))!$, where
the product is over all vectors $\mv a_F$ which have positive probability mass.
Consider the database instance $\mv D^r$ associated with the group system
$(G^r,G^r_1,\dots,G^r_n)$. We claim that this database instance satisfies all
degree constraints $\hdc\times k$ for scale factor $k=r(1+\delta)$.

First, we verify that $\mv D^r$ satisfies all the functional dependency
constraints (i.e. the degree upperbounds of $1$).
Let $Z \to Y$ be an FD. Then, $H[\mv A_Y | \mv A_Z]=0$, which means for every
tuple $\mv a_Z$ with positive probability mass, the tuple $\mv a_{Y-X}$ is
completely determined. In particular, there cannot be two vectors
$\mv a_Y \neq \mv a'_Y$ with positive probability mass for which
$\Pi_X(\mv a_Y) = \Pi_X(\mv a'_Y)$.
Thus, if a permutation fixes all the rows in $Z$, then it also fixes all the
rows in $Y-Z$ of the matrix $\mv M_r$.
This means $G_Z$ is a subgroup of $\bigcap_{i\in Y-Z}G_i$.
From Lemma~\ref{lmm:group:degree}, the associated database instance
satisfies the FD.

Second, we verify the higher-order degree constraints (with degree bounds
$n_{Y|Z} = \log_2 N_{Y|Z} \geq 1$).
From definition of probability mass and entropy, we know that
\begin{eqnarray*}
   \sum_{\mv a_F} p(a_F)&=&1,\\
   -\sum_{\mv a_F} p(\mv a_F)\log_2 p(\mv a_F)&=&H[\mv A_F].
\end{eqnarray*}
In the above and henceforth in this section, $\sum_{\mv a_F}$ denotes the sum
over all tuples $\mv a_F$ with positive probabilities $p(\mv a_F) > 0$.
Since $d$ is a common denominator of all positive probabilities $p(\mv
a_{[n]}) > 0$ and those probabilities sum up to $1$,
$d$ is an upper bound on the number of tuples $\mv a_{[n]}$ having $p(\mv
a_{[n]}) > 0$.
Hence for any $F \subseteq[n]$, $d$ is also an upper bound on the number of tuples $\mv a_F$
with $p(\mv a_F) > 0$:
\begin{eqnarray}
   \sum_{\mv a_F} 1 &\leq& d,\label{eq:count-a_F-d}\\
   \sum_{\mv a_F} \log_2 \left( \frac{1}{p(\mv a_F)} \right) &\leq&
   \sum_{\mv a_F} \log_2 d \leq
   d \cdot \log_2 d.\label{eq:d-log-d}
\end{eqnarray}
The first inequality in~\eqref{eq:d-log-d} relies on the monotonicity of the $\log$ function
(since $p(\mv a_F) \geq 1/d$ for every positive $p(\mv a_F)$, which follows from $d$ being the
common denominator).
We will also use Stirling approximation:
\begin{eqnarray*}
\log_2 n! &\leq&
\log_2(e)+n\log_2 n+\frac 1 2 \log_2 n-n\log_2 e\\
\log_2 n! &\geq&
\log_2(\sqrt{2\pi})+n\log_2 n+\frac 1 2 \log_2 n-n\log_2 e
\end{eqnarray*}
From the above, we have
\begin{eqnarray}
\log_2 |G^r_F|
   &=& \log_2 \prod_{ \mathbf a_F} (r \cdot p(\mv a_F))!\nonumber\\
   &\leq & \sum_{\mv a_F} \left(
      \log_2(e)+
         r \cdot p(\mv a_F) \log_2 \left(r \cdot
      p(\mv a_F)\right)
      + \frac 1 2 \log_2( r \cdot p(\mv a_F))
      - r \cdot p(\mv a_F) \log_2 e
      \right)\nonumber\\
   &=& r\log_2(r/e)-rH[\mv A_F]+ \sum_{\mv a_F}\left( \log_2 e+\frac
   1 2 \log_2 (r\cdot p(\mv a_F)) \right).
   \label{eq:stirling:up}
\end{eqnarray}
Similarly,
\begin{eqnarray}
   \log_2 |G^r_F| &\geq&
   r\log_2(r/e)-rH[\mv A_F]+ \sum_{\mv a_F}\left( \log_2 \sqrt{2\pi}+\frac
   1 2 \log_2 (r\cdot p(\mv a_F)) \right).
   \label{eq:stirling:down}
\end{eqnarray}
We now use Lemma~\ref{lmm:group:degree} alongside the above bounds to bound
\begin{eqnarray*}
\log_2 \frac{|G^r_Z|}{|G^r_Y|}
&=& \log_2
      \frac{\prod_{ \mathbf a_Z} (r \cdot \pr\bigl[\mathbf A_Z = \mathbf a_Z\bigr])!}
      { \prod_{ \mathbf a_Y} (r \cdot \pr\bigl[\mathbf A_Y = \mathbf a_Y\bigr])! }\\
&\leq &
   \left[r\log_2(r/e)-rH[\mv A_Z]+ \sum_{\mv a_Z}\left( \log_2 e+\frac
   1 2 \log_2 (r\cdot p(\mv a_Z)) \right)\right]\\
   &&
- \left[
   r\log_2(r/e)-rH[\mv A_Y]+ \sum_{\mv a_Y}\left( \log_2 \sqrt{2\pi}+\frac
   1 2 \log_2 (r\cdot p(\mv a_Y)) \right) \right]\\
   &=&
   rH[\mv A_Y | \mv A_Z] + \sum_{\mv a_Z}\left( \log_2 e+\frac
      1 2 \log_2 (r\cdot p(\mv a_Z)) \right)-
      \sum_{\mv a_Y}\left( \log_2 \sqrt{2\pi}+\frac
         1 2 \log_2 (r\cdot p(\mv a_Y)) \right)\\
   &=&
   rH[\mv A_Y | \mv A_Z] +
   \underbrace{\sum_{\mv a_Z}\log_2 e}_{\text{$\leq d\log_2 e$ by \eqref{eq:count-a_F-d}}}
   +\underbrace{\frac{1}{2} \sum_{\mv a_Z} \log_2 r
   -\frac{1}{2}\sum_{\mv a_Y}\log_2 r}_{\text{$\leq 0$ since $Z \subseteq Y$}}
   +\frac{1}{2} \sum_{\mv a_Z} \underbrace{\log_2 p(\mv a_Z)}_{\leq 0}\\
   &&\underbrace{-\sum_{\mv a_Y}\log_2 \sqrt{2\pi}}_{\leq 0}
   -\frac{1}{2}\sum_{\mv a_Y}\log_2 p(\mv a_Y)\\
   &\leq&
   rH[\mv A_Y | \mv A_Z] + d\log_2 e + \frac 1 2 \sum_{\mv a_Y} \log_2 \frac{1}{p(\mv
   a_Y)}\\
   \text{(by~\eqref{eq:d-log-d})}&\leq& rH[\mv A_Y | \mv A_Z] + d\log_2 e + \frac d 2 \log_2 d\\
   \text{(by~\eqref{eq:delta-r})}&\leq& rH[\mv A_Y | \mv A_Z] + \delta r\\
   &\leq& rn_{Y|Z}+ \delta r\\
   &\leq& r(1+\delta)n_{Y|Z}\\
   &=& k \cdot n_{Y|Z}.
\end{eqnarray*}
The inequality $H[\mv A_Y | \mv A_Z] \leq n_{Y|Z}$ used above follows from the fact that
$H$ is in $\hdc$ according to~\eqref{eq:entropic:tight:H}.
Third, let $\mv T = (T_B)_{B\in\calB}$ be any model of the rule
$P$ on the database instance $\mv D$ satisfying the scaled degree constraints
$\hdc \times k$.
We are to show that
\begin{equation}
   \max_{B\in\calB} \log_2 |T_B| \geq (1-\epsilon) \cdot
\outputbound_{\overline \Gamma^*_n\cap \hdc \times k}(P)  =
(1-\epsilon) \cdot k \cdot \min_{B\in\calB} H[\mv A_{B}],
   \label{eqn:kH:toprove}
\end{equation}
where the equality above follows from~\eqref{eq:entropic:tight:H}.
Let $Q$ denote the set of tuples satisfying the body of the rule
(i.e. $Q$ is the join of all atoms in the body).
To bound $\max_{B\in\calB}|T_B|$, we reason as follows.
First, without loss of generality, we can assume $T_B \subseteq R_B$, for all
$B\in\calB$, because tuples in $T_B \setminus R_B$ can be removed while keeping
$\mv T$ a model of $P$.
Second, for any $B\in\calB$, we say that $\mv t_B \in R_B$ ``covers''
a tuple $\mv a \in Q$ whenever $\Pi_B(\mv a) = \mv t_B$.
From Lemma~\ref{lmm:group:degree},
for every $B\in\calB$, and every tuple $\mv t_B \in R_B$,
there are precisely $\frac{|G^r_B|}{|G^r_{[n]}|}$ tuples
$\mv a \in Q$ covered by $\mv t_B$.
The size of $R_B$ is exactly $\frac{|G^r_\emptyset|}{|G^r_B|}$
and the size of $Q$ (i.e. $R_{[n]}$) is exactly
$\frac{|G^r_\emptyset|}{|G^r_{[n]}|}$.
Hence, no two tuples in $R_B$ cover the same tuple in $Q$.
Third, let $\bar B = \argmax_{B\in\calB} \frac{|G^r_B|}{|G^r_{[n]}|}$. Then,
\begin{eqnarray*}
   \frac{|G^r_\emptyset|}{|G^r_{[n]}|} &=&|Q|\\
   (\text{every tuple in $Q$ has to be covered})   &\leq& \sum_{B\in\calB} |T_B| \cdot \frac{|G^r_B|}{|G^r_{[n]}|}\\
   &\leq& \left(\max_{B\in\calB} |T_B|\right) \cdot \left( \sum_{B\in\calB} \frac{|G^r_B|}{|G^r_{[n]}|}\right)\\
   &\leq& \left( \max_{B\in\calB}|T_B| \right) \cdot \left( |\calB| \cdot \max_{B\in\calB} \frac{|G^r_B|}{|G^r_{[n]}|}\right)\\
   &=& |\calB| \left( \max_{B\in\calB}|T_B| \right) \cdot \left(
   \frac{|G^r_{\bar B}|}{|G^r_{[n]}|}\right).
\end{eqnarray*}
Hence,
\[ \max_{B\in\calB}|T_B| \geq \frac{1}{|\calB|}
\frac{|G^r_\emptyset|}{|G^r_{\bar B}|} = \frac{1}{|\calB|} |R_{\bar B}|.
\]
To complete the proof of~\eqref{eqn:kH:toprove}, it suffices to show that
\[ \log_2 \frac{|R_{\bar B}|}{|\calB|} \geq (1-\epsilon) \cdot
k \cdot \min_{B\in\calB} H[\mv A_{B}],
\]
which would follow immediately from showing that
\[ \log_2|R_{\bar B}| -\log_2|\calB| \geq (1-\epsilon) \cdot
k \cdot H[\mv A_{\bar B}].
\]

Recall that we assumed $\outputbound_{\overline \Gamma^*_n\cap \hdc}(P) \geq 1$.
From~\eqref{eq:entropic:tight:H}
and the fact that $\bar B \in \calB$, we have  $H[\mv A_{\bar B}] \geq 1$.
For a sufficiently large $r$, we have
\begin{equation}
   \frac 1 2 d \log_2 r + \log_2|\calB| \leq \delta r H[\mv A_{\bar B}]
   \label{eqn:large:r}
\end{equation}
Now we have,
\begin{eqnarray*}
   \log_2 |R_{\bar B}|-\log_2|\calB|
   &=& \log_2 \frac{r!}{|G^r_{\bar B}|}-\log_2|\calB|\\
   (\text{by~\eqref{eq:stirling:up} and~\eqref{eq:stirling:down}}) &\geq &
   \left[ r\log_2(r/e)+ \frac 1 2 \log_2 r+\log_2 \sqrt{2\pi} \right]\\
   &&
- \left[
   r\log_2(r/e)-rH[\mv A_{\bar B}]+ \sum_{\mv a_{\bar B}} \left( \log_2 e+\frac
   1 2 \log_2 (r\cdot p(\mv a_{\bar B})) \right)
   \right]-\log_2|\calB|\\
   &=&
   rH[\mv A_{\bar B}] + \log_2 \sqrt{2\pi}+\frac 1 2 \log_2 r - \frac{1}{2}\sum_{\mv
   a_{\bar B}} \log_2
   (re^2 p(\mv a_{\bar B}))-\log_2|\calB|\\
   &=&
   rH[\mv A_{\bar B}] + \log_2 \sqrt{2\pi}+\frac 1 2 \log_2 r
   - \frac{1}{2}\underbrace{\sum_{\mv a_{\bar B}} \log_2 (re^2)}_{\leq d\log_2(re^2)\text{ by~\eqref{eq:count-a_F-d}}}
   - \frac{1}{2}\sum_{\mv a_{\bar B}} \underbrace{\log_2 p(\mv a_{\bar B})}_{\leq 0}
   -\log_2|\calB|\\
   &\geq&
   rH[\mv A_{\bar B}] + \frac 1 2 \log_2 r - \frac 1 2 d \log_2
   (re^2)-\log_2|\calB|\\
&=& r(1+\delta)H[\mv A_{\bar B}] -\delta r H[\mv A_{\bar B}] - \frac 1 2 d
   \log_2 (r) + \frac 1 2 (\log_2 r - d\log_2(e^2))-\log_2|\calB|\\
(\text{by~\eqref{eq:log2r}})&\geq& r(1+\delta)H[\mv A_{\bar B}] -\delta r H[\mv A_{\bar B}] - \frac 1 2 d
   \log_2 (r)
   -\log_2|\calB|\\
   (\text{due to~\eqref{eqn:large:r}})   &\geq& r(1+\delta)H[\mv A_{\bar B}] - 2\delta r H[\mv A_{\bar B}]\\
   &=& \left( 1 - \frac{2\delta}{1+\delta}\right) r(1+\delta)H[\mv
   A_{\bar B}].\\
   &=& (1-\epsilon)  \cdot k \cdot H[\mv A_{\bar B}].
\end{eqnarray*}
\ep

\subsection{The polymatroid bound for disjunctive datalog is not tight}
\label{subsec:polymatroid:bound:ddl}

\blmm[Part $(iii)$ of Theorem~\ref{thm:intro:size:bound:disjunctive}]
The polymatroid bound for a disjunctive datalog rule is not tight, even if the
input constraints are only cardinality constraints, and all cardinality
upperbounds are identical.
\elmm
\bp
We first give a short proof for the case when the upperbounds vary, to
illustrate the main idea.
The non-Shannon inequality~\eqref{eqn:zy:non:shannon:h}
along with the following three submodularity inequalities
\begin{align*}
   h(AC) &\leq h(A)+h(C)
   && h(XC) \leq h(X)+h(C)
   && h(YC) \leq h(Y)+h(C)
\end{align*}
imply that the following hold for all entropic functions $h$ on five variables
$X,Y,A,B,C$,
\begin{align*}
   & h(AB)+4h(AXY) +h(BXY) \\
   & \leq 3h(XY)+3h(AX) +3h(AY)+h(BX)+h(BY) - h(A)- 2h(X)-2h(Y)\\
   & \leq 3h(XY)+3h(AX) +3h(AY)+h(BX)+h(BY) - h(AC)- 2h(XC)-2h(YC)+5h(C).
\end{align*}
Moving all negative terms to the left hand side, we have equivalently
\begin{multline}
h(AB)+4h(AXY) +h(BXY) + h(AC)+ 2h(XC)+2h(YC)\\
\leq 3h(XY)+3h(AX) +3h(AY)+h(BX)+h(BY)+5h(C).
   \label{eqn:simple:gap:bound}
\end{multline}
Now, consider the disjunctive datalog rule
\begin{align*}
   P:\ \ \ & T_1(AB)\vee T_2(AXY)\vee T_3(BXY)\vee T_4(AC) \vee T_5(XC)\vee T_6(YC)\\
   &  \qquad \qquad \cd
   R_1(XY)\wedge R_2(AX)\wedge R_3(AY) \wedge R_4(BX)\wedge R_5(BY)\wedge R_6(C),
\end{align*}
with the following cardinality bounds: $|R_1|,|R_2|,|R_3|,|R_4|, |R_5|\leq N^3$
and $|R_6|\leq N^2$. From~\eqref{eqn:simple:gap:bound}, the entropic bound for
the disjunctive datalog rule $P$ above is upper-bounded by
\begin{align}
   \outputbound_{\overline\Gamma^*_5 \cap \hcc}(P)
   &= \max_{h \in \overline\Gamma^*_5 \cap \hcc} \min
   \{ h(AB),h(AXY),h(BXY),h(AC),h(XC),h(YC) \} \nonumber\\
   &\leq \max_{h \in \overline\Gamma^*_5 \cap \hcc}\frac{1}{11}\bigl(h(AB)+4h(AXY) +h(BXY) + h(AC)+ 2h(XC)+2h(YC)\bigr)
   \nonumber\\
   &\leq \max_{h \in \overline\Gamma^*_5 \cap \hcc}\frac{1}{11}\bigl(3h(XY)+3h(AX) +3h(AY)+h(BX)+h(BY)+5h(C)\bigr)
   \nonumber\\
   &\leq \frac{1}{11}\bigl(11 \log N^3+5\log N^2\bigr) \nonumber\\
   &= \frac{43}{11} \log N.\label{eqn:43:11}
\end{align}
On the other hand, consider the polymatroid $\hat h = \log N \cdot h$, where $h$
is the polymatroid shown in Figure~\ref{fig:yeung}. (This is the same $\hat h$ as
in the proof of Theorem~\ref{thm:glvv:not:tight}.) It is easy to check that
$\hat h$ satisfies all the cardinality constraints, and
$\hat h(AB) =\hat h(AXY) =\hat h(BXY) =\hat h(AC) =\hat h(XC) =\hat h(YC) =
4\log N.$
Hence, $\outputbound_{\Gamma_5 \cap \hcc}(P) \geq 4\log N > \frac{43}{11}\log N$.

Next, we prove the stronger statement by using the fact that
inequality~\eqref{eqn:simple:gap:bound} holds for any entropic function $h$ on
$5$ variables $\{A,B,X,Y,C\}$.
Now, consider an entropic function $h$ on $8$ variables
$A',B',X',Y',A,B,X,Y$; then the restriction of $h$ on the $5$-variable sets
$\{A',B',X',Y',A\}$,
$\{A',B',X',Y',X\}$,
and
$\{A',B',X',Y',Y\}$
all satisfy~\eqref{eqn:simple:gap:bound}. In particular, we have the following
inequalities:
\begin{eqnarray}
   & & h(A'B')+4h(A'X'Y') +h(B'X'Y') + h(A'A)+ 2h(X'A)+2h(Y'A) \nonumber \\
   & \leq & 3h(X'Y')+3h(A'X') +3h(A'Y')+h(B'X')+h(B'Y')+5h(A), \label{eqn:49}\\
   & & h(A'B')+4h(A'X'Y') +h(B'X'Y') + h(A'X)+ 2h(X'X)+2h(Y'X) \nonumber \\
   & \leq & 3h(X'Y')+3h(A'X') +3h(A'Y')+h(B'X')+h(B'Y')+5h(X), \label{eqn:50}\\
   & & h(A'B')+4h(A'X'Y') +h(B'X'Y') + h(A'Y)+ 2h(X'Y)+2h(Y'Y) \nonumber \\
   & \leq & 3h(X'Y')+3h(A'X') +3h(A'Y')+h(B'X')+h(B'Y')+5h(Y).\label{eqn:51}
\end{eqnarray}
Add $5 \times$ inequality~\eqref{eqn:zy:non:shannon:h} with $1 \times$
inequality~\eqref{eqn:49},
$2 \times$ inequality~\eqref{eqn:50}, and $2 \times$ inequality~\eqref{eqn:51},
we obtain the following non-Shannon inequality:
\begin{eqnarray*}
   & & 5\bigl[ h(AB)+4h(AXY) +h(BXY) \bigr] \\
   &+& \bigl[ h(A'B')+4h(A'X'Y') +h(B'X'Y') + h(A'A)+ 2h(X'A)+2h(Y'A) \bigr] \\
   &+& 2\bigl[ h(A'B')+4h(A'X'Y') +h(B'X'Y') + h(A'X)+ 2h(X'X)+2h(Y'X) \bigr]\\
   &+& 2 \bigl[h(A'B')+4h(A'X'Y') +h(B'X'Y') + h(A'Y)+ 2h(X'Y)+2h(Y'Y) \bigr]\\
   & \leq & 5 \bigl[ 3h(XY)+3h(AX) +3h(AY)+h(BX)+h(BY) \bigr] \\
   &+& \bigl[ 3h(X'Y')+3h(A'X') +3h(A'Y')+h(B'X')+h(B'Y')) \bigr]\\
   &+& 2 \bigl[3h(X'Y')+3h(A'X') +3h(A'Y')+h(B'X')+h(B'Y') \bigr]\\
   &+& 2 \bigl[3h(X'Y')+3h(A'X') +3h(A'Y')+h(B'X')+h(B'Y') \bigr].
\end{eqnarray*}
The inequality simplifies to
\begin{multline}
   5 \bigl[ h(AB)+4h(AXY) +h(BXY) + h(A'B')+4h(A'X'Y') +h(B'X'Y') \bigr]\\
   +  h(A'A)+ 2h(X'A)+2h(Y'A) \\
   + 2h(A'X)+ 4h(X'X)+4h(Y'X)\\
   + 2h(A'Y)+ 4h(X'Y)+4h(Y'Y) \\
   \leq 5 \bigl[ 3h(XY)+3h(AX) +3h(AY)+h(BX)+h(BY) + \\
    3h(X'Y')+3h(A'X') +3h(A'Y')+h(B'X')+h(B'Y') \bigr]
    \label{eqn:non:shannon:big:ineq}
\end{multline}
Now, consider the disjunctive datalog rule $P$:
\begin{multline}
   T_1(AB) \vee T_2(AXY) \vee T_3(BXY) \vee T_4(A'B')
   \vee T_5(A'X'Y') \vee T_6(B'X'Y') \\
   \vee T_7(A'A) \vee T_8(X'A) \vee T_9(Y'A) \\
   \vee T_{10}(A'X) \vee T_{11}(X'X) \vee T_{12}(Y'X)\\
   \vee T_{13}(A'Y) \vee T_{14}(X'Y) \vee T_{15}(Y'Y) \\
   \cd R_1(XY) \wedge R_2(AX)  \wedge R_3(AY) \wedge R_4(BX) \wedge
   R_5(BY) \\
   \wedge
    R_6(X'Y') \wedge R_7(A'X')  \wedge R_8(A'Y') \wedge R_9(B'X') \wedge R_{10}(B'Y'),
\end{multline}
with a uniform cardinality bound $|R_i| \leq N^3$ for all $i \in [10]$.
Using the same averaging trick we used in~\eqref{eqn:43:11}, from
inequality~\eqref{eqn:non:shannon:big:ineq} we get:
\[ \outputbound_{\overline\Gamma^*_8 \cap \hcc}(P)
\leq \frac{110}{85} \log N^3 = \frac{330}{85}{\log N} < 4 \log N.
\]
\colorlet{myGreen}{green!60!black}
\begin{figure}[th!]
\centering
   \begin{tikzpicture}[scale=0.6]
\node[] at (9, 1) (top){$ABXYA'B'X'Y'$}; \node[myGreen, left=-.15 of top] {4};

\node[] at (0, -2) (AX){$AX$}; \node[myGreen, left=-.15 of AX] {3};
\node[] at (2, -2) (BX){$BX$}; \node[myGreen, left=-.15 of BX] {3};
\node[] at (4, -2) (XY){$XY$}; \node[myGreen, left=-.15 of XY] {3};
\node[] at (6, -2) (AY){$AY$}; \node[myGreen, left=-.15 of AY] {3};
\node[] at (8, -2) (BY){$BY$}; \node[myGreen, left=-.15 of BY] {3};

\node[] at (.5, -4) (X){$X$}; \node[myGreen, left=-.15 of X] {2};
\node[] at (.5+7/3, -4) (A){$A$}; \node[myGreen, left=-.15 of A] {2};
\node[] at (.5+14/3, -4) (B){$B$}; \node[myGreen, left=-.15 of B] {2};
\node[] at (7.5, -4) (Y){$Y$}; \node[myGreen, left=-.15 of Y] {2};

\node[] at (9, -7) (0){$\emptyset$}; \node[myGreen, left=-.15 of 0] {0};

\path[] (0) edge (X);
\path[] (0) edge (A);
\path[] (0) edge (B);
\path[] (0) edge (Y);

\path[] (X) edge (AX);
\path[] (X) edge (BX);
\path[] (X) edge (XY);
\path[] (Y) edge (BY);
\path[] (Y) edge (AY);
\path[] (Y) edge (XY);
\path[] (A) edge (AX);
\path[] (A) edge (AY);
\path[] (B) edge (BX);
\path[] (B) edge (BY);

\path[] (AX) edge (top);
\path[] (BX) edge (top);
\path[] (XY) edge (top);
\path[] (AY) edge (top);
\path[] (BY) edge (top);

\node[] at (10, -2) (AX2){$A'X'$}; \node[myGreen, left=-.15 of AX2] {3};
\node[] at (12, -2) (BX2){$B'X'$}; \node[myGreen, left=-.15 of BX2] {3};
\node[] at (14, -2) (XY2){$X'Y'$}; \node[myGreen, left=-.15 of XY2] {3};
\node[] at (16, -2) (AY2){$A'Y'$}; \node[myGreen, left=-.15 of AY2] {3};
\node[] at (18, -2) (BY2){$B'Y'$}; \node[myGreen, left=-.15 of BY2] {3};

\node[] at (10.5, -4)      (X2){$X'$}; \node[myGreen, left=-.15 of X2] {2};
\node[] at (10.5+7/3, -4)  (A2){$A'$}; \node[myGreen, left=-.15 of A2] {2};
\node[] at (10.5+14/3, -4) (B2){$B'$}; \node[myGreen, left=-.15 of B2] {2};
\node[] at (17.5, -4)     (Y2){$Y'$}; \node[myGreen, left=-.15 of Y2] {2};

\path[] (0) edge (X2);
\path[] (0) edge (A2);
\path[] (0) edge (B2);
\path[] (0) edge (Y2);

\path[] (X2) edge (AX2);
\path[] (X2) edge (BX2);
\path[] (X2) edge (XY2);
\path[] (Y2) edge (BY2);
\path[] (Y2) edge (AY2);
\path[] (Y2) edge (XY2);
\path[] (A2) edge (AX2);
\path[] (A2) edge (AY2);
\path[] (B2) edge (BX2);
\path[] (B2) edge (BY2);

\path[] (AX2) edge (top);
\path[] (BX2) edge (top);
\path[] (XY2) edge (top);
\path[] (AY2) edge (top);
\path[] (BY2) edge (top);
\end{tikzpicture}
\caption{A polymatroid $h$ (shown in {\color{myGreen}green})
over
$8$ variables $\{A,B,X,Y,A',B',X',Y'\}$, where for any
set whose value is not shown, $h$ takes the same value as the
smallest set in the figure containing it.}
\Description{A polymatroid $h$ (shown in {\color{myGreen}green})
over
$8$ variables $\{A,B,X,Y,A',B',X',Y'\}$, where for any
set whose value is not shown, $h$ takes the same value as the
smallest set in the figure containing it.}
\label{fig:yeung2}
\end{figure}
On the other hand, consider the function $h$ shown in Figure~\ref{fig:yeung2}.
This is a set function on $8$ variables $\{A,B,X,Y,A',B',X',Y'\}$, where for any
set whose (green) value is not shown, $h$ takes on the same value as the
smallest set in the figure containing it.
It is easy to check that this is a polymatroid, and the polymatroid $\hat h =
\log N \cdot h$ satisfies all input cardinality constraints; furthermore,
$\hat h(AB) = \hat h(AXY) = \cdots = \hat h(Y'Y) = 4 \log N$. Hence,
$\outputbound_{\Gamma_8 \cap \hcc}(P) \geq 4\log N$.

\ep

\section{Shannon flow inequalities}
\label{sec:ps}

The $\panda$ algorithm is built on the notion of a ``proof sequence'' for a
class of Shannon-type inequalities called the {\em Shannon flow inequalities}.

\bdefn
Let $\calB \subseteq 2^{\calV}$ denote a collection of nonempty subsets of $\calV$.
Let $\calC \subseteq 2^{\calV} \times 2^{\calV}$ denote a collection of
pairs $(X,Y)$ such that $\emptyset \neq X \subset Y \subseteq \calV$.
Let $\vec\lambda_\calB = (\lambda_F)_{F\in\calB} \in \Q_+^\calB$
and
$\vec\delta_\calC = (\delta_{Y|X})_{(X,Y) \in\calC} \in \Q_+^\calC$
denote two vectors of non-negative rationals.
For any polymatroid $h$, let $h(Y|X)$ denote $h(Y)-h(X)$.\footnote{The
quantity $h(Y|X)$ is the polymatroid-analog of the conditional entropy
$H[Y|X] = H[Y] - H[X]$.}
If the following inequality
\begin{equation}
\sum_{B\in \calB} \lambda_B \cdot h(B)
\leq
\sum_{(X,Y) \in \calC} \delta_{Y|X} \cdot h(Y|X)
\defeq
\sum_{(X,Y) \in \calC} \delta_{Y|X} \cdot (h(Y)-h(X))
\label{eqn:sfi}
\end{equation}
holds for all $h \in \Gamma_n$ (i.e. for all polymatroids), then it is called
a {\em Shannon flow inequality}.
The set $\calB$ is called the set of {\em targets} of the flow inequality.
\edefn
For a simple illustration, the following shows a Shannon-flow
inequality that we proved in Example~\ref{ex:intro:2b}:
\begin{align*}
  h(A_1A_2A_3) + h(A_2A_3A_4) \leq h(A_1A_2) + h(A_2A_3) + h(A_3A_4)
\end{align*}
Section~\ref{subsec:sfi:motivations} motivates the study of these
inequalities. Section~\ref{subsec:sfi:ps} explains why they are called ``flow''
inequalities.

\subsection{Motivations}
\label{subsec:sfi:motivations}

Fix a disjunctive datalog rule $P$ of the
form~\eqref{eqn:disjunctive:datalog:query}
with degree constraints $\dc$.
Abusing notations, we write $(X,Y) \in \dc$ whenever
$(X,Y,N_{Y|X}) \in \dc$.
In particular the set $\dc$ can play the role of the generic set $\calC$
in the definition of Shannon flow inequality.
To explain where the Shannon flow inequalities come from, we study
the ($\log$) polymatroid bound~\eqref{eqn:intro:polymatroid:disjunctive:2}
for $P$, which was defined by~\eqref{eqn:intro:output:bound} with $\calF$
chosen to be $\Gamma_n \cap \hdc$.
The bound $\outputbound_{\Gamma_n \cap \hdc}(P)$ is the optimal objective value of
the following optimization problem:
\begin{align}
   \max & &\min_{B\in\calB} h(B) \label{eqn:ddl:target:first:version}\\
   \text{such that} & &h(Y)-h(X)                     &\leq n_{Y|X}, & (X,Y,N_{Y|X}) \in \dc \nonumber\\
                    & &h(I\cup J | J) - h(I|I\cap J) &\leq 0,       & I \incomp J \nonumber\\
                    & &h(X) - h(Y)                   &\leq 0,       &\emptyset \neq X \subset Y \subseteq \calV \nonumber\\
                    & &h(Z)                          & \geq 0,      & \emptyset \neq Z \subseteq \calV.  \nonumber
\end{align}
(Recall that implicitly we have $h(\emptyset)=0$, and that $n_{Y|X}\defeq \log_2 N_{Y|X}$.)
Here, $I \incomp J$ means $I \not\subseteq J$ and $J \not\subseteq I$.
The optimization problem above is not easy to handle.
Lemma~\ref{lmm:lambda:1:reformulation} below shows that we can
reformulate the above maximin optimization problem into a linear program:

\blmm\label{lmm:lambda:1:reformulation}
There exists a non-negative vector
$\vec\lambda = (\lambda_B)_{B\in\calB}$, with
$\norm{\vec\lambda}_1=1$, such that
\begin{equation}
\outputbound_{\Gamma_n \cap \hdc}(P) =
   \max_{h \in \Gamma_n \cap \hdc} \sum_{B\in\calB}\lambda_B \cdot h(B).
\label{eqn:ddl:target}
\end{equation}
%
\elmm

Instead of proving~\ref{lmm:lambda:1:reformulation} directly, we prove a
slightly more general lemma:
\blmm[A generalization of Lemma~\ref{lmm:lambda:1:reformulation}]
\label{lmm:rewrite}
Let $\mv A \in \Q^{\ell\times m}, \mv b \in \R^\ell$, and $\mv C \in \Q_+^{m\times
p}$ be a matrix with columns $\mv c_1,\dots,\mv c_p$. Consider the following
maximin optimization problem:
\begin{equation}
   \max \{ \min_{k\in [p]} \mv c^\top_k \mv x \suchthat
   \mv A\mv x \leq \mv b, \mv x \geq \mv 0\}
   \label{eqn:LP1}
\end{equation}
If problem~\eqref{eqn:LP1}'s objective value is positive and bounded,
then there exists a
vector $\vec\lambda \in \Q_+^p$ satisfying the following conditions:
\bi
 \item[(a)] $\norm{\vec\lambda}_1 = 1$.
 \item[(b)] The problem~\eqref{eqn:LP1} has the same optimal objective value
as the following linear program:
\begin{equation}
   \max \{ (\mv C \vec\lambda)^\top \mv x \suchthat
   \mv A\mv x \leq \mv b, \mv x \geq \mv 0\}  \label{eqn:LP2}
\end{equation}
\ei
\elmm
Note that $\outputbound_{\Gamma_n \cap \hdc}(P)$ formulated
in~\eqref{eqn:ddl:target:first:version}
has the same form as the optimization problem in~\eqref{eqn:LP1},
thus making Lemma~\ref{lmm:lambda:1:reformulation} a special case of
Lemma~\ref{lmm:rewrite}. Next we prove the latter.
\bp[Proof of Lemma~\ref{lmm:rewrite}]
Let $\mv 1_p \in \R^p$ be the all-$1$ vector.
We first reformulate problem~\eqref{eqn:LP1} with an equivalent LP:
\begin{equation}
   \max \{ w \suchthat
   \mv A\mv x \leq \mv b, \mv 1_pw - \mv C^\top \mv x \leq \mv 0,
       \quad \mv x \geq \mv 0, w \geq 0\}
   \label{eqn:LP3}
\end{equation}
The dual of~\eqref{eqn:LP3} is
\begin{equation}
   \min \{ \mv b^\top\mv y \suchthat
   \mv A^\top\mv y \geq \mv C \mv z,
   \mv 1_p^\top\mv z \geq 1,
   \quad \mv y \geq \mv 0, \mv z \geq 0\}
   \label{eqn:LP4}
\end{equation}
Let $(w^*, \mv x^*)$ and $(\mv z^*,\mv y^*)$ be a pair of primal-optimal and
dual-optimal solutions to~\eqref{eqn:LP3} and \eqref{eqn:LP4} respectively.
Due to complementary slackness of the \eqref{eqn:LP3} and \eqref{eqn:LP4} pair,
$w^*>0$ implies
$\mv 1_p^\top\mv z^*=1$
and $(\mv 1_p w^*-\mv C^\top \mv x^*)^\top \mv z^*=0$.
We know $w^*>0$ because problem~\eqref{eqn:LP1} has a positive optimal value.
It follows that
$\norm{\mv z^*}_1=1$, and
$w^* = (\mv C^\top\mv x^*)^\top \mv z^*$.

Next, we show $\vec\lambda=\mv z^*$ satisfies $(a)$
and $(b)$. Condition $(a)$ follows from $\norm{\mv z^*}_1=1$. To show $(b)$,
note that $\mv x^*$ is a feasible solution to~\eqref{eqn:LP2} with objective
value $(\mv C \vec\lambda)^\top \mv x^*
= (\mv C \mv z^*)^\top \mv x^* = (\mv C^\top\mv x^*)^\top \mv z^* = w^*$.
Furthermore, for any $\mv x$ feasible to~\eqref{eqn:LP2}, we have
$(\mv C \vec\lambda)^\top \mv x
= (\mv C \mv z^*)^\top \mv x
\leq (\mv A^\top \mv y^*)^\top \mv x
= (\mv A \mv x )^\top \mv y^*
\leq \mv b^\top \mv y^* = w^*$,
where the first inequality follows from $(\mv z^*,\mv y^*)$ being feasible to
\eqref{eqn:LP4} along with $\mv x \geq \mv 0$ from~\eqref{eqn:LP2},
and the second inequality follows from $\mv A\mv x \leq \mv b$ in~\eqref{eqn:LP2}.
\ep

Along with Farkas's lemma~\cite{MR88m:90090} (re-stated in the proof of
Proposition~\ref{prop:sfi:ddl:target} below),
the linear program (LP) on the right hand side
of~\eqref{eqn:ddl:target} gives rise to Shannon flow inequalities. We first
need the dual LP of~\eqref{eqn:ddl:target}.  Associate a dual variable
$\delta_{Y|X}$ to each degree constraint, a variable $\sigma_{I,J}$ to each
submodularity constraint, and a variable $\mu_{X,Y}$ to each monotonicity
constraint.
The dual of the RHS of~\eqref{eqn:ddl:target} is
\begin{align}
   \min        &&\sum_{(X,Y)\in \dc}n_{Y|X} \cdot \delta_{Y|X}
   \label{eqn:dual:ddl:target}\\
   \text{such that} &&\flow(B)          & \geq \lambda_B, & \forall B \in \calB \nonumber\\
               &&\flow(Z)              & \geq 0, & \emptyset \neq Z \subseteq \calV.\nonumber\\
               &&(\vec\delta, \vec\sigma, \vec\mu) &\geq\mv 0.\nonumber
\end{align}
(recall that $n_{Y|X}\defeq\log_2 N_{Y|X}$), where
for any $\emptyset \neq Z \in 2^\calV$, $\flow(Z)$ is defined by
\begin{multline}
   \flow(Z) \defeq
   \sum_{X: (X,Z)\in \dc}\delta_{Z|X}-
   \sum_{Y: (Z,Y)\in \dc}\delta_{Y|Z}+
   \sum_{\substack{I\incomp J\\I\cap J = Z}}\sigma_{I,J}\\
   +
   \sum_{\substack{I\incomp J\\I\cup J = Z}}\sigma_{I,J}
   - \sum_{J: J\incomp Z}\sigma_{Z,J}-
   \sum_{X: X\subset Z}\mu_{X,Z}+ \sum_{Y: Z\subset Y}\mu_{Z,Y}.
   \label{eqn:flow}
\end{multline}
\begin{figure}[!ht]
\centering \begin{tikzpicture}[domain=0:20, 
every node/.style={font=\large}, scale=0.8, every node/.style={scale=0.8}]
\node[circle,draw] at (5,5) (Z) {$Z$};
\node[] at (3,8) (Y) {$Y$};
\node[] at (3,2) (X) {$X$};
\node[] at (5,7) (I) {$I$};
\node[] at (7,6) (J) {$J$};
\node[] at (7,8) (IJ) {$I\cup J$};
\node[] at (5,3) (I') {$I'$};
\node[] at (7,4) (J') {$J'$};
\node[] at (7,2) (IJ') {$I'\cap J'$};
\node[left = .2 of Z] (eq) {$I\cap J = I' \cup J' = Z$};
\node[] at (10,5) (subeq) {$X \subset Z \subset Y$};

\node[color=black!55!green] at (6,6.5) (sij) {$+\sigma_{I,J}$};
\node[color=black!55!green] at (6,3.5) (sij) {$+\sigma_{I',J'}$};

\draw[thick] (Z) -- (Y) node [midway,left=5pt,color=black!55!green] {$+\mu_{Z,Y}$};
\draw[thick] (Z) -- (X) node [midway,left=5pt,color=black!55!green] {$+\delta_{Z|X}$};
\draw (Z) -- (I);
\draw (Z) -- (J);
\draw (Z) -- (J');
\draw (Z) -- (I');
\draw (I) -- (IJ);
\draw (J) -- (IJ);
\draw (I') -- (IJ');
\draw (J') -- (IJ');

\node[circle,draw] at (14,5) (Z1) {$Z$};
\node[] at (13,8) (Y1) {$Y$};
\node[] at (13,2) (X1) {$X$};
\node[] at (16,5) (J1) {$J$};
\node[] at (15,3) (IJ1) {$Z\cap J$};
\node[] at (15,7) (IJ2) {$Z\cup J$};

\node[color=black!10!red] at (15,5) (sij1) {$-\sigma_{Z,J}$};

\draw[thick] (Z1) -- (Y1) node [midway,left=5pt,color=black!10!red] {$-\delta_{Y|Z}$};
\draw[thick] (Z1) -- (X1) node [midway,left=5pt,color=black!10!red] {$-\mu_{X,Z}$};
\draw (Z1) -- (IJ1);
\draw (Z1) -- (IJ2);
\draw (J1) -- (IJ2);
\draw (J1) -- (IJ1);
\end{tikzpicture}
\caption{Contributions of coefficients to $\flow(Z)$.
The left and right parts show the positive and negative contributions respectively.}
\Description{Contributions of coefficients to $\flow(Z)$.
The left and right parts show the positive and negative contributions respectively.}
\label{fig:flow}
\end{figure}
Again, recall that $I \incomp J$ means $I \not\subseteq J$ and $J \not\subseteq I$.
Note that the function $\flow : 2^\calV \to \Q_+$ is also a function of the
dual variables $(\vec\delta,\vec\sigma,\vec\mu)$. However, we do not explicitly
write down this dependency to avoid heavy-loading notations.
Figure \ref{fig:flow} illustrates the contributions of various coefficients to
$\flow(Z)$ for a given set $Z$. It is helpful in the rest of the paper to
keep this picture in mind when we reason about balancing the $\flow$
(in)equalities.

Let $\mv h^* = (h^*_Z)_{Z\subseteq\calV}$ denote an optimal solution
to~\eqref{eqn:ddl:target},
and let $(\vec\delta^*, \vec\sigma^*, \vec\mu^*)$ denote an optimal solution
to~\eqref{eqn:dual:ddl:target}, then
$\sum_{B\in\calB} \lambda_B \cdot h^*(B) = \sum_{(X,Y)\in\dc}\delta^*_{Y|X}
\cdot n_{Y|X}$, by strong duality.
In particular, instead of solving for the primal optimal solution $\mv h^*$, we can look for
the dual optimal solution $(\vec\delta^*, \vec\sigma^*,\vec\mu^*)$.
One way to characterize {\em any}
dual feasible solution $(\vec\delta, \vec\sigma, \vec\mu)$,
is to use Farkas' lemma~\cite{MR88m:90090}, which in our context takes the
following form.
\bprop Given non-negative vectors $\vec\lambda_{\calB}$ and $\vec\delta_{\dc}$, the inequality
\begin{equation}
   \sum_{B\in\calB} \lambda_B \cdot h(B) \leq \sum_{(X,Y)\in\dc}\delta_{Y|X} \cdot h(Y|X)
\label{eqn:sfi:ddl:target}
\end{equation}
is a Shannon flow inequality if and only if there exist $\vec\sigma$ and
$\vec\mu$ such that $(\vec\delta, \vec\sigma, \vec\mu)$ is feasible to the dual
LP~\eqref{eqn:dual:ddl:target}.
\label{prop:sfi:ddl:target}
\eprop
\bp
There are {\em many}
variants of Farkas' lemma~\cite{MR88m:90090}. We use a version whose proof
we also reproduce here because the proof is very short.

Let $\mv A \in \R^{m\times n}$ be a matrix and $\mv c\in\R^n$ be a vector.
Let $P = \{ \mv x \suchthat \mv A\mv x \leq \mv 0, \mv x \geq \mv 0\}$ be a polyhedron
and $D = \{ \mv y \suchthat \mv A^\top\mv y \geq \mv c, \mv y \geq \mv 0\}$ be
the dual polyhedron.
Then, a variant of Farkas' lemma states that
$D$ is non-empty if and only if there is {\em no} $\mv x \in P$ such that
$\mv c^\top\mv x > 0$.
To see this, note that the system
$\{ \mv c^\top\mv x > 0, \mv x \in P\}$ is
infeasible iff $\max \{ \mv c^\top\mv x \suchthat \mv x \in P \} = 0$,
which by strong duality is equivalent to
$\min \{ \mv 0^\top\mv y \suchthat \mv y \in D\}$
is feasible, which is the same as $D$ is non-empty.

Now, to see why the above variant of Farkas' lemma implies
Proposition~\ref{prop:sfi:ddl:target}, we note that~\eqref{eqn:sfi:ddl:target}
holds for all polymatroids iff
$\{ (\vec\lambda_\calB-\vec\delta_{\dc})^\top\mv h > 0 \suchthat \mv h \in
\Gamma_n\}$ is infeasible; now we are in the exact setting of the above variant of
Farkas' lemma and the rest follows trivially.
\ep

Note that inequality~\eqref{eqn:sfi:ddl:target} holds when $\vec\delta = \vec\delta^*$,
in which case the Shannon flow inequality implies the upper bound
\[ \sum_{B\in\calB} \lambda_B \cdot h(B) \leq \sum_{(X,Y)\in\dc}\delta^*_{Y|X}
\cdot n_{Y|X}, \]
where the right-hand side is the optimal objective value of both the primal
and the dual.

\subsection{Proof sequences}
\label{subsec:sfi:ps}

A key observation from our prior work~\cite{csma} was
that we can turn a proof of a special case of inequality~\eqref{eqn:sfi:ddl:target}
 into an
algorithm. The proof has to be performed in a sequential manner; and this
brings us to the concept of a proof sequence.
In this paper, we refine the proof sequence notion from~\cite{csma} in four
significant ways.
First, the definition of the proof sequence is different, allowing for a
simpler algorithm ($\panda$) than $\csma$ in~\cite{csma}.
Second, in~\cite{csma} we left open whether  proof sequences are
a complete proof system, even for special Shannon-flow inequalities; the $\csma$
algorithm used a specific workaround to achieve optimality even
without proving completeness of proof sequences. Our most important
contribution here is to prove completeness of our new proof sequence.
Third, we are able to bound the length of the proof sequence to be
polynomial
in the size of the linear program \eqref{eqn:dual:ddl:target}, as opposed to the doubly exponential length in~\cite{csma}.
Fourth, new technical ideas are introduced so that we can construct proof
sequences for the much more general Shannon flow
inequality~\eqref{eqn:sfi:ddl:target} (as opposed to the special case of ``output
inequality'' in~\cite{csma}, whose form is given in~\eqref{eqn:sfi:V:target}).

\bdefn[Conditional polymatroids]
Let $\calP \subseteq 2^\calV \times 2^\calV$ denote the set of all pairs $(X,Y)$
such that $\emptyset \subseteq X \subset Y \subseteq \calV$.
A vector $\mv f \in \R_+^\calP$ has coordinates indexed by pairs
$(X,Y) \in \calP$. We denote the corresponding coordinate value of $\mv f$
by $f(Y|X)$. The vector $\mv f$ is called a {\em conditional polymatroid}
iff there exists a polymatroid $h$ such that $f(Y|X) = h(Y)-h(X)$; and, we
say $h$ defines the conditional polymatroid $\mv f$.
Abusing notation somewhat, the conditional polymatroid defined by the polymatroid
$h$ is denoted by $\mv h$. In particular, $\mv h = (h(Y|X))_{(X,Y)\in\calP}$.
If $h$ is a polymatroid then $h(\emptyset)=0$, in which case we write $h(Y)$
instead of $h(Y|\emptyset)$.
\edefn

To formally define the notion of a proof sequence, we
rewrite the Shannon flow inequality~\eqref{eqn:sfi} as an inequality on conditional
polymatroids in the $\Q_+^\calP$ space.
We extend the vectors
$\vec\lambda_\calB \in \Q_+^\calB$
and
$\vec\delta_\calC \in \Q_+^\calC$
to become vectors
$\vec\lambda,\vec\delta$
in the $\Q_+^\calP$ space in the obvious way:
\begin{eqnarray*}
\lambda(Y|X)
&\defeq& \begin{cases}
\lambda_\calB(B) & \text{ when } Y=B, X = \emptyset\\
0 & \text{ otherwise.}
\end{cases}\\
\delta(Y|X) &\defeq&
\begin{cases}
\delta_\calC(Y|X) & \text{ when } (X,Y) \in \calC\\
0 & \text{ otherwise.}
\end{cases}
\end{eqnarray*}
Then, inequality~\eqref{eqn:sfi} can be written simply as
$\inner{\vec\lambda, \mv h} \leq \inner{\vec\delta,\mv h}.$
Note the crucial fact that, even though $\vec\lambda \in \Q_+^\calP$, for it to
be part of a Shannon flow inequality only the entries $\lambda_{B|\emptyset}$
can be positive.
We will often write $\lambda_B$ instead of $\lambda_{B|\emptyset}$.
These assumptions are implicit henceforth.
Proposition~\ref{prop:sfi:ddl:target} can now be written simply as:

\bprop
Given any $\vec\lambda, \vec\delta \in \Q_+^\calP$, where
$\lambda_{Y|X} > 0$ implies $X= \emptyset$,
the inequality $\inner{\vec\lambda,\mv h}
\leq \inner{\vec\delta,\mv h}$
is a Shannon flow inequality if and only if there exist $\vec\sigma$ and
$\vec\mu$ such that $(\vec\delta, \vec\sigma, \vec\mu)$ satisfy the
constraints
\[
   \flow(Z) \geq \lambda_Z, \forall\emptyset \neq Z \subseteq \calV \text{ and }
(\vec\delta, \vec\sigma, \vec\mu) \geq\mv 0
\]
\label{prop:sfi}
\eprop
Note that the following set is a polyhedron and it is independent of the values $n_{Y|X}$:
\begin{equation}
   \bigl\{ (\vec\delta,\vec\sigma,\vec\mu) \suchthat
\flow(Z) \geq \lambda_Z, \forall\emptyset \neq Z \subseteq \calV \text{ and }
(\vec\delta, \vec\sigma, \vec\mu) \geq\mv 0
   \bigr\}.
\label{eqn:dual:polyhedron}
\end{equation}

The conditional polymatroids satisfy four basic linear inequalities:
\begin{eqnarray*}
h(I\cup J|J)-h(I|I\cap J)&\leq&0, \myskip I \incomp J\text{ (submodularity)}\\
-h(Y|\emptyset)+h(X|\emptyset)&\leq&0, \myskip X\subset Y\text{ (monotonicity)}\\
h(Y|\emptyset)-h(Y|X)-h(X|\emptyset)&\leq&0, \myskip X\subset Y\text{ (composition)}\\
-h(Y|\emptyset)+h(Y|X)+h(X|\emptyset)&\leq&0, \myskip X\subset Y\text{ (decomposition)}
\end{eqnarray*}
For every $I\incomp J$, define a vector
$\mv s_{I,J} \in\Q_+^\calP$, and for every $X\subset Y$, define three vectors $\mv m_{X,Y}, \mv c_{X,Y},\mv d_{Y,X} \in
\Q_+^\calP$ such that the linear inequalities above can be written correspondingly in
dot-product form:
\begin{eqnarray}
   \inner{\mv s_{I,J},\mv h}&\leq&0, \myskip I \incomp J\text{ (submodularity)} \label{eqn:sij}\\
   \inner{\mv m_{X,Y},\mv h}&\leq&0, \myskip X\subset Y\text{ (monotonicity)}
   \label{eqn:mxy}\\
   \inner{\mv c_{X,Y},\mv h}&\leq&0, \myskip X\subset Y\text{ (composition)}
   \label{eqn:cxy}\\
   \inner{\mv d_{Y,X},\mv h}&\leq&0, \myskip X\subset Y\text{ (decomposition) }
   \label{eqn:dyx}
\end{eqnarray}

\bdefn[Proof sequence]
\label{defn:ps}
A {\em proof sequence} of a Shannon flow inequality
$\inner{\vec\lambda, \mv h} \leq \inner{\vec\delta,\mv h}$, is a sequence
$(w_1\mv f_1,\dots,w_\ell\mv f_\ell)$ satisfying the following:
\bi
 \item[(1)] $\mv f_i \in \{$
   $\mv s_{I,J}$,
   $\mv m_{X,Y}$,
   $\mv c_{X,Y}$,
$\mv d_{Y,X} \}$ for all $i \in [\ell]$.
The $\mv f_i$ are called {\em proof steps}.
 \item[(2)] $w_i \in \R_+$, $i\in [\ell]$ are the corresponding {\em weights} of
    the proof steps.
 \item[(3)] All the vectors $\vec\delta_0 \defeq \vec\delta, \vec\delta_1, \dots, \vec\delta_\ell$
    defined by $\vec\delta_{i} = \vec\delta_{i-1} + w_i \cdot \mv f_i$, $i \in
    [\ell]$ are component-wise non-negative.
 \item[(4)] Furthermore, $\vec\delta_\ell \geq \vec\lambda$ (component-wise comparisons).
\ei
\edefn

Due to the linear inequalities~\eqref{eqn:sij}--\eqref{eqn:dyx} above, if
$(w_1\mv f_1,\cdots,w_\ell \mv f_\ell)$ is a proof sequence for the Shannon flow
inequality $\inner{\vec\lambda,\mv h} \leq \inner{\vec\delta,\mv h}$, then
\[ \inner{\vec\delta,\mv h}=
\inner{\vec\delta_0,\mv h} \geq
\inner{\vec\delta_1,\mv h} \geq
\cdots
\inner{\vec\delta_\ell,\mv h} \geq
\inner{\vec\lambda,\mv h}.
\]
The proof step $\mv s_{I,J}$ is called a {\em submodularity step},
$\mv m_{X,Y}$ a {\em monotonicity step},
$\mv d_{Y,X}$ a {\em decomposition step}, and
$\mv c_{X,Y}$ a {\em composition step}.

\bdefn[Witness]
Let $\inner{\vec\lambda,\mv h} \leq \inner{\vec\delta,\mv h}$ be a Shannon flow
inequality. From Proposition~\ref{prop:sfi:ddl:target} there exists
$(\vec\sigma,\vec\mu)$ such that $(\vec\delta,\vec\sigma,\vec\mu)$ belongs to
the polyhedron~\eqref{eqn:dual:polyhedron}.
We call $(\vec\sigma,\vec\mu)$ a {\em witness} for the Shannon flow inequality.
\edefn

We next show one way to construct a proof sequence
for any given  Shannon flow inequality. (See Appendix~\ref{app:sec:ps}
for more advanced constructions of shorter proof sequences.)
\bthm[Constructing a proof sequence]
Let $\inner{\vec\lambda, \mv h} \leq \inner{\vec\delta,\mv h}$ be a Shannon
flow inequality with witness $(\vec\sigma,\vec\mu)$.
There exists a proof sequence for the inequality $\inner{\vec\lambda, \mv h} \leq \inner{\vec\delta,\mv h}$ with length at most
$D(3\norm{\vec\sigma}_1+\norm{\vec\delta}_1+\norm{\vec\mu}_1)$, where
$D$ is the minimum common denominator
of all entries in $(\vec\lambda,\vec\delta,\vec\sigma,\vec\mu)$.
Moreover, such a proof sequence can be constructed in time polynomial in
$D(\norm{\vec\lambda}_1+2\norm{\vec\sigma}_1+\norm{\vec\delta}_1+\norm{\vec\mu}_1)$.
\label{thm:ps:construction:1}
\ethm
\bp
We induct on the quantity
\begin{equation}
   \ell(\vec\lambda,\vec\delta,\vec\sigma,\vec\mu) \defeq D\left(\norm{\vec\lambda}_1 +
2\norm{\vec\sigma}_1+\norm{\vec\delta}_1+\norm{\vec\mu}_1\right),
\label{eq:ps:construction:ell}
\end{equation}
which is an
integer.
The base case is when $\norm{\vec\lambda}_1=0$, which is trivial
because the inequality has a proof sequence of length $0$.
In the inductive step, assume $\norm{\vec\lambda}_1>0$, meaning
there must be some nonempty $B\subseteq \calV$ for which $\lambda_B>0$.
We will produce a Shannon flow inequality
$\inner{\vec\lambda',\mv h} \leq \inner{\vec\delta',\mv h}$
witnessed by $(\vec\sigma',\vec\mu')$ such that
$\ell(
\vec\lambda',\vec\delta',\vec\sigma',\vec\mu'
) <
\ell(\vec\lambda,\vec\delta,\vec\sigma,\vec\mu)$, and $D$ is a common
denominator of the entries in $(\vec\lambda',\vec\delta',\vec\sigma',\vec\mu')$.
From the induction hypothesis we obtain a proof sequence $\proofseq'$
for $\inner{\vec\lambda',\mv h} \leq \inner{\vec\delta',\mv h}$.
Finally the proof sequence $\proofseq$ for
$\inner{\vec\lambda,\mv h} \leq \inner{\vec\delta,\mv h}$
is constructed from $\proofseq'$ by appending to the beginning one or two proof
steps as outlined below.

From Proposition~\ref{prop:sfi}, we know $\sum_{\emptyset
\neq W\subseteq \calV}\flow(W) \geq \lambda_B>0$.
Consequently, there must exist $Z\neq\emptyset$ for which
$\delta_{Z|\emptyset}>0$, because all the variables
$\delta_{Y|X}$ with $X \neq \emptyset$, $\sigma_{I,J}$, and $\mu_{X,Y}$
contribute a non-positive amount to the sum
$\sum_{\emptyset\neq W \subseteq\calV}\flow(W)$.
Let $w \defeq 1/D$, and
fix an arbitrary $Z\neq\emptyset$ where $\delta_{Z|\emptyset}>0$.
We initially set
$(\vec\lambda',\vec\delta',\vec\sigma',\vec\mu')=(\vec\lambda,\vec\delta,\vec\sigma,\vec\mu)$;
then we modify
$(\vec\lambda',\vec\delta',\vec\sigma',\vec\mu')$ slightly depending on the
cases below.

{\em Case (a): $\lambda_Z>0$.}
Reduce both $\lambda'_{Z}$ and $\delta'_{Z|\emptyset}$ by $w$,
which reduces $\ell(\vec\lambda',\vec\delta',\vec\sigma',\vec\mu')$ by $2$.
From Proposition~\ref{prop:sfi}, we can verify that
$\inner{\vec\lambda',\mv h}\leq \inner{\vec\delta',\mv h}$ is a Shannon flow inequality
witnessed by $(\vec\sigma',\vec\mu')=(\vec\sigma,\vec\mu)$.
By induction hypothesis, $\inner{\vec\lambda',\mv h}\leq \inner{\vec\delta',\mv h}$
has a proof sequence $\proofseq'$ of length at most
$D(3\norm{\vec\sigma}_1+\norm{\vec\delta'}_1+\norm{\vec\mu}_1)$.
Furthermore, the $\proofseq'$
{\em is} also a proof sequence for $\inner{\vec\lambda,\mv h}\leq
\inner{\vec\delta,\mv h}$.

{\em Case (b): $\lambda_Z=0$ and $\flow(Z)>0$.}
Reduce $\delta'_{Z|\emptyset}$ by $w$,
thus reducing $\ell(\vec\lambda',\vec\delta',\vec\sigma',\vec\mu')$ by 1.
Then, from Proposition~\ref{prop:sfi}, we can verify that
$\inner{\vec\lambda,\mv h}\leq \inner{\vec\delta',\mv h}$ is a Shannon flow inequality
witnessed by $(\vec\sigma,\vec\mu)$. The inductive step is now identical to that of
Case (a).

{\em Case (c): $\lambda_Z=0$ and $\flow(Z)=0$.} Since $\delta_{Z|\emptyset}>0$, there must be some
dual variable that is contributing a negative amount to $\flow(Z)$ (See Figure~\ref{fig:flow} [right]). In particular, one of the following three
cases must hold:

(1) There is some $X\subset Z$ such that $\mu_{X,Z} \geq w$.
Define $\vec\delta' = \vec\delta + w\cdot \mv m_{X,Z}$ and reduce $\mu'_{X,Z}$ by $w$.
By Definition of $\mv m_{X,Z}$, $\delta'_{Z|\emptyset} = \delta_{Z|\emptyset}-w$
and $\delta'_{X|\emptyset}=\delta_{X|\emptyset}+w$.
Therefore,
$\norm{\vec\delta'}_1=\norm{\vec\delta}_1$,
$\norm{\vec\mu'}_1=\norm{\vec\mu}_1-w$,
$\ell(
\vec\lambda',\vec\delta',\vec\sigma',\vec\mu'
) =
\ell(\vec\lambda,\vec\delta,\vec\sigma,\vec\mu) - 1$,
and by Proposition~\ref{prop:sfi}
$\inner{\vec\lambda,\mv h}\leq \inner{\vec\delta',\mv h}$ is a Shannon flow inequality.
witnessed by $(\vec\sigma,\vec\mu')$.
By induction hypothesis, $\inner{\vec\lambda,\mv h}\leq \inner{\vec\delta',\mv h}$
has a proof sequence $\proofseq'$ of length at most
$D(3\norm{\vec\sigma}_1+\norm{\vec\delta'}_1+\norm{\vec\mu'}_1)$.
It follows that $\proofseq \defeq (w\cdot \mv m_{X,Z}, \proofseq')$ is a proof
sequence for
$\inner{\vec\lambda,\mv h}\leq \inner{\vec\delta,\mv h}$ of length at most
$D(3\norm{\vec\sigma}_1+\norm{\vec\delta}_1+\norm{\vec\mu}_1)$.

(2) There is some $Y \supset Z$ such that $\delta_{Y|Z} \geq w$.
Define $\vec\delta' = \vec\delta + w\cdot \mv c_{Z,Y}$.
Note that $\norm{\vec\delta'}_1=\norm{\vec\delta}_1-w$,
$\ell(
\vec\lambda',\vec\delta',\vec\sigma',\vec\mu'
) =
\ell(\vec\lambda,\vec\delta,\vec\sigma,\vec\mu) - 1$,
and by Proposition~\ref{prop:sfi}
$\inner{\vec\lambda,\mv h}\leq \inner{\vec\delta',\mv h}$ is a Shannon flow
inequality witnessed by $(\vec\sigma,\vec\mu)$.
From the proof sequence $\proofseq'$ for
$\inner{\vec\lambda,\mv h}\leq \inner{\vec\delta',\mv h}$ we obtain the proof sequence
$\proofseq \defeq (w \cdot \mv c_{Z,Y},\proofseq')$ for
$\inner{\vec\lambda,\mv h}\leq \inner{\vec\delta,\mv h}$ of the desired length.

(3) There is some $J \incomp Z$ such that $\sigma_{Z,J} \geq w$.
Define $\vec\delta' = \vec\delta + w \cdot \mv d_{Z, Z\cap J}+w\cdot \mv
s_{Z,J}$, and reduce $\sigma'_{Z,J}$ by $w$.
In this case,
$\norm{\vec\delta'}_1=\norm{\vec\delta}_1+w$,
$\norm{\vec\sigma'}_1=\norm{\vec\sigma}_1-w$,
$\ell(
\vec\lambda',\vec\delta',\vec\sigma',\vec\mu'
) =
\ell(\vec\lambda,\vec\delta,\vec\sigma,\vec\mu) - 1$,
and by Proposition~\ref{prop:sfi}
$\inner{\vec\lambda,\mv h}\leq \inner{\vec\delta',\mv h}$ is a Shannon
flow inequality witnessed by $(\vec\sigma',\vec\mu)$.
By induction hypothesis, $\inner{\vec\lambda,\mv h}\leq \inner{\vec\delta',\mv h}$
has a proof sequence $\proofseq'$ of length at most
$D(3\norm{\vec\sigma'}_1+\norm{\vec\delta'}_1+\norm{\vec\mu}_1)$.
It follows that $\proofseq \defeq
(w \cdot \mv d_{Z, Z \cap J}, w\cdot \mv s_{Z,J}, \proofseq')$ is a proof
sequence for
$\inner{\vec\lambda,\mv h}\leq \inner{\vec\delta,\mv h}$ of length at most
$D(3\norm{\vec\sigma}_1+\norm{\vec\delta}_1+\norm{\vec\mu}_1)$.

Note that each step above can be implemented in time polynomial in
$\ell(\vec\lambda',\vec\delta',\vec\sigma',\vec\mu')$, which serves as an upper bound
on the number of non-zero values in $\vec\lambda', \vec\delta', \vec\sigma'$,
and $\vec\mu'$. (Recall that $D$ in~\eqref{eq:ps:construction:ell} is the common denominator.)
Since $\ell(
   \vec\lambda',\vec\delta',\vec\sigma',\vec\mu'
   ) <
   \ell(\vec\lambda,\vec\delta,\vec\sigma,\vec\mu)$,
the entire proof sequence can be constructed in time polynomial in
$\ell(\vec\lambda,\vec\delta,\vec\sigma,\vec\mu)$.
\ep

Appendix~\ref{app:subsec:bounding-delta-sigma-mu} presents bounds on
$\norm{\vec\sigma}_1$, $\norm{\vec\delta}_1$ and $\norm{\vec\mu}_1$.
By plugging those bounds into Theorem~\ref{thm:ps:construction:1}, we can
bound the length of the constructed proof sequence by
$D(3\norm{\vec\sigma}_1+\norm{\vec\delta}_1+\norm{\vec\mu}_1)\leq D\cdot(\frac{3}{2} n^3+n).$ (See Corollaries~\ref{cor:bound-mu} and~\ref{cor:2sigma<=n-cube-lambda}.)
Appendix~\ref{sec:poly-sized-proofseq} presents an alternative construction of a proof sequence with an even smaller length bound.

The $\panda$ algorithm needs another technical lemma, which we state below along with some prerequisite concepts.
\bdefn[Tight witness]
A witness is said to be {\em tight} if $\flow(Z) = \lambda_Z$ for all $Z$.
\label{defn:tight-witness}
\edefn
We remark that, if $\flow(Z)>\lambda_Z$, we can always increase
$\mu_{\emptyset,Z}$ by the amount $\flow(Z)-\lambda_Z$ so that $\flow(Z) =
\lambda_Z$. In particular, it is easy to turn any witness into a tight witness.
The proof of Lemma~\ref{lmm:truncate:sfi} below follows the same strategy as
in the proof of Theorem~\ref{thm:ps:construction:1}, but with some subtle differences.

\blmm[Truncating a Shannon flow inequality]
Let $\inner{\vec\lambda,\mv h}\leq \inner{\vec\delta,\mv h}$ be a Shannon flow
inequality with witness $(\vec\sigma,\vec\mu)$. Let $D$ be a common
denominator
of all entries in $(\vec\lambda,\vec\delta,\vec\sigma,\vec\mu)$, and
$w\defeq 1/D$.
Suppose $\norm{\vec\lambda}_1>0$
and $\delta_{Y|\emptyset} >0$. Then, there are vectors
$(\vec\lambda',\vec\delta',\vec\sigma',\vec\mu')$ satisfying the following conditions:
\bi
 \item[(a)] $\inner{\vec\lambda',\mv h} \leq \inner{\vec\delta',\mv h}$ is a
    Shannon flow inequality (with witness $(\vec\sigma',\vec\mu')$).
 \item[(b)] $\vec\lambda' \leq \vec\lambda$ and $\vec\delta' \leq \vec\delta$
    (component-wise comparisons).
 \item[(c)] $\norm{\vec\lambda'}_1\geq\norm{\vec\lambda}_1-w$ and
    $\delta'_{Y|\emptyset} \leq \delta_{Y|\emptyset}-w$.
 \item[(d)] $D$ is a common denominator of all entries in the vector
    $(\vec\lambda',\vec\delta',\vec\sigma',\vec\mu')$.
 \item[(e)] $D(3\norm{\vec\sigma'}_1+\norm{\vec\delta'}_1+\norm{\vec\mu'}_1)
    \leq D(3\norm{\vec\sigma}_1+\norm{\vec\delta}_1+\norm{\vec\mu}_1)-1$.
\ei
\label{lmm:truncate:sfi}
\elmm
\bp
W.L.O.G. we can assume that $(\vec\sigma,\vec\mu)$ is a tight witness.
Construct $(\vec\lambda',\vec\delta',\vec\sigma',\vec\mu')$
from $(\vec\lambda,\vec\delta,\vec\sigma,\vec\mu)$ as follows.
Initially set $(\vec\lambda',\vec\delta',\vec\sigma',\vec\mu') =
(\vec\lambda,\vec\delta,\vec\sigma,\vec\mu)$.
Let $\flow'(Z)$ denote the quantity $\flow(Z)$ measured on the vector
$(\vec\lambda',\vec\delta',\vec\sigma',\vec\mu')$.
Due to tightness of the witness, at this point $\flow'(Z)-\lambda'_Z=0$
for every $Z$.

Now, we start disturbing the flow balance equations starting from $Z=Y$
by setting $\delta'_{Z|\emptyset}=\delta'_{Z|\emptyset}-w$ which means
$\flow'(Z)$ was reduced by $w$. Note that $Z$ is the only point for which
$\flow'(Z)-\lambda'_Z \neq 0$ (it is negative).
If $\lambda'_Z>0$, then we simply reduce $\lambda'_Z$ by $w$ and terminate.
If $\lambda'_Z=0$, then either (1) there is some $X\subset Z$ such that
$\mu'_{X,Z} \geq w$,
(2) there is some $Y \supset Z$ such that $\delta'_{Y|Z} \geq w$,
or (3) there is some $J \incomp Z$ such that $\sigma'_{Z,J}
\geq w$.
Cases (1) and (2) are handled in a similar way to the proof of Theorem~\ref{thm:ps:construction:1} while case (3) is handled differently.
In particular, if (1) holds, then we reduce $\mu'_{X,Z}$ by $w$ and set $Z = X$.
If (2) holds, then we reduce $\delta'_{Y|Z}$ by $w$ and set $Z=Y$.
If (3) holds, then we reduce $\sigma'_{Z,J}$ by $w$, {\em increase} $\mu'_{Z\cap
J, J}$ by $w$, and set $Z=Z\cup J$.
In all three cases, (the new) $Z$ is the only point where $\flow'(Z)-\lambda'_Z$
has a deficit, and the process continues if the new $Z$ is not $\emptyset$.

The above process terminates because every time we move $Z$ to a new deficit
point, the quantity
$2\norm{\vec\sigma'}_1+\norm{\vec\delta'}_1+\norm{\vec\mu'}_1$ is reduced by $w$.
When the process terminates, all quantities $\flow'(Z)-\lambda'_Z=0$ and thus
$\inner{\vec\lambda',\mv h} \leq \inner{\vec\delta',\mv h}$ is a Shannon flow
inequality witnessed by $(\vec\sigma',\vec\mu')$ by Proposition~\ref{prop:sfi}.
Property (b) holds because we never increase the $\vec\lambda'$ and $\vec\delta'$
entries.
Property (c) holds because we started the process by reducing $\delta'_{Y|\emptyset}$
and the process terminates as soon as some $\lambda'_Z$ is reduced by $w$.
Property (d) hold trivially.
Property (e) holds because $\norm{\vec\delta'}_1 < \norm{\vec\delta}_1$
and the quantity $3\norm{\vec\sigma'}_1+\norm{\vec\mu'}_1$ never increases in the above process.
\ep

\section{The $\panda$ algorithm}
\label{sec:panda}

This section presents an algorithm called $\panda$ that computes a model
of a disjunctive datalog rule $P$ in time predicted by its polymatroid
bound~\eqref{eqn:intro:polymatroid:disjunctive:2}:
$$\tilde O(N + \poly(\log N)\cdot 2^{\outputbound_{\Gamma_n\cap
\hdc}(P)}).$$ (Recall $N$ was defined in~\eqref{eqn:N}.)
The principle in $\panda$ is the following: start by providing a proof
sequence for a Shannon flow inequality, then interpret each step of the
proof sequence as a relational operation on the query's input
relations.  The main result of this section is
Theorem~\ref{thm:main:panda:disjunctive}, whose proof is in
Section~\ref{subsec:panda:analysis}.

\subsection{The algorithm}
\label{subsec:panda:algorithm}

\begin{algorithm*}[ht!]
  \caption{$\panda$($\calR$, $\dc$, $(\vec\lambda,\vec\delta)$, $\proofseq$)}
  \label{algo:panda}
  \begin{algorithmic}[1]
    \Require{$\inner{\vec\lambda, \mv h} \leq \inner{\vec\delta,\mv h}$ is a
Shannon flow inequality with proof sequence \proofseq,
$0<\norm{\vec\lambda}_1\leq 1$}
    \Require{$\dc$ are input degree constraints guarded by input relations $\calR$}
    \Require{The degree-support invariant (invariant~(I1)) is satisfied}
    \Require{Invariant (I4) is satisfied}
\Statex
\If {an input relation $R \in\calR$ has attribute set $\mv A_B$, with $B \in\calB$}\label{alg:ln:basecase}
     \State \Return $T_B = R$\Comment{Only one table in the output}
\EndIf
\State Let $\proofseq = (w\cdot \mv f, \proofseq')$ \Comment{$\mv f$ is the first
proof step, with weight $w$}
     \State $\vec\delta' \gets \vec\delta+w \cdot \mv f$\Comment{Advance the proof step}
\If {$\mv f = \mv s_{I,J}$} \label{alg:ln:case1} \Comment{Case 1. $\delta_{I|I\cap J} \geq w > 0$ must hold}
   \State \Return $\panda$($\calR$, $\dc$, $(\vec\lambda,\vec\delta')$, $\proofseq'$)
\ElsIf {$\mv f = \mv m_{X,Y}$} \label{alg:ln:case2} \Comment{Case 2. $\delta_{Y| \emptyset} \geq w > 0$ must hold}
   \State Let $R$ be a guard for $(\emptyset,Y,N_{Y|\emptyset}) \in \dc$,
     which supports $\delta_{Y|\emptyset}$
     \State $\calR' = \calR \cup \{\Pi_X(R)\}$
     \State $\dc' = \dc \cup \{(\emptyset,X,N_{X|\emptyset}\defeq|\Pi_X(R)|)\}$
  \State \Return $\panda$($\calR'$, $\dc'$, $(\vec\lambda,\vec\delta')$, $\proofseq'$)
\ElsIf {$\mv f=\mv d_{Y,X}$} \label{alg:ln:case3} \Comment{Case 3. $\delta_{Y|\emptyset} \geq w > 0$
     must hold}
\State Let $R\in\calR$ be a guard for $(\emptyset, Y, N_{Y|\emptyset}) \in \dc$
\State Partition $R = R^{(1)} \cup \cdots \cup R^{(k)}$ as in~\eqref{eqn:partition}
\Comment{k=$O(\log_2 |R|)$ branches}
\For {$j \gets 1$ {\bf to} $k$}\label{alg:ln:subproblems}
\State $\calR^{(j)} \gets \calR \cup \{R^{(j)}\}$
\State $\dc^{(j)} = \dc \cup \bigl\{ (\emptyset, X, N^{(j)}_{X|\emptyset}),
(X,Y,N^{(j)}_{Y|X}) \bigr\}$
\State $(T^{(j)}_B)_{B\in\calB} \gets \panda(\calR^{(j)}, \dc^{(j)},
(\vec\lambda,\vec\delta'), \proofseq')$
\EndFor
\State \Return $\left( T_B \defeq \bigcup_{j=1}^k T^{(j)}_B \right)_{B\in\calB}$\Comment{Union of
results from branches}
\ElsIf {$\mv f = \mv c_{X,Y}$} \label{alg:ln:case4} \Comment{Case 4. $\delta_{X|\emptyset} \geq w > 0$ and $\delta_{Y|X} \geq w > 0$}
\State Let $R$ be a guard of $(\emptyset, X, N_{X|\emptyset})\in\dc$, which
supports $\delta_{X | \emptyset}$
\State Let $S$ be a guard of $(Z, W, N_{W|Z}) \in \dc$, which supports $\delta_{Y|X}$
   \If {$N_{X|\emptyset}\cdot N_{W|Z} \leq 2^\obj$}\label{alg:ln:case4a}\Comment{Case 4a. $\obj$ defined
in~\eqref{eqn:obj}}
     \State Compute $T(\mv A_Y) \gets \Pi_X(R) \Join \Pi_W(S)$ \Comment{Within
runtime $\tilde O(2^{\obj})$}
\State $\calR' = \calR \cup \{T\}$
\State $\dc' = \dc \cup \{(\emptyset,Y,N_{Y|\emptyset}\defeq|T|)\}$
  \State \Return $\panda$($\calR'$, $\dc'$, $(\vec\lambda,\vec\delta')$, $\proofseq'$)
     \Else\label{alg:ln:case4b}\Comment{Case 4b.}
     \State Let $\inner{\vec\lambda', \mv h} \leq \inner{\vec\delta',\mv h}$ be
the truncated Shannon flow inequality stated in Lemma~\ref{lmm:truncate:sfi}
     \State Recompute a fresh $\proofseq'$ for $\inner{\vec\lambda', \mv h} \leq \inner{\vec\delta',\mv h}$
     \State \Return $\panda$($\calR$, $\dc$, $(\vec\lambda',\vec\delta')$, $\proofseq'$)
   \EndIf
\EndIf
  \end{algorithmic}
\end{algorithm*}

A simple demonstration of the algorithm was given in Example~\ref{ex:intro:panda}.
The algorithm sketch is shown in the box Algorithm~\ref{algo:panda}.
In general, $\panda$ takes as input the collection of input relations $\calR$,
the degree constraints $\dc$, a Shannon flow inequality and its proof sequence.
The Shannon flow inequality is constructed by solving the optimization problem
\eqref{eqn:intro:output:bound} with $\calF=\Gamma_n \cap \hdc$:
\begin{equation}
   \outputbound_{\Gamma_n \cap \hdc}(P) \defeq \max_{h\in\Gamma_n \cap \hdc} \min_{B\in\calB} h(B).
\end{equation}
From Lemma~\ref{lmm:lambda:1:reformulation}, we can find a vector
$\vec\lambda_\calB$
with $\norm{\vec\lambda}_1=1$ such that the problem has the same optimal
objective value as the linear program
$\max_{h\in\Gamma_n\cap\hdc} \inner{\vec\lambda, \mv h}$.
(Recall from Section~\ref{subsec:sfi:ps} that,
when we extend $\vec\lambda_\calB$ to the (conditional polymatroid)
space $\vec\lambda \in \Q_+^{\calP}$, only the entries
$\lambda_{B|\emptyset}$ for $B\in\calB$ can be positive.)
Let $(\vec\delta,\vec\sigma,\vec\mu)$ denote an optimal dual solution
to this LP, then by strong duality
\begin{equation}
\sum_{(X,Y)\in \dc}n_{Y|X} \cdot \delta_{Y|X}=\outputbound_{\Gamma_n\cap
\hdc}(P)
\label{eqn:panda:dual-obj=bound}
\end{equation}
Moreover, from Proposition~\ref{prop:sfi} we
know $\inner{\vec\lambda, \mv h}\leq \inner{\vec\delta,\mv h}$ is a Shannon
flow inequality. From Theorem~\ref{thm:ps:construction:1}, we obtain a proof sequence
for the Shannon flow inequality.

For brevity, let the constant $\obj$ denote our budget of $\outputbound_{\Gamma_n \cap \hdc}(P)$:
\begin{equation}
\obj \defeq \outputbound_{\Gamma_n \cap \hdc}(P).
\label{eqn:obj}
\end{equation}

Throughout the algorithm, the collection of input relations $\calR$,
the degree constraints $\dc$, the Shannon flow inequality $\inner{\vec\lambda, \mv h}\leq \inner{\vec\delta,\mv h}$, and the associated proof sequence
will all be updated.
However for inductive purposes, the following invariants will always be maintained:
\be
\item[(I1)] \emph{Degree-support invariant:}
For every $\delta_{Y|X} > 0$, there exist $Z\subseteq X$, $W \subseteq
Y$ such that $W-Z = Y-X$ and $(Z,W,N_{W|Z}) \in \dc$.
(Note that, if $X=\emptyset$ then $W=Y$ and $Z=\emptyset$.)
\begin{figure}[!ht]
\centering \begin{tikzpicture}[domain=0:20, 
every node/.style={font=\large}, scale=0.8, every node/.style={scale=0.8}]
\node[] at (4,4) (Y) {$Y$};
\node[] at (2,2) (X) {$X$};
\node[] at (6,2) (W) {$W$};
\node[] at (4,0) (Z) {$Z$};

\node[color=black!55!green] at (2.2,3.3) (deltayx) {$\delta_{Y|X}>0$};
\node[color=black!55!green] at (6.5,0.6) (dc) {$(Z,W,N_{W|Z}) \in \dc$};

\draw (X) -- (Y);
\draw (X) -- (Z);
\draw (Z) -- (W);
\draw (W) -- (Y);

\node[] at (4,-1) {(a) Degree-support invariant};

\node[] at (14,4) (I) {$I$};
\node[] at (12,2) (I cap J) {$I\cap J$};
\node[] at (16,2) (W1) {$W$};
\node[] at (14,0) (Z1) {$Z$};
\node[] at (12,6) (I cup J) {$I \cup J$};
\node[] at (10,4) (J) {$J$};

\draw (I) -- (W1);
\draw (W1) -- (Z1);
\draw (Z1) -- (I cap J);
\draw (I cap J) -- (I);
\draw (I) -- (I cup J);
\draw (J) -- (I cup J);
\draw (J) -- (I cap J);

\node[color=black!55!green] at (10,5) () {$\delta'_{I \cup J | J}>0$};
\node[color=black!55!green] at (16.5,0.6) (dc) {$(Z,W,N_{W|Z}) \in \dc$};

\node[] at (12,-1) {(b) Case 1 of $\panda$};

\end{tikzpicture}
   \caption{Degree-support invariant and its usage}
   \Description{Degree-support invariant and its usage}
\label{fig:dsi}
\end{figure}
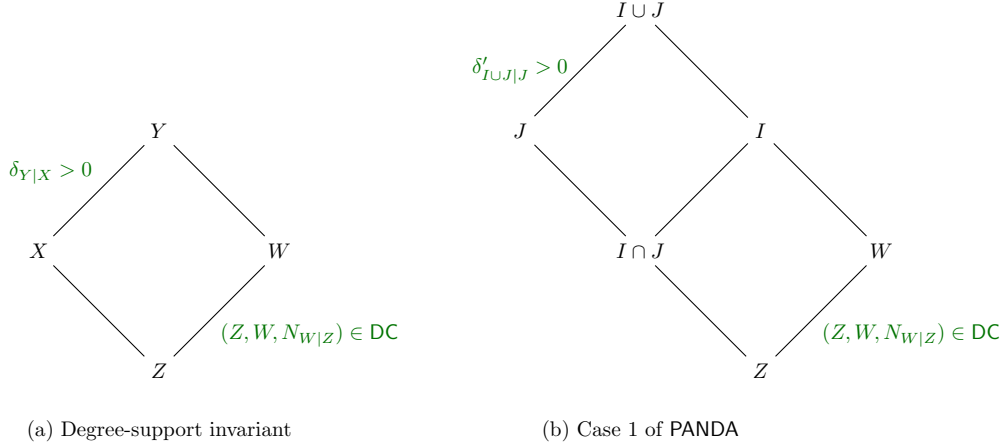
The degree constraint $(Z,W,N_{W|Z})$ is said to {\em support}
the positive $\delta_{Y|X}$. See Figure~\ref{fig:dsi} (a).
(If there are multiple constraints $(Z,W,N_{W|Z})$ supporting $\delta_{Y|X}$,
then the one with the minimum $N_{W|Z}$ is said to be \emph{the}
constraint that supports $\delta_{Y|X}$, where ties can be broken arbitrarily.)
\item[(I2)] $\vec\lambda$ satisfies
\begin{equation}
   0<\norm{\vec\lambda}_1\leq 1.
   \label{eqn:invar:norm-lambda}
\end{equation}
\item[(I3)] The Shannon flow inequality along with the supporting degree constraints satisfy the following:
\begin{equation}
\sum_{(X,Y)}n(\delta_{Y|X}) \leq \norm{\vec\lambda}_1\cdot \obj,
\label{eqn:invar:potential}
\end{equation}
where
\[ n(\delta_{Y|X}) \defeq \begin{cases}
\delta_{Y|X}\cdot n_{W|Z} & \text{ if } \delta_{Y|X}>0 \text{ and } \\\ & (Z,W,N_{W|Z}) \text{ supports it}\\
0 & \text{ if } \delta_{Y|X}=0.
\end{cases}
\]
(Recall that $n_{W|Z}\defeq\log_2 N_{W|Z}$.)
We call the quantity $\sum_{(X,Y)}n(\delta_{Y|X})$ \emph{the potential}.
\item[(I4)] For every $\delta_{Y|\emptyset}>0$, the supporting degree constraint $(\emptyset, Y, N_{Y|\emptyset})$ satisfies $n_{Y|\emptyset}\leq \obj$.
\ee

At the very beginning, all the above invariants are satisfied.
In particular, invariant (I1) is satisfied because of~\eqref{eqn:panda:dual-obj=bound},
which implies that for every $\delta_{Y|X} > 0$, $(X, Y) \in \dc$.
Invariant (I2) is satisfied because initially $\norm{\vec\lambda}_1=1$ thanks to
Lemma~\ref{lmm:lambda:1:reformulation}.
Invariant (I3) holds because~\eqref{eqn:panda:dual-obj=bound} and~\eqref{eqn:obj}
imply that $\sum_{(X,Y)}n(\delta_{Y|X}) = \obj$.
Invariant (I4) holds because of Proposition~\ref{prop:panda:i4}, which will be discussed later.

A very high-level description of the algorithm is as follows.
Recall from Definition~\ref{defn:ps} that a proof
sequence $\proofseq$ is a series of proof steps, which are used by
$\panda$ as ``symbolic instructions''.
For each instruction, $\panda$ does some computation, spawns a number
of subproblem(s) all of which are disjunctive datalog rules,
and creates new (intermediate) relations to become input of the subproblems if
necessary.
The output of the $i$th subproblem is a set of tables $T^{(i)}_B$ for
$B\in\calB$.
The overall output is the set of tables $T_B = \bigcup_i T^{(i)}_B$,
$B\in\calB$; namely for each $B \in \calB$ we take the union of the
corresponding tables from the subproblems's outputs.
The number of subproblems will be shown to be polylogarithmic in the
input size.

We now walk the reader step-by-step through the algorithm.
We will keep every step of the algorithm to run within $\tilde O(2^{\obj})$.
Specifically, we will keep every
intermediate relation the algorithm computes of size $\leq2^\obj$.
In the base case, the algorithm stops as soon as there is a
relation $R \in\calR$ with attribute set $\mv A_B$ where $B \in\calB$,
in which case $R$ is a target relation (line~\ref{alg:ln:basecase} in Algorithm~\ref{algo:panda}).
Otherwise, the algorithm takes steps which are modeled after the
proof steps. Let $\mv f$ be the first proof step (instruction) with
weight $w$, i.e. $\proofseq = (w \cdot \mv f, \proofseq')$
where $\proofseq'$ contains the rest of the instructions.

{\bf Case 1:} $\mv f=\mv s_{I,J}$ is a submodularity step
(line~\ref{alg:ln:case1} in Algorithm~\ref{algo:panda}).
By definition of proof sequence, $\vec\delta + w\cdot\mv s_{I,J} \geq \mv 0$,
and thus $\delta_{I|I\cap J} \geq w > 0$.
Let $(Z, W, N_{W|Z}) \in \dc$ be the degree constraint supporting
$\delta_{I|I\cap J}$; then
$Z \subseteq I\cap J$, $W \subseteq I$, and $W-Z=I-I\cap J$.
The algorithm proceeds by setting $\vec\delta'=\vec\delta+w \cdot \mv s_{I,J}$.
Note that $\delta'_{I\cup J | J}$ is now positive, and so it needs a supporting
degree constraint to maintain invariant (I1).
From the fact that $W-Z = I-I\cap J=I\cup J-J$, $(Z,W,N_{W|Z})$ can
support $\delta'_{I\cup J | J}$ (see Figure~\ref{fig:dsi} (b)).
Since $\mv f$ was the next step in the proof sequence,
$\inner{\vec\lambda,\mv h} \leq \inner{\vec\delta',\mv h}$ is a Shannon flow
inequality with proof sequence $\proofseq'$.
Moreover, the potential $\sum_{(X,Y)}n(\delta_{Y|X})$ remains unchanged because
$\delta_{I|I\cap J}$ was reduced by $w$,
$\delta_{I\cup J|J}$ was increased by $w$, and they have the same support.
Hence, invariant~\eqref{eqn:invar:potential} remains satisfied.
Moreover since we didn't change $\vec\lambda$ above,
invariant~\eqref{eqn:invar:norm-lambda} holds assuming it held before this step.
Finally, invariant (I4) holds because in the newly added
term $\delta'_{I\cup J | J}$, we have $J \neq \emptyset$ by definition of
submodularity step~\eqref{eqn:sij}.

{\bf Case 2:} $\mv f=\mv m_{X,Y}$ is a monotonicity step
(line~\ref{alg:ln:case2} in Algorithm~\ref{algo:panda}).
By definition of proof sequence, $\vec\delta + w\cdot\mv m_{X,Y} \geq \mv 0$,
and thus $\delta_{Y| \emptyset} \geq w > 0$.
Let $(\emptyset, Y, N_{Y|\emptyset}) \in \dc$ be the degree constraint supporting
$\delta_{Y|\emptyset}$, and $R\in\calR$ be a guard for this degree
constraint (which implies $|\Pi_Y(R)|\leq N_{Y|\emptyset}$; recall Definition~\ref{defn:guard}).
By invariant (I4) above, we have $N_{Y|\emptyset} \leq 2^\obj$.
We proceed by setting $\vec\delta'=\vec\delta+w \cdot \mv m_{X,Y}$.
Note that $\delta'_{X | \emptyset}$ is positive, and so it needs a supporting
degree constraint, which is the newly added degree constraint
$(\emptyset, X, N_{X|\emptyset})$, guarded by $R$, where
$N_{X|\emptyset} \defeq|\Pi_X(R)|\leq |\Pi_Y(R)|\leq N_{Y|\emptyset}$.
Invariant~\eqref{eqn:invar:potential} remains satisfied because we subtracted $w\cdot n_{Y|\emptyset}$ from the potential
and added $w\cdot n_{X|\emptyset}\leq w\cdot n_{Y|\emptyset}$ instead.
Moreover invariant (I4) remains satisfied because $N_{X|\emptyset} \leq N_{Y|\emptyset}\leq 2^\obj$.
Finally, invariant~\eqref{eqn:invar:norm-lambda} remains satisfied because we didn't change
$\vec \lambda$ above.

{\bf Case 3:} $\mv f=\mv d_{Y,X}$ is a decomposition step with weight $w$
(line~\ref{alg:ln:case3} in Algorithm~\ref{algo:panda}).
From $\vec\delta+w\cdot \mv d_{Y,X}\geq \mv 0$, it follows
that $\delta_{Y|\emptyset} \geq w > 0$.
From the guarantee that $\delta_{Y|\emptyset}$ has a supporting degree
constraint, it follows that there is a relation $R\in\calR$ guarding
$(\emptyset, Y, N_{Y|\emptyset})$, which means $|\Pi_Y(R)| \leq N_{Y|\emptyset}$.
By invariant (I4), $N_{Y|\emptyset}\leq 2^\obj$.
We will need the following lemma.

\blmm
Let $X\subset Y \subseteq \calV$.
Let $T(\mv A_Y)$ be a table with $|T| \leq N_{Y|\emptyset}$.
Then, $T$ can be partitioned into at most
$k = 2\log |T|$ sub-tables $T^{(1)},\dots,T^{(k)}$
such that $N^{(j)}_{X|\emptyset} \cdot N^{(j)}_{Y|X} \leq N_{Y|\emptyset}$,
for all $j \in [k]$, where
\begin{eqnarray*}
   N^{(j)}_{X|\emptyset} &\defeq& |\Pi_X(T^{(j)})|,\\
   N^{(j)}_{Y|X} &\defeq& \max_{\mv t_X\in\Pi_X(T^{(j)})}\deg_{T^{(j)}}(Y | \mv t_X),
\end{eqnarray*}
(where $\deg_{T^{(j)}}(Y | \mv t_X)$ was defined by~\eqref{eqn:degree} in Definition~\ref{defn:guard}.)
\label{lmm:partition}
\elmm
\bp
To obtain the sub-tables $T^{(j)}$, observe that the number of tuples $\mv t_X \in
\Pi_X(T)$ with $\log$-degree in the interval
$[j,j+1)$ is at most $|T|/2^j \leq 2^{n_{Y|\emptyset}-j}$.
Hence, if we partition $T$ based on which of the buckets $[j,j+1)$ the
$\log$-degree falls into, we would almost have  the required inequality:
$n^{(j)}_{X|\emptyset}+n^{(j)}_{Y|X} \leq (n_{Y|\emptyset}-j)+(j+1) = n_{Y|\emptyset}+1$.
To resolve the situation, we partition each $T^{(j)}$ into two
tables whose projections onto $X$ are equal-sized. Overall, we need $k = 2\log |T|$.
\ep

By applying the above lemma on $T(\mv A_Y)\defeq \Pi_Y(R)$, we show that
$R$ can be partitioned into at most
$(k = 2\log_2 |R|\leq 2\cdot \obj)$ sub-tables $R^{(1)},\dots,R^{(k)}$
such that $N^{(j)}_{X|\emptyset} \cdot N^{(j)}_{Y|X} \leq N_{Y|\emptyset}$, for
all $j \in [k]$, where
\begin{eqnarray}
   N^{(j)}_{X|\emptyset} &\defeq& |\Pi_X(R^{(j)})|,\label{eqn:partition}\\
   N^{(j)}_{Y|X} &\defeq& \max_{\mv t_X\in\Pi_X(R^{(j)})}\deg_{R^{(j)}}(Y | \mv t_X).\nonumber
\end{eqnarray}
For each of these sub-tables $R^{(j)}$ of $R$, we create a subproblem with
the same input tables but with $R$ replaced by $R^{(j)}$ (see line~\ref{alg:ln:subproblems} in Algorithm~\ref{algo:panda}).
The $j$th subproblem has degree constraints $\dc^{(j)}$ where
\begin{equation} \dc^{(j)} = \dc \cup \bigl\{ (\emptyset, X, N^{(j)}_{X|\emptyset}),
(X,Y,N^{(j)}_{Y|X}) \bigr\}.
\label{eq:panda:dec:dcs}
\end{equation}
The table $R^{(j)}$ guards both of the new degree constraints above.
Set $\vec\delta' = \vec\delta + w \cdot \mv d_{Y,X}$.
Since $\mv f = \mv d_{Y,X}$ was the next step in the proof sequence,
$\inner{\vec\lambda,\mv h} \leq \inner{\vec\delta',\mv h}$ is a Shannon flow
inequality with proof sequence $\proofseq'$.
For the $j$-th subproblem, we take
$\inner{\vec\lambda, \mv h}\leq \inner{\vec\delta', \mv h}$
as its Shannon flow inequality
and $\proofseq'$ as its proof sequence.
Invariant (I1) is satisfied in the $j$-th subproblem because the two new terms
$\delta'_{X|\emptyset}$ and $\delta'_{Y|X}$ are supported by corresponding degree
constraints in~\eqref{eq:panda:dec:dcs}.
Moreover, invariant~\eqref{eqn:invar:potential} still holds in the $j$th subproblem because we subtracted $w \cdot n_{Y|\emptyset}$ from the potential, and added
$w \cdot (n^{(j)}_{X|\emptyset}+n^{(j)}_{Y|X}) \leq w \cdot n_{Y|\emptyset}$ to
the potential. Invariant (I4) holds because $N^{(j)}_{X|\emptyset}\leq N_{Y|\emptyset}\leq 2^\obj$.
Invariant~\eqref{eqn:invar:norm-lambda} remains satisfied because $\vec \lambda$ remains the
same above.

{\bf Case 4:} $\mv f=\mv c_{X,Y}$ is a composition step with weight $w$
(line~\ref{alg:ln:case4} in Algorithm~\ref{algo:panda}).
By definition of proof sequence, $\vec\delta + w\cdot\mv c_{X,Y} \geq \mv 0$,
and thus $\delta_{X|\emptyset} \geq w > 0$ and $\delta_{Y|X} \geq w > 0$.
Because $\delta_{X|\emptyset}$ has a support,
there must be an input relation $R$ for which $|\Pi_X(R)| \leq N_{X|\emptyset}$; and because $\delta_{Y|X}$ has a support, there must be two sets $Z \subseteq X$ and $W \subseteq Y$ for which $W-Z=Y-X$ and
$(Z, W, N_{W|Z}) \in \dc$ which is guarded by an input relation $S$.
Note that $X \cup (W-Z) = X \cup (Y-X) = Y$.  We consider two cases:

{\em (Case 4a)} If $N_{X|\emptyset} \cdot N_{W|Z} \leq 2^\obj$, then we
can compute the table
$T(\mv A_Y) \defeq \Pi_X(R) \Join \Pi_W(S)$ by going over all tuples in
$\Pi_X(R)$ and expanding them using matching tuples in $\Pi_W(S)$
(see line~\ref{alg:ln:case4a} in the algorithm).
The runtime of the join is $\tilde O(N_{X|\emptyset}\cdot N_{W|Z}) = \tilde
O(2^{\obj})$, and the size of $T$ is $\leq N_{X|\emptyset}\cdot N_{W|Z}\leq 2^\obj$.
The Shannon flow inequality is modified by setting
$\vec\delta' = \vec\delta + w \cdot \mv c_{X,Y}$, with the
proof sequence $\proofseq'$,
and the set of degree constraints is extended by adding the constraint
$(\emptyset,Y,N_{Y|\emptyset}\defeq |T|)$, guarded by $T$.
Invariant (I1) remains satisfied because the new term $\delta'_{Y|\emptyset}$
is supported by the new degree constraint $(\emptyset,Y,N_{Y|\emptyset})$.
Moreover, invariant~\eqref{eqn:invar:potential} still holds because we subtracted $w \cdot (n_{X|\emptyset}+n_{W|Z})$ and added
at most the same amount to the potential.
Invariant (I4) holds because $|T|\leq 2^\obj$.
Invariant~\eqref{eqn:invar:norm-lambda} remains satisfied because $\vec \lambda$ remains
unchanged.

{\em (Case 4b)} If $N_{X|\emptyset}\cdot N_{W|Z} > 2^\obj$, then we
{\em will not} perform this join.
Instead, we {\em restart} the subproblem with a fresh inequality
(line~\ref{alg:ln:case4b}).
In particular, set $\vec\delta = \vec\delta + w \cdot \mv c_{X,Y}$.
Now we have $\delta_{Y|\emptyset}\geq w$.
We restart the problem with the inequality
$\inner{\vec\lambda', \mv h} \leq
   \inner{\vec\delta',\mv h}$
satisfying the conditions stated in Lemma~\ref{lmm:truncate:sfi}.
Since $\vec\delta' \leq \vec\delta$ and $\delta'_{Y|\emptyset}
\leq \delta_{Y|\emptyset}-w$, the potential is reduced by at least $w \cdot (n_{X|\emptyset}+n_{W|Z}) > w \cdot \obj$.
Since $\norm{\vec\lambda'}_1\geq\norm{\vec\lambda}_1-w$,
the right-hand side of \eqref{eqn:invar:potential} was reduced by at most $w\cdot \obj$, hence
invariant~\eqref{eqn:invar:potential} still holds.
Moreover, let $\sum_{(X,Y)}n(\delta'_{Y|X})$ be the new potential.
Now we have
\[0\leq \sum_{(X,Y)}n(\delta'_{Y|X}) < \sum_{(X,Y)}n(\delta_{Y|X})-w\cdot \obj\leq (\norm{\vec\lambda}_1-w)\cdot \obj \leq \norm{\vec\lambda'}_1\cdot \obj.\]
This proves $\norm{\vec\lambda'}_1 > 0$.
And because $\vec\lambda' \leq \vec\lambda$, we have $\norm{\vec\lambda'}_1 \leq \norm{\vec\lambda}_1\leq 1$. Hence, invariant~\eqref{eqn:invar:norm-lambda} holds.
Invariants (I1) and (I4) hold because $\vec\delta' \leq \vec\delta$ hence we can use the same
degree constraints supporting positive terms in $\vec\delta$ to support the corresponding
terms in $\vec\delta'$.

Finally, we get back to our earlier assumption that invariant (I4) was initially satisfied.
\bprop
Invariant (I4) is satisfied at the beginning of the $\panda$ algorithm.
\label{prop:panda:i4}
\eprop
\bp
If initially there was some positive $\delta_{Y|\emptyset}$ with $n_{Y|\emptyset} > \obj$,
then we could have replaced the original Shannon-flow inequality and witness with the inequality
$\inner{\vec\lambda', \mv h} \leq \inner{\vec\delta',\mv h}$ along with the witness $(\vec\sigma',\vec\mu')$ satisfying the conditions
of Lemma~\ref{lmm:truncate:sfi}.
Similar to Case 4b above, we conclude that $\norm{\vec\lambda'}_1 > 0$
and $\sum_{(X,Y)}n(\delta'_{Y|X}) < \norm{\vec\lambda'}_1 \cdot\obj$.
But this is a contradiction, because
\begin{align*}
   \obj
   &= \max_{h \in \Gamma_n \cap \hdc} \min_{B \in\calB} h(B)\\
   &\leq \max_{h \in \Gamma_n \cap \hdc} \sum_{B \in\calB}
   \frac{\lambda'_B}{\norm{\vec\lambda'}_1} h(B)\\
   &= \frac{1}{\norm{\vec\lambda'}_1}
   \max_{h \in \Gamma_n \cap \hdc} \inner{\vec\lambda', \mv h}\\
   &\leq \frac{1}{\norm{\vec\lambda'}_1}
   \max_{h \in \Gamma_n \cap \hdc} \inner{\vec\delta', \mv h}\\
   &= \frac{1}{\norm{\vec\lambda'}_1}
   \max_{h \in \Gamma_n \cap \hdc} \sum_{(X,Y)\in \dc} \delta'_{Y|X} h(Y|X)\\
   &\leq \frac{1}{\norm{\vec\lambda'}_1}
   \sum_{(X,Y)\in \dc} \delta'_{Y|X} n_{Y|X}\\
   &= \frac{1}{\norm{\vec\lambda'}_1}
   \sum_{(X,Y)\in \dc} n(\delta'_{Y|X})\\
   &< \obj.
\end{align*}
\ep

\subsection{Analysis}
\label{subsec:panda:analysis}

\bp[Proof of Theorem~\ref{thm:main:panda:disjunctive}]
Consider an input proof sequence of length $\ell
\leq D(3\norm{\vec\sigma}_1+\norm{\vec\delta}_1+\norm{\vec\mu}_1)$,
thanks to Theorem~\ref{thm:ps:construction:1}.
If we do not hit Case 4b, then there will be at most $\ell$ steps in the
algorithm, where each step either takes $\tilde O(2^\obj)$ time or spawns
$O(\obj)$ subproblems, for a total of $\tilde O(\poly(\obj)\cdot
2^\obj)$-time.
A subproblem will terminate at producing a relation $T(\mv A_B)$ for some $B\in
\calB$ because the proof sequence will, by definition, reach a point where
$\delta_{B | \emptyset} \geq w > 0$ for some $B \in \calB$.

The worst case is obtained when the algorithm branches as far as possible only
to have to restart at Case 4b with a slightly shorter proof sequence
of length at most
$D(3\norm{\vec\sigma}_1+\norm{\vec\delta}_1+\norm{\vec\mu}_1)-1$, thanks to Lemma~\ref{lmm:truncate:sfi}.
Thus, overall the exponent of $\obj$ (in $\poly(\obj)$) will be
$\leq \frac 1 2
D^2(3\norm{\vec\sigma}_1+\norm{\vec\delta}_1+\norm{\vec\mu}_1)^2$.
(Note that this constant is data-independent because the optimal dual solution
$(\vec\delta,\vec\sigma,\vec\mu)$ can be taken to be an extreme point of the
dual polyhedron \eqref{eqn:dual:polyhedron}, whose constraints are only dependent on the input query.)
Since $\obj$ equals the polymatroid bound $\outputbound_{\Gamma_n\cap\hdc}(P)$, which is bounded
by the vertex bound $\log (N^n)$, and $n$ is a constant in data complexity, we have
$\obj=O(\log N)$, and the runtime is
$\tilde O(\poly(\log N)\cdot
2^{\outputbound_{\Gamma_n\cap\hdc}(P)})$, as desired.
\ep

\section{Degree-Aware width parameters and algorithms}
\label{sec:it}

This section explains how $\panda$ can be used to evaluate queries within the
analogs of fractional hypertree width and submodular width under general degree
constraints.
In particular, we shall present a proof of
Theorem~\ref{thm:main:panda:submodular}. Towards this goal, we need to
generalize traditional width parameters to handle degree constraints.
The typical definitions of these width parameters
(Definition~\ref{defn:traditional:widths})  do not generalize in any obvious way
to handle functional dependencies, let alone degree constraints.
Hence, our first task in Section~\ref{subsec:minimax:maximin} is to reformulate
known width parameters under the information theoretic view. Our reformulation
leads naturally to generalizations, described in Section~\ref{subsec:da:widths}
with degree constraints taken into account.
Section~\ref{subsec:bound:summary} demonstrates the utility of our formulation
by summarizing all major known bounds and widths under the same umbrella.
Finally, Section~\ref{subsec:specializations} shows how $\panda$ is used to
achieve a runtime predicted by these new degree-aware widths.

\subsection{Minimax and maximin widths}
\label{subsec:minimax:maximin}

We slightly reformulate existing width parameters under a common framework.
Recall from Section~\ref{sec:background}  that there are two classes of
width parameters: the first class captures algorithms
seeking the best tree decomposition with the
worst bag runtime, while the second class captures algorithms
adapting the tree decomposition to the instance at hand.
In the definitions below, the maximin width notion is from
Marx~\cite{MR3144912,DBLP:journals/mst/Marx11}.

\bdefn
Let $\calF$ denote a topologically closed class of non-negative set functions
on $\calV$. The {\em $\calF$-minimax width}
and {\em $\calF$-maximin width} of a query $Q$ are defined by
{\small
\begin{eqnarray}
   \minimaxw_\calF(Q) &\defeq& \min_{(T,\chi)} \max_{t\in V(T)} \max_{h\in\calF}
h(\chi(t)), \label{eqn:minimax}\\
   \maximinw_\calF(Q) &\defeq& \max_{h\in\calF}\min_{(T,\chi)} \max_{t\in V(T)}
h(\chi(t)).
\label{eqn:maximin}
\end{eqnarray}
}
\label{defn:maximin:minimax}
\edefn

The following observation is straightforward:
\blmm
If $\calG \subseteq \calF$ are two topologically closed classes of functions, then
\begin{eqnarray*}
   \minimaxw_\calG(\calH) &\leq& \minimaxw_\calF(\calH)\\
   \maximinw_\calG(\calH) &\leq& \maximinw_\calF(\calH).
\end{eqnarray*}
For a fixed class $\calF$ of functions, we have
$\maximinw_\calF(\calH) \leq \minimaxw_\calF(\calH)$.
\label{lmm:minimax:maximin}
\elmm
\bp
We prove the only non-trivial inequality that
$\maximinw_\calF(\calH)\leq \minimaxw_\calF(\calH)$.
From definition:
\begin{eqnarray*}
   \maximinw_\calF(\calH) &=&
   \max_{h\in\calF}\min_{(T,\chi)} \max_{t\in V(T)} h(\chi(t))\\
   &\leq&
   \min_{(T,\chi)}
   \max_{h\in\calF}
   \max_{t\in V(T)} h(\chi(t))\\
   &=&
   \min_{(T,\chi)}
   \max_{t\in V(T)}
   \max_{h\in\calF}
   h(\chi(t))\\
   &=&\minimaxw_\calF(\calH).
\end{eqnarray*}
\ep

These width notions are used by specializing $\calF$ to capture two aspects of
the input. The first aspect uses either the class of entropic functions or its
relaxation, coming from the chain of inclusion
$\Mod_n \subset \overline\Gamma^*_n \subset \Gamma_n \subset \sa_n$
(See Figure~\ref{fig:set:functions}).
The second aspect models the granularity level of statistics we know from the
input database instance, with the following inclusion chain
\begin{equation}
\hdc \subset
\hcc \subset
\ed \cdot \log N \subset
\vd \cdot \log N.
\label{eqn:data-model-chain}
\end{equation}
(Recall notation from~\eqref{eqn:hdc} and Section~\ref{sec:background}.)
Note that the bounds in the constraints $\hdc$ are not normalized as in the sets
$\ed$ or $\vd$ in the traditional width parameters.
This is because normalizing degree constraints makes them less general than they can be, and it does
not make practical sense to assume that all degree bounds to be the same! (For
example, the FD-based degree bounds are always $0$, while the
relation-size-based degree bounds are $\log_2 N_F$.)
Consequently, we used the $\log_2N$ scaled up versions of the traditional
width parameters to compare with our new width parameters.

From these specializations, the minimax and maximin
widths capture all width parameters we discussed in
Section~\ref{sec:background}, summarized in the following proposition.
\bprop\label{prop:unifying}
Let $Q$ be a conjunctive query with only cardinality constraints (no FDs nor proper degree constraints) whose hypergraph
is $\calH=(\calV,\calE)$.  Then the followings hold (Recall notation from Section~\ref{sec:background}):
\begin{align}
   1+\tw(\calH) \nonumber
   &= \minimaxw_{\calF \cap \vd}(Q) \\
   &=  \maximinw_{\calF \cap \vd}(Q)
   &\forall \calF \in \{\Mod_n, \overline \Gamma^*_n, \Gamma_n, \sa_n\}
   \label{eqn:u5} \\
   \ghtw(\calH)
   &= \minimaxw_{\sa_n \cap \ed}(Q)\nonumber\\
   &= \maximinw_{\sa_n \cap \ed}(Q) \label{eqn:u6}\\
   \fhtw(\calH)
   &= \minimaxw_{\calF \cap \ed}(Q)
   & \forall \calF \in \{\Mod_n, \overline \Gamma^*_n, \Gamma_n\}
   \label{eqn:u7}\\
   \subw(\calH)
   &= \maximinw_{\Gamma_n \cap \ed}(Q)\label{eqn:u8} \\
   \adw(\calH)
   &= \maximinw_{\Mod_n \cap \ed}(Q)\label{eqn:u9}.
\end{align}
\eprop
\bp
To prove~\eqref{eqn:u5}, note that for $h \in \sa_n$ and
any set $F \subseteq \calV$, from subadditivity we have $h(F) \leq
\sum_{v\in F}h(v) \leq |F|$.
Hence, recalling the definition of tree-width from Section~\ref{subsec:no-FD-HDC},
we have
\begin{eqnarray*}
   \minimaxw_{\sa_n \cap \vd}(Q)
   &=& \min_{(T,\chi)} \max_{t\in V(T)} \max_{h\in \sa_n \cap \vd} h(\chi(t))\\
   \text{(because $h \in \sa_n$)} &\leq& \min_{(T,\chi)} \max_{t\in V(T)} \max_{h\in \sa_n \cap \vd} \sum_{v \in \chi(t)}h(v)\\
   \text{(because $h \in \vd$~\eqref{eqn:vd})} &\leq& \min_{(T,\chi)} \max_{t\in V(T)} |\chi(t)|\\
   \text{(by Definition~\ref{defn:traditional:widths})}&=& \tw(\calH)+1.
\end{eqnarray*}
To show the reverse, define the function
$\bar h(F)=|F|$ for all $F \subseteq \calV$.
This function is modular and vertex-dominated, and thus
\begin{eqnarray*}
   \tw(\calH)+1 &\geq&
   \minimaxw_{\sa_n \cap \vd}(Q)\\
  &\geq&
   \minimaxw_{\Mod_n \cap \vd}(Q)\\
   &\geq&
   \maximinw_{\Mod_n \cap \vd}(Q)\\
   &=& \max_{h\in \Mod_n \cap \vd}\min_{(T,\chi)} \max_{t\in V(T)}
   h(\chi(t))\\
   &\geq& \min_{(T,\chi)} \max_{t\in V(T)} \bar h(\chi(t))\\
   &=& \tw(\calH)+1.
\end{eqnarray*}
Thus the bound hierarchy collapses for the $\vd$ constraints.

The first equality in~\eqref{eqn:u6} is proved similarly to that of
identity~\eqref{eqn:u2}.
To prove the second equality in~\eqref{eqn:u6}, define the function $\bar h(B)=\rho(B)$
for all $B\subseteq \calV$. This function is in $\sa_n$ and is also edge-dominated.
Then
\begin{eqnarray*}
   \maximinw_{\sa_n \cap \ed}(Q)
   &=& \max_{h\in \sa_n \cap \ed} \min_{(T,\chi)}\max_{t\in
\chi(t)} h(\chi(t))\\
   &\geq & \min_{(T,\chi)}\max_{t\in \chi(t)} \bar h(\chi(t))\\
   &= &  \ghtw(\calH)\\
&= &  \minimaxw_{\sa_n \cap \ed}(Q) .
\end{eqnarray*}

Identity~\eqref{eqn:u7} follows immediately from Lemma~\ref{lmm:modularization},
picking $B$ to be the bag with the worst-case
bound for any tree decomposition.
Identities~\eqref{eqn:u8} and~\eqref{eqn:u9} are just their definitions.
\ep

Note from~\eqref{eqn:u6}
the interesting fact that the set $\sa_n$ is so large that switching to the
$\maximinw$ does not help reduce the objective.
While $\sa_n$ yields too large of an upperbound, and
$\Mod_n$ only yields a lowerbound, depending on the constraints
we want to impose, some parts of the hierarchy collapse.

To further illustrate the strength of the minimax/maximin
characterization, Example~\ref{ex:gap1} below uses the above
characterization to bound the $\subw$ of a cycle query,
and to show that the gap between $\fhtw$ and $\subw$ is
unbounded.
(Marx~\cite{DBLP:journals/mst/Marx11,MR3144912}
already constructed a hypergraph where $\subw$ is bounded and
$\fhtw$ is unbounded, which is a stronger result; on the other hand, his
example is much more involved.)

\begin{ex}[Unbounded gap between $\fhtw$ and $\subw$]\label{ex:gap1}
   Consider a query whose graph $\calH = (V,\calE)$ and is defined as follows.
   The vertex set $V = I_1 \cup I_2 \cup \cdots \cup I_{2k}$ is a disjoint
   union of $2k$ sets of vertices. Each set $I_j$ has $m$ vertices in it.
   There is no edge between any two vertices within the set $I_j$ for every
   $j \in [2k]$, namely $I_j$ is an independent set.
   The edge set $\calE$ of the hypergraph is the union of $2k$ complete
   bipartite graphs $K_{m,m}$:
   \[ \calE \defeq I_1 \times I_2 \cup I_2 \times I_3 \cup \cdots \cup
   I_{2k-1} \times I_{2k} \cup I_{2k} \times I_1.
   \]
   To each edge in $\calE$ there corresponds an input relation.

   We first bound the $\fhtw$ of this graph.
   Consider any non-redundant tree decomposition $(T,\chi)$ of $\calH$ with
   fractional hypertree width equal to $\fhtw(\calH)$.
   Let $t \in V(T)$ be a leaf node of the tree $T$ and $t'$ be the (only)
   neighbor of $t$. Due to non-redundancy, there must be a vertex $v \in \chi(t)$
   such that $v \notin \chi(t')$.
   It follows that $\chi(t)$ is the only bag in the tree decomposition containing
   $(v,u)$, for any $u$ for which $(v,u) \in \calE$.
   This means $\chi(t)$ contains the entire neighborhood of $v$ in the graph
   $\calH$.
   The neighborhood of every vertex
   contains $2m+1$ vertices: the vertex and its neighboring two independent sets
   for a total of $2m$ independent vertices.
   Since no edge can cover two vertices of an independent set,
   the best fractional cover bound for $\chi(t)$ is at least $2m$,
   namely $\fhtw(\calH) \geq 2m$.

   Next, we bound $\calH$'s submodular width. Let $h$ be any submodular function.
   \bi
\item Case 1: $h(I_i) \leq \theta$ for some $i \in [2k]$. WLOG assume $h(I_1) \leq
   \theta$. Consider the tree decomposition

      \begin{tikzpicture}[color=Green]
         \node[draw,ellipse] (123) {$I_1 \cup I_2 \cup I_3$};
         \node[draw,ellipse, right = of 123] (134) {$I_1 \cup I_3 \cup I_4$};
         \node[draw,ellipse, right = 1in of 134] (1k) {$I_1 \cup I_{2k-1} \cup I_{2k}$};
         \draw (123) -- (134);
         \draw[dotted] (134) -- (1k);
      \end{tikzpicture}

      For bag $B=I_1 \cup I_i \cup I_{i+1}$,
      \[ h(B) \leq h(I_1)+h(I_i \cup I_{i+1}) \leq \theta+m. \]
\item Case 2: $h(I_i) > \theta$ for all $i \in [2k]$. Consider the tree
   decomposition

      \begin{tikzpicture}[color=Brown]
         \node[draw,ellipse] (B1) {$I_1 \cup I_2 \cup \cdots \cup I_{k+1}$};
         \node[draw,ellipse, right = of B1] (B2) {$I_{k+1} \cup I_{k+2} \cup \cdots \cup I_{2k} \cup I_1$};
         \draw (B1) -- (B2);
         \node[below = 0.21 of B1] {Bag $B_1$};
         \node[below = 0.21 of B2] {Bag $B_2$};
      \end{tikzpicture}

   From submodularity, it is easy to see that
   \begin{eqnarray*}
      h(B_1) \leq & h(I_1\cup I_2) + \displaystyle{\sum_{i=3}^{k+1}h(I_i\cup I_{i-1}|I_{i-1})} & \leq km - (k-1) \theta \\ 
      h(B_2) \leq & h(I_{2k}\cup I_{1}) + \displaystyle{\sum_{i=k+1}^{2k-1}h(I_i\cup I_{i+1}|I_{i+1})} & \leq km - (k-1) \theta
   \end{eqnarray*}
   Thus, by setting $\theta = (1-1/k)m$ we have just proved that
   $\subw(\calH) \leq m(2-1/k).$ By increasing $m$, the gap between
   $m(2-1/k)$ and $2m$ is infinite for a fixed $k \geq 2$.  \ei
\end{ex}

\bcor
Let $Q$ be a conjunctive query (with no FDs nor degree constraints) whose
hypergraph is $\calH$, then
$
 1+\tw(\calH) \geq \ghtw(\calH) \geq \fhtw(\calH) \geq \subw(\calH) \geq
 \adw(\calH).
$
Moreover, the gap between any two consecutive entries in the above series
is unbounded.
\ecor

\subsection{New width parameters}
\label{subsec:da:widths}

Using the maximin and minimax formalism, we easily extend the traditional width
parameters to handle general degree constraints.
As shown by Theorem~\ref{thm:glvv:not:tight} we know
that there is a gap between the polymatroid bound and the entropic bound; and
hence it is natural to use $\overline \Gamma^*_n$ itself instead of some
approximation of it.

\bdefn
We define the following width parameters for queries $Q$ with degree constraints
$\dc$. The first two parameters are generalizations of $\fhtw$ and $\subw$
under degree constraints, and the last two are their entropic versions:
\begin{eqnarray}
   \dafhtw(Q)  &\defeq& \minimaxw_{\Gamma_n \cap \hdc}(Q) \label{eqn:dafhtw}\\
   \dasubw(Q)  &\defeq& \maximinw_{\Gamma_n \cap \hdc}(Q) \label{eqn:dasubw}\\
   \edafhtw(Q) &\defeq& \minimaxw_{\overline\Gamma^*_n \cap \hdc}(Q)\label{eqn:edafhtw}\\
   \edasubw(Q) &\defeq& \maximinw_{\overline\Gamma^*_n \cap \hdc}(Q)\label{eqn:edasubw}.
\end{eqnarray}
\label{defn:dafhtw:dasubw:edafhtw:edasubw}
\edefn
({\sf da} stands for ``degree-aware'', and {\sf eda} for ``entropic
degree-aware''.)
The following relationships hold between these four quantities.
\bprop For any query $Q$ with degree constraints
\[
\begin{matrix}
   \edasubw(Q) & \leq & \edafhtw(Q)\\
   \rotatebox[origin=c]{-90}{$\leq$} &  & \rotatebox[origin=c]{-90}{$\leq$}\\
   \dasubw(Q) & \leq & \dafhtw(Q).
\end{matrix}
\]
The quantities $\edafhtw(Q)$ and $\dasubw(Q)$ are not comparable.
The gap between the two sides of any of the above four inequalities can be made
arbitrarily large by some input.
\label{prop:4:quantities}
\eprop
\bp
The inequalities above follow from~\eqref{eqn:dafhtw}-\eqref{eqn:edasubw},
Lemma~\ref{lmm:minimax:maximin} and the fact that
$\overline\Gamma^*_n \subseteq \Gamma_n$.
Next we prove that $\edafhtw(Q)$ and $\dasubw(Q)$ are incomparable.
Let $\zy$ denote the Zhang-Yeung query defined in the proof of
Theorem~\ref{thm:glvv:not:tight}.
Add one additional relation $R_\calV$ to the query, with a very large
relation size bound $|R_\calV| \geq N^4$, where $N$ is the one used in the proof
of Theorem~\ref{thm:glvv:not:tight}.
Call the resulting query $\zy^+$.
Because $\zy^+$ has one relation $R_\calV$ involving all the variables $\calV$,
it has only one non-redundant tree decomposition $(T, \chi)$ and $(T, \chi)$ has only one bag
containing all the variables, i.e. $V(T) = \{t\}$ and $\chi(t) = \calV$.
Therefore,
\begin{multline*}
   \edasubw(\zy^+) = \edafhtw(\zy^+) =
    \outputbound_{\overline\Gamma^*_n \cap \hdc}(\zy^+) <\\
   \outputbound_{\Gamma_n \cap \hdc}(\zy^+) = \dasubw(\zy^+)=\dafhtw(\zy^+).
\end{multline*}
The equalities above follow from the fact that $\zy^+$ has only one non-redundant tree decomposition with
only one bag.
The inequality follows from the proof of Theorem~\ref{thm:glvv:not:tight}.
In fact, with the gap amplification trick we showed in the proof of
Theorem~\ref{thm:glvv:not:tight}, we can show that there are queries for which
the gap is as large as one wants.

On the other hand, for the $4$-cycle query $C_4$, we show
in Example~\ref{ex:4:cycle:edasubw} that
$2\log N = \edafhtw(C_4) > 3/2\log N = \dasubw(C_4)$.
\ep

\begin{ex}[Computing $\dafhtw,\dasubw,\edafhtw,$ and $\edasubw$ for a 4-cycle]
   Consider the $4$-cycle query $C_4$ shown in equation~\eqref{eq:c4},
   which has no proper degree bounds (only input size bounds).
   Let $N$ be the common upperbound on all input relations' sizes, we will show
   that $\dasubw(C_4)=\edasubw(C_4)=3/2\log N$ and
   $\dafhtw=\edafhtw(C_4)=2\log N$.

   Since there is no degree bound,
   $\dafhtw(C_4)=\fhtw(C_4) \cdot \log N = 2\log N$,
   and
   $\dasubw(C_4)=\subw(C_4) \cdot \log N \leq 3/2 \log N$
   (see Example~\ref{ex:gap1}).
   To show $\dasubw(C_4) \geq 3/2 \log N$,
   consider the function $\bar h(F) = \frac{|F|}{2}\log N$, for all $F\subseteq
   [4]$. This function is in $\Mod_4$ hence it is also in $\Gamma^*_4$
   (Recall~\eqref{eq:chain:M-SA-} and Figure~\ref{fig:set:functions}). Moreover, for any bag $B$ of the two tree
   decompositions shown in Figure~\ref{fig:tree:decomposition}, $h(B)=3/2\log N$.
   Hence, $3/2\log N \leq \edasubw(C_4) \leq \dasubw(C_4) \leq 3/2\log N$.
   The fact that $\edafhtw(C_4)=2\log N$ can be shown similarly.
   In particular, if we choose either tree decomposition in Figure~\ref{fig:tree:decomposition},
   there exists a function $h\in\Mod_4 \subsetneq \Gamma^*_4$ where $h(B) = 2 \log N$ for either bag $B$
   of the chosen tree decomposition.
   For tree decomposition 1, there is $h\in\Mod_4$ defined by $h(1)=h(3) = \log N$,
   $h(2)=h(4) = 0$, and $h(F) = \sum_{i \in F} h(i)$ otherwise.
   For tree decomposition 2, there is $h\in\Mod_4$ defined by $h(2)=h(4) = \log N$,
   $h(1)=h(3) = 0$, and $h(F) = \sum_{i \in F} h(i)$ otherwise.
   Hence, $2\log N \leq \edafhtw(C_4) \leq \dafhtw(C_4) \leq 2\log N$.
   \label{ex:4:cycle:edasubw}
\end{ex}

Another somewhat interesting observation which follows from the above is
the following.
Due to the fact that every non-negative modular set function
is entropic, we have
\bcor
When $Q$ has only edge domination constraints $\ed$
(i.e. no FD nor proper degree bounds), we have $\adw(Q) \leq \edasubw(Q)$.
\label{cor:adw:edasubw}
\ecor
Following Marx~\cite{DBLP:journals/mst/Marx11,MR3144912}, for these
queries with only $\ed$ constraints, we have
$\dasubw(Q) = \subw(Q) = O(\adw^4(Q)) = O(\edasubw^4(Q))$.
Thus, when there is only $\ed$ constraints, if a class of queries
has bounded $\edasubw$, then it has bounded $\dasubw$.

\subsection{Summary of known bounds and width parameters}
\label{subsec:bound:summary}

We have mentioned quite a few known and proved new bounds in this paper. The
bounds can be summarized systematically as follows. Each bound is identified by
coordinates $(X, Y, Z)$.
The $X$-axis represents the entropy approximation that is being used: one
starts from the desired target $\overline\Gamma^*_n$, then relaxes it to
$\Gamma_n$ and $\sa_n$. The inclusion chain is $\overline\Gamma^*_n \subset
\Gamma_n \subset \sa_n$.
The $Y$-axis represents the constraints we can extract from the input
database instance, where we can go from bounding domain sizes, relation sizes,
to incorporating more refined degree bounds and functional dependencies.
One chain of inclusion was given by~\eqref{eqn:data-model-chain}.
The $Z$-axis represents the level of sophistication of the query plan
that is being considered in this bound. The simplest query plan just joins
everything together without computing any tree decomposition -- or,
equivalently, this is the plan that uses the trivial tree decomposition with
one bag containing all attributes. (Recall the bounds $\daentropic(Q)$ and $\dapolymatroid(Q)$ from~\eqref{eqn:daeb:dbpb}.) Then, one can get more sophisticated by
computing a tree decomposition then computing its bags. And, lastly the query
plan can also be adaptive to the input instance, yielding the submodular-width
style of complexity.
The bounds are summarized in Figure~\ref{fig:diagram}.
If bound $A$ has coordinates $(X_A, Y_A, Z_A)$ that are less
than the corresponding coordinates $(X_B, Y_B,Z_B)$ of bound $B$
(i.e. if $X_A\leq X_B\wedge Y_A\leq Y_B\wedge Z_A\leq Z_B$), then bound
$A$ $\leq$ bound $B$.

\begin{figure}[!tbp]
\centering{
\tdplotsetmaincoords{70}{110}
   \begin{tikzpicture}[tdplot_main_coords,,scale=0.85, every node/.style={scale=0.75}]
\pgfmathsetmacro{\rot}{-6.8}
\pgfmathsetmacro{\rott}{43}

\draw[thick,->,black] (0,0,0) -- (10.2,0,0) node[anchor=north]
{\color{black}\large $X$};
\draw[thick,->,black] (0,0,0) -- (0,13.4,0) node[anchor=north]
{\color{black}\large $Y$};
\draw[thick,->,black] (0,0,0) -- (0,0,9.2) node[anchor=south]
{\color{black}\large $Z$};

\foreach \x in {0,3,...,6}
	\foreach \y in {0,3,...,9}
		{\draw[opacity=.85,fill=blue!30!white] (\x,\y,0)--(\x+3,\y,0)--(\x+3,\y+3,0)--(\x,\y+3,0)--cycle;}
\foreach \x in {0,3,...,6}
	\foreach \y in {0,3,...,9}
		{\draw[opacity=.85,fill=green!30!white] (\x,\y,4)--(\x+3,\y,4)--(\x+3,\y+3,4)--(\x,\y+3,4)--cycle;}
\foreach \x in {0,3,...,6}
	\foreach \y in {0,3,...,9}
		{\draw[opacity=.85,fill=red!30!white] (\x,\y,8)--(\x+3,\y,8)--(\x+3,\y+3,8)--(\x,\y+3,8)--cycle;}

\draw[dotted] (9,0,0)--(9,0,8);
\draw[dotted,gray] (0,12,0)--(0,12,8);
\draw[dotted] (9,12,0)--(9,12,8);

\node[anchor=center, above left, rotate=\rot]at (1.5,0,8) {\large$\overline\Gamma^*_n$};
\node[anchor=center, above left, rotate=\rot]at (4.5,0,8) {\large$\Gamma_n$};
\node[anchor=center, above left, rotate=\rot]at (7.5,0,8) {\large$\sa_n$};

\node[anchor=center, above, rotate=\rot]at (0,1.5,8) {\large$\hdc$};
\node[anchor=center, above, rotate=\rot]at (0,4.5,8) {\large$\hcc$};
\node[anchor=center, above, rotate=\rot]at (0,7.5,8) {\large$\ed \cdot \log N$};
\node[anchor=center, above, rotate=\rot]at (0,10.5,8) {\large$\vd \cdot \log N$};

\node[anchor=center, align=center, rotate=\rott]at (5,13.1,8)
{\Large\color{red!80!black}$\outputbound_{X \cap Y}(Q)$\\\\
\large Prop.~\ref{prop:unifying:bound} \& Eq.~\eqref{eqn:daeb:dbpb}};

\node[anchor=center, align=center, rotate=\rot]at (7.5,10.5,8)
{\large$\log \vertexbound(Q)$};
\node[anchor=center, align=center, rotate=\rot]at (4.5,10.5,8)
{\large$\log \vertexbound(Q)$};
\node[anchor=center, align=center, rotate=\rot]at (1.5,10.5,8)
{\large$\log \vertexbound(Q)$};

\node[anchor=center, align=center, rotate=\rot]at (7.5,7.5,8)
{\large$\rho(Q) \cdot \log N$};
\node[anchor=center, align=center, rotate=\rot]at (4.5,7.5,8)
{\large$\rho^*(Q) \cdot \log N$};
\node[anchor=center, align=center, rotate=\rot]at (1.5,7.5,8)
{\large$\rho^*(Q) \cdot \log N$};

\node[anchor=center, align=center, rotate=\rot]at (7.5,4.5,8)
{\large$\rho(Q, (N_F)_{F\in\calE})$};
\node[anchor=center, align=center, rotate=\rot]at (4.5,4.5,8)
{\large$\log \agm(Q) $};
\node[anchor=center, align=center, rotate=\rot]at (1.5,4.5,8)
{\large$ \log \agm(Q) $};

\node[anchor=center, align=center, rotate=\rot]at (4.5,1.5,8)
{\large$\dapolymatroid(Q)$};
\node[anchor=center, align=center, rotate=\rot]at (1.5,1.5,8)
{\large$\daentropic(Q)$};

\node[anchor=center, align=center, rotate=\rott]at (5,13.1,4)
{\Large\color{green!40!black}$\minimaxw_{X\cap Y}(Q)$\\\\
\large Prop.~\ref{prop:unifying} \& Defn.~\ref{defn:dafhtw:dasubw:edafhtw:edasubw}};

\node[anchor=center, align=center, rotate=\rot]at (7.5,10.5,4)
{\large$(\tw(Q)+1)\log N$};
\node[anchor=center, align=center, rotate=\rot]at (4.5,10.5,4)
{\large$(\tw(Q)+1)\log N$};
\node[anchor=center, align=center, rotate=\rot]at (1.5,10.5,4)
{\large$(\tw(Q)+1)\log N$};

\node[anchor=center, align=center, rotate=\rot]at (7.5,7.5,4)
{\large$\ghtw(Q)\cdot \log N$};
\node[anchor=center, align=center, rotate=\rot]at (4.5,7.5,4)
{\large$\fhtw(Q)\cdot \log N$};
\node[anchor=center, align=center, rotate=\rot]at (1.5,7.5,4)
{\large$\fhtw(Q)\cdot \log N$};

\node[anchor=center, align=center, rotate=\rot]at (4.5,1.5,4)
{\large$\dafhtw(Q)$};
\node[anchor=center, align=center, rotate=\rot]at (1.5,1.5,4)
{\large$\edafhtw(Q)$};

\node[anchor=center, align=center, rotate=\rott]at (5,13.1,0)
{\Large\color{blue}$\maximinw_{X\cap Y}(Q)$\\\\
\large Prop.~\ref{prop:unifying} \& Defn.~\ref{defn:dafhtw:dasubw:edafhtw:edasubw}};

\node[anchor=center, align=center, rotate=\rot]at (7.5,10.5,0)
{\large$(\tw(Q)+1)\log N$};
\node[anchor=center, align=center, rotate=\rot]at (4.5,10.5,0)
{\large$(\tw(Q)+1)\log N$};
\node[anchor=center, align=center, rotate=\rot]at (1.5,10.5,0)
{\large$(\tw(Q)+1)\log N$};

\node[anchor=center, align=center, rotate=\rot]at (7.5,7.5,0)
{\large$\ghtw(Q)\cdot \log N$};
\node[anchor=center, align=center, rotate=\rot]at (4.5,7.5,0)
{\large$\subw(Q)\cdot \log N$};

\node[anchor=center, align=center, rotate=\rot]at (4.5,1.5,0)
{\large$\dasubw(Q)$};
\node[anchor=center, align=center, rotate=\rot]at (1.5,1.5,0)
{\large$\edasubw(Q)$};

\end{tikzpicture}
}
\caption{A hierarchy of bounds: Each bound corresponds to an entry in a three-dimensional space.
On the $Z$-axis, we have three levels:
The top level (in red) depicts $\outputbound_{X\cap Y}(Q)$ (where $X$ and $Y$ are the $X$- and $Y$- coordinates),
the middle level (in green) depicts $\minimaxw_{X\cap Y}(Q)$,
and the lowest level (in blue) depicts $\maximinw_{X\cap Y}(Q)$.
On the $X$-axis, we have three different coordinates:
$\sa_n$, $\Gamma_n$, and $\overline\Gamma^*_n$, and they are ordered:
$\sa_n\supset \Gamma_n\supset \overline\Gamma^*_n$.
On the $Y$-axis, we have four different ordered coordinates:
$\vd\cdot\log N\supset\ed\cdot\log N\supset\hcc\supset\hdc$.
Notations are defined in Section~\ref{sec:background}.
For example, on the top level of the $Z$-axis (the red level),
the bound whose $X$-coordinate is $\Gamma_n$ and 
whose $Y$-coordinate is $\hcc$ should be $\outputbound_{\Gamma_n\cap\hcc}(Q)$,
which by Proposition~\ref{prop:unifying:bound} corresponds to $\log\agm(Q)$.
Similarly on the lowest level of the $Z$-axis (the blue level),
the bound whose $X$-coordinate is $\Gamma_n$ and whose $Y$-coordinate is $\ed\cdot\log N$ should be $\maximinw_{\Gamma_n\cap (\ed\log N)}(Q)$,
which by Proposition~\ref{prop:unifying} corresponds to $\subw(Q)\cdot\log N$.
This diagram satisfies the following property:
{\it Given a bound $A$ whose coordinates are $(X_A,Y_A,Z_A)$ and a bound $B$ whose coordinates are $(X_B,Y_B,Z_B)$, if $X_A\leq X_B$, $Y_A\leq Y_B$ and $Z_A\leq Z_B$, then bound $A$ $\leq$ bound $B$.}
For example, if bound $A$ has coordinates $(X_A,Y_A,Z_A)=(\Gamma_n, \hdc, \maximinw)$
and bound $B$ has coordinates $(X_B,Y_B,Z_B)=(\Gamma_n, \ed\cdot \log N, \minimaxw)$,
then we can infer that bound $A$ $\leq$ bound $B$.
This is true because bound $A$ corresponds to $\dasubw(Q)$ and bound $B$ corresponds to $\fhtw(Q)\cdot\log N$.
From the same property, we can also infer that the bound $\edasubw(Q)$ is $\leq$ any other bound in the diagram since it has the smallest coordinates $(\overline\Gamma^*_n,\hdc,\maximinw)$ on all three axes.
For more details and insights about this diagram, see Section~\ref{subsec:bound:summary}.}
\Description{A hierarchy of bounds: Each bound corresponds to an entry in a three-dimensional space.}
\label{fig:diagram}
\end{figure}
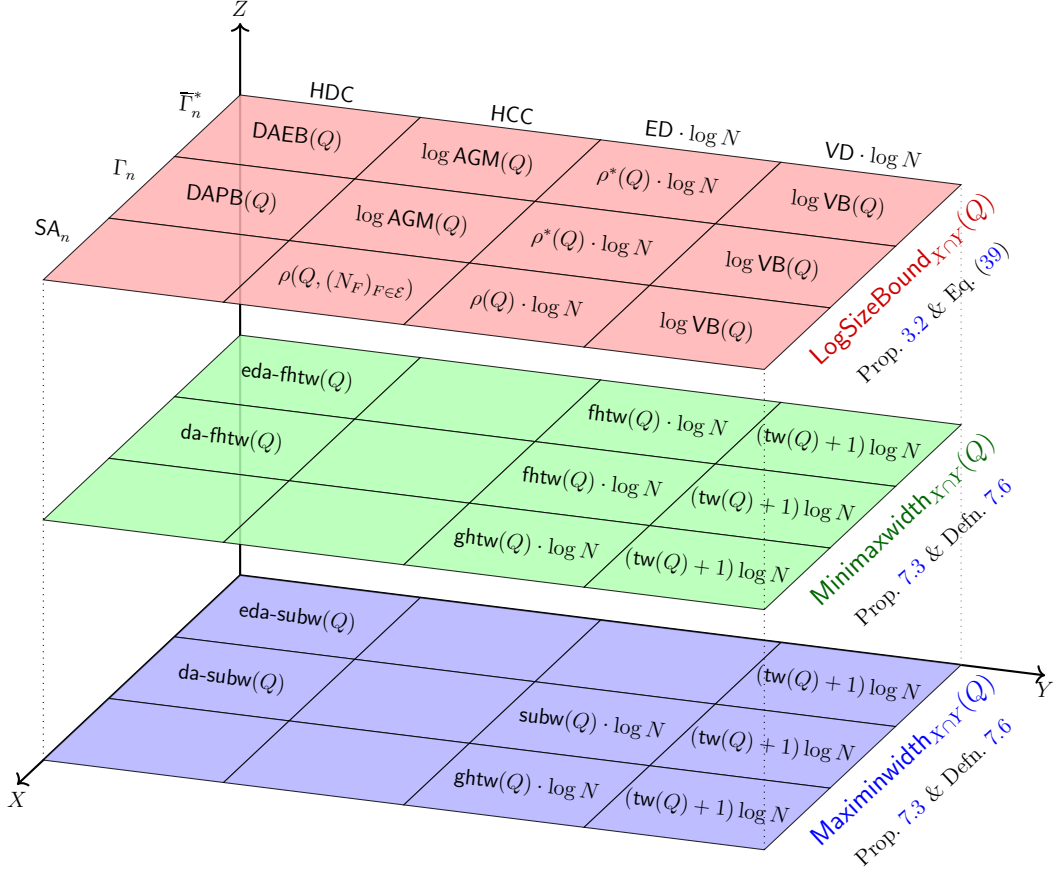

\subsection{Achieving degree-aware width parameters}
\label{subsec:specializations}

With increasing levels of complexity, the corollaries below
explain how $\panda$ can be used to evaluate a (full or Boolean)
conjunctive query achieving the degree-aware polymatroid size
bound defined in~\eqref{eqn:daeb:dbpb},
the degree-aware version of the fractional hypertree width defined
in~\eqref{eqn:dafhtw}, and
the degree-aware version of the submodular width defined
in~\eqref{eqn:dasubw}.
(In Section~\ref{sec:conclusion}, we briefly show how to use
$\panda$ to solve other conjunctive queries as well as aggregate queries.)

A basic fact that we employ in this section is the following: given an
$\alpha$-acyclic query (with no FD nor degree constraints), the query can be
computed in time
$\tilde O(|\text{\sf input}| + \outputsize)$
\cite{DBLP:conf/vldb/Yannakakis81,DBLP:journals/jacm/FlumFG02}.
In particular, let $(T,\chi)$ be some tree decomposition, and suppose we
have a query for which every bag $B$ of the tree decomposition
corresponds uniquely to an input relation $R_B$; then, the query is
$\alpha$-acyclic (the tree decomposition is the join tree of the query),
and it can be computed in linear
time in the input plus output sizes (modulo a $\log$ factor).

\subsubsection{Conjunctive query with degree constraints}

Consider a full conjunctive query $Q$ with degree constraints $\hdc$.
Since a full conjunctive query is just the disjunctive datalog
rule~\eqref{eqn:disjunctive:datalog:query} with $\calB=\{\calV\}$,
we have an immediate corollary.
However, we spell out more details than necessary here for the reader to get a
sense of how the Shannon flow inequality looks in this simple case.
In this case, $\outputbound_{\Gamma_n\cap\hdc}(Q)$ specializes to
$\dapolymatroid(Q)$ whose LP~\eqref{eqn:daeb:dbpb}
can be written more concretely as
\begin{align}
   \max && h(\calV) \label{eqn:V:target}\\
   \text{s.t.}&& h(Y)-h(X) &\leq n_{Y|X} && (X,Y,N_{Y|X}) \in \dc &
   \text{(degree constraints)} \nonumber\\
   && h(I\cup J | J) - h(I|I\cap J) &\leq 0, && I \incomp J &\text{(submodularity)}\nonumber\\
   && h(X) - h(Y) &\leq 0, &&\emptyset \neq X \subset Y \subseteq \calV
   &\text{(monotonicity)}\nonumber\\
   && h(Z) &\geq 0, && \emptyset \neq Z \subseteq \calV.
   &\text{(non-negativity)}\nonumber
\end{align}
The dual of~\eqref{eqn:V:target} is
\begin{eqnarray}
   \min        &\sum_{(X,Y)\in \dc}n_{Y|X} \cdot \delta_{Y|X} \label{eqn:dual:V:target}\\
   \text{s.t.} &\flow(\calV)          & \geq 1 \nonumber\\
               &\flow(Z)              & \geq 0, \emptyset \neq Z \subset \calV.\nonumber\\
               &(\vec\delta, \vec\sigma, \vec\mu) &\geq\mv 0.\nonumber
\end{eqnarray}
Let $h^*$ denote an optimal solution to~\eqref{eqn:V:target},
and $(\vec\delta^*, \vec\sigma^*, \vec\mu^*)$ a dual-optimal solution, then
$h^*(\calV) = \sum_{(X,Y)\in\dc}\delta^*_{Y|X} \cdot n_{Y|X}$.
The $\dapolymatroid(Q)$ bound can also be written as
\[ \dapolymatroid(Q) = 2^{h^*(\calV)}
= \prod_{(X,Y)\in\dc}2^{n_{Y|X}\delta^*_{Y|X}}
= \prod_{(X,Y)\in\dc}N_{Y|X}^{\delta^*_{Y|X}}.
\]
This was in the same form as the more familiar $\agm(Q)$ bound for queries with no
degree constraints.
(See also Proposition~\ref{prop:unifying:bound}).
In this case, Proposition~\ref{prop:sfi:ddl:target} states the following.
Given $\vec\delta \in \Q_+^\dc$, the inequality
\begin{equation}
   h(\calV) \leq \sum_{(X,Y)\in\dc}\delta_{Y|X} \cdot h(Y|X)
\label{eqn:sfi:V:target}
\end{equation}
is a Shannon flow inequality if and only if there exist $\vec\sigma$ and
$\vec\mu$ such that
$(\vec\delta, \vec\sigma, \vec\mu)$ is feasible to the dual LP~\eqref{eqn:dual:V:target}.
(In particular, the inequality holds when $\vec\delta = \vec\delta^*$.)
Note that inequality~\eqref{eqn:sfi:V:target} implies
the upperbound on $h(\calV)$ we wanted because $h(Y|X) \leq n_{Y|X}$.
Note also that inequality \eqref{eqn:sfi:V:target} has only one target $\calV$.

Let $B\subseteq \calV$ be any fixed set.
Let $\mv e^B = (e(Y|X))_{(X,Y)\in\calP} \in \Q_+^\calP$ denote the unit
vector where $e(B|\emptyset)=1$
and $e(Y|X)=0$ otherwise. Then,
inequality~\eqref{eqn:sfi:V:target} is
$\inner{\mv e^\calV, \mv h} \leq \inner{\vec\delta,\mv h}$.

\bcor
A full or Boolean conjunctive query $Q$ with degree constraints can be solved
using $\panda$ in time $\tilde O(N + \poly(\log N)\cdot 2^{\dapolymatroid(Q)})$.
\label{cor:panda:worst:case}
\ecor
\bp
Let $(\vec\delta^*,\vec\sigma^*,\vec\mu^*)$
denote an optimal solution to the dual~\eqref{eqn:dual:V:target}.
Then, from Proposition~\ref{prop:sfi:ddl:target}
$\inner{\mv e^\calV, \mv h} \leq \inner{\vec\delta^*, \mv h}$
is a Shannon flow inequality.
When feeding this Shannon flow inequality to $\panda$, the quantity $\obj$
defined in~\eqref{eqn:obj} is exactly $\dapolymatroid(Q)$ due to strong duality.
The output of $\panda$ is a single table $T_\calV$ which is a \emph{superset}
of $Q$. To compute $Q$ exactly, we semijoin-reduce $T_\calV$ with every input relation $R_F$,
$F\in\calE$; namely we set $T_\calV = T_\calV \ltimes R(F)$ for all $F\in\calE$.
\ep

\subsubsection{Achieving the degree-aware fractional hypertree width}
\label{app:subsec:panda:dafhtw}

Next let us consider the degree-aware version of the fractional hypertree width
defined in~\eqref{eqn:dafhtw}, whose full definition is
\[ \dafhtw(Q) \defeq \min_{(T,\chi)}\max_{t\in V(T)}\max_{h \in \Gamma_n \cap
   \hdc}h(\chi(t)).
\]
Suppose we want to compute $\dafhtw$, a bruteforce algorithm is to go over
all tree decompositions $(T,\chi)$ of $\calH$;\footnote{There are at most $n!$ non-redundant tree
decompositions, each of which has at most $n$ bags (Proposition~\ref{prop:td:properties}).}
for each $(T,\chi)$ we take a bag $B = \chi(t)$ and solve the
inner most optimization problem:
\footnote{The total number of distinct optimization problems that we have to solve for all $n!$ tree decompositions is $\leq 2^n$, since we have one problem for each distinct choice of $B\subseteq[n]$.}
\begin{equation}
\max \{ h(B) \suchthat h \in \Gamma_n \cap \hdc\} \label{eqn:B:target}
\end{equation}
Thanks to Lemma~\ref{lmm:modularization}, we know
$\max \{ h(B) \suchthat h \in \Gamma_n \cap \hdc\}
= \max \{ h(B) \suchthat h \in \Mod_n \cap \hdc\}$
when all constraints are cardinality constraints, in which case this LP can be
reduced to the LP~\eqref{eqn:V:target} by restricting all functions down to the
universe $B$. However, this does not hold when we add FDs to the query, and it is
certainly not true when we add arbitrary degree constraints.
The dual of~\eqref{eqn:B:target} is the following LP:
\begin{eqnarray}
   \min        &\sum_{(X,Y)\in \dc}n_{Y|X} \cdot \delta_{Y|X} \label{eqn:dual:B:target}\\
   \text{s.t.} &\flow(B)          & \geq 1 \nonumber\\
               &\flow(Z)              & \geq 0, \emptyset \neq Z \subseteq \calV.\nonumber\\
               &(\vec\delta, \vec\sigma, \vec\mu) &\geq\mv 0.\nonumber
\end{eqnarray}
Proposition~\ref{prop:sfi:ddl:target} states the following:
given $\vec\delta  \in \Q_+^\dc$, the inequality
\begin{equation}
   h(B) \leq \sum_{(X,Y)\in\dc}\delta_{Y|X} \cdot h(Y|X)
\label{eqn:sfi:B:target}
\end{equation}
is a Shannon flow inequality if and only if there exist $\vec\sigma$ and
$\vec\mu$ such that the vector $(\vec\delta, \vec\sigma, \vec\mu)$ is feasible to the dual
LP~\eqref{eqn:dual:B:target}.
More compactly,
inequality~\eqref{eqn:sfi:B:target} can be written as
$\inner{\mv e^B, \mv h} \leq \inner{\vec\delta,\mv h}$.

\bcor
A full or Boolean conjunctive query $Q$ with degree constraints can be
solved using
$\panda$ in time
\[\tilde O(N + \poly(\log N)\cdot 2^{\dafhtw(Q)}+\outputsize).\]
\label{cor:panda:dafhtw}
\ecor
\bp
As discussed, a tree decomposition $(T,
\chi)$ of $Q$ for which $\max_{t\in V(T)} \max_{h\in \Gamma_n \cap \hdc}
h(\chi(t)) = \dafhtw(Q)$ can be computed bruteforcely by
going through at most $n!$ tree decompositions and
solving a total of $\leq 2^n$ distinct linear programs,
each of which is data independent (except for a
$\log$-factor in the data to compute the degree bounds in advance).

Now suppose we have already fixed an optimal tree decomposition $(T,\chi)$.
For every bag $B$ of this tree decomposition,
we use $\panda$ to compute a relation $T_B$
for which $\Pi_B(Q) \subseteq T_B$.
After that we set $T_B = T_B \ltimes R_F$ for every $F\in\calE$.
(By definition of a tree decomposition,
for every input relation $R_F$, there must be a bag $B\supseteq F$.)
Finally, $Q$ is the join of all these tables $T_B$, which is now an
$\alpha$-acyclic query solvable in linear time (in input plus output size)
by Yannakakis's algorithm~\cite{DBLP:conf/vldb/Yannakakis81,DBLP:journals/jacm/FlumFG02}.

For a fixed bag $B$ of the optimal tree decomposition,
let $\opt$ be the
optimal objective value of the LP~\eqref{eqn:B:target}.
Let $(\vec\delta^*,\vec\sigma^*,\vec\mu^*)$
denote a dual-optimal solution. Then, from Proposition~\ref{prop:sfi:ddl:target}
we know $\inner{\mv e^B, \mv h} \leq \inner{\vec\delta^*, \mv h}$
is a Shannon flow inequality.
When feeding this Shannon flow inequality to $\panda$, the quantity $\obj$
defined in~\eqref{eqn:obj} is exactly $\opt$ due to strong duality.
\ep

\subsubsection{Achieving the degree-aware submodular width}
\label{app:subsec:panda:dasubw}

The third corollary is on achieving the
the degree-aware submodular width. Unlike the first two corollaries,
proving this requires a couple of new ideas.
In order to compute
$\dasubw$, even in a bruteforce manner, we need an auxiliary lemma,
which is somewhat related to Neumann's minimax
theorem \cite{du1995minimax}.

\blmm
\label{lmm:minimax}
Let $A$ and $B$ be two finite sets, and $f : A \times B \to \R$ be any function.
Let $B^A$ denote the set of all maps from $A$ to $B$.
Then, the following holds:
\[ \min_{a\in A}\max_{b\in B} f(a,b)
   =
   \max_{\beta \in B^A} \min_{a\in A} f(a,\beta(a)).
\]
\elmm
\bp
For any $a\in A$, define $\beta^*(a)\defeq \argmax_{b\in B} f(a, b)$.
\[ \min_{a\in A}\max_{b\in B} f(a, b) =
   \min_{a\in A}f(a, \beta^*(a))\leq \max_{ \beta \in  B^A}
   \min_{a\in A} f(a,\beta(a)).
\]
Conversely,
\[
   \max_{\beta \in  B^A} \min_{a\in A} f(a, \beta(a)) \leq
   \max_{\beta \in B^A} \min_{a\in A} f(a, \beta^*(a))
   = \min_{a\in A} f(a,\beta^*(a))
   = \min_{a\in A}\max_{b\in  B} f(a, b).
\]
\ep

Intuitively, on the LHS we select for each $a \in A$ a neighbor $b$ for which
$f(a,b)$ is maximized; call such neighbor $a$'s ``representative''. Then, we
select the $a$ with the least-weight representative.
On the RHS, we have a ``representative selector'' $\beta$; we pick the $a$-value
with the least-weight selected representative, and then maximize over all selectors.

\bcor[Restatement of Theorem~\ref{thm:main:panda:submodular}]
A full or Boolean conjunctive query $Q$ with degree constraints $\dc$ can be solved using $\panda$ in time
\[\tilde O(N + \poly(\log N)\cdot 2^{\dasubw(Q)}+\outputsize).\]
\label{cor:panda:dasubw}
\ecor
\vspace{-.3in}
\bp
We first apply Lemma~\ref{lmm:minimax} to reformulate~\eqref{eqn:dasubw}.
To this end, we need a few notations.
(Recall from Section~\ref{subsec:no-FD-HDC} that $\td$ denotes the set of all non-redundant tree decompositions of $Q$.)
Let $\calM$ be the set of all maps $\beta : \td \to 2^\calV$,
such that $\beta(T,\chi) = \chi(t)$ for some $t \in V(T)$. In English,
$\beta$ is a ``bag selector'' map that picks out a bag from each tree
decomposition $(T,\chi)$.
Let $\bB$ be the collection of images of all $\beta \in \calM$, i.e.
    \begin{equation}
       \bB = \{ \calB \suchthat \calB = \image(\beta)
                  \text{ for some } \beta \in \calM \}.
       \label{eqn:defn:bB}
    \end{equation}
Using Lemma~\ref{lmm:minimax}, we can rewrite \eqref{eqn:dasubw} as follows.

\begin{eqnarray}
   \dasubw(\calH) &=& \max_{h\in\Gamma_n \cap \hdc} \min_{(T,\chi) \in \td} \max_{t\in V(T)}
   h(\chi(t))\nonumber\\
   (\text{Lemma~\ref{lmm:minimax}})
   &= & \max_{h\in \Gamma_n\cap \hdc} \max_{\beta \in \calM}
   \min_{(T,\chi)} h(\beta(T, \chi))\nonumber\\
   &= &
   \max_{h\in\Gamma_n\cap\hdc}
   \max_{\beta \in \calM}
   \min_{B\in \image(\beta)} h(B) \nonumber\\
   &= &
   \max_{\beta \in \calM}
   \max_{h\in\Gamma_n\cap\hdc}
   \min_{B\in \image(\beta)}
   h(B) \nonumber\\
   &= &
   \max_{\calB \in \bB}
   \max_{h\in\Gamma_n\cap\hdc}
   \min_{B\in \calB}
   h(B) \nonumber\\
   (\text{Lemma~\ref{lmm:lambda:1:reformulation}})
   &= &
   \max_{\calB \in \bB}
   \underbrace{
   \max_{h \in\Gamma_n\cap\hdc}
   \left\{
      \sum_{B \in \calB} \lambda_B h(B)
   \right\}
   }_{\text{Linear program~\eqref{eqn:ddl:target}}}
   \label{eqn:dasubw-maxLPs}
\end{eqnarray}
In \eqref{eqn:dasubw-maxLPs}, for a fixed $\calB \in \bB$ the inner
$\max$ is exactly the LP on the right-hand side of~\eqref{eqn:ddl:target} whose dual
is~\eqref{eqn:dual:ddl:target}.
In particular, to compute the $\dasubw(Q)$, we can
solve a collection of linear programs and take the maximum solution among them.
Since there is a different linear program for each valid choice of $\calB$ (and $\calB$ is a set of subsets of $[n]$), the
total number of linear programs is $\leq 2^{2^{n}}$.

In order to compute $Q$ in the desired time, we mimic this strategy in the
algorithm.
For each $\calB \in \bB$, we solve the LP~\eqref{eqn:ddl:target}. Let
$(\vec\delta^*, \vec\sigma^*,\vec\mu^*)$ denote a dual optimal solution.
From Proposition~\ref{prop:sfi:ddl:target},
$\inner{\vec\lambda, \mv h} \leq \inner{\vec\delta^*,\mv h}$ is a Shannon flow
inequality. On this input $\panda$ computes a tuple
$\mv T_\calB = (T_B)_{B\in\calB}$
of tables such that, for every $\mv a \in Q$ there exists a $B\in\calB$ for
which $\Pi_B(\mv a) \in T_B$.

Let $M = |\bB|$ and suppose $\bB = \{\calB_1,\dots,\calB_M\}$.
We prove the following claims:

\begin{claim}
For every $(B_1,\dots,B_M) \in \prod_{i=1}^M\calB_i$, there is a
tree decomposition $(T,\chi) \in \td(Q)$ such that, for every tree node
$t \in V(T)$, $\chi(t) = B_j$ for some $j \in [M]$.
Breaking ties arbitrarily, we call this tree decomposition {\em the} tree
decomposition (of $Q$) associated with the tuple $(B_1,\dots,B_M)$.
\label{clm:panda:dasubw1}
\end{claim}

\begin{claim}
For any tuple $(B_1,\dots,B_M) \in \prod_{i=1}^M\calB_i$ with
associated tree decomposition $(T, \chi)$, define
$$J(B_1,\dots,B_M) \defeq \Join_{t\in V(T)} T_{\chi(t)}.$$ Then,
\begin{equation} Q \subseteq
   \left(\bigcup_{(B_1,\dots,B_M) \in \prod_{i=1}^M\calB_i} \Join_{j=1}^M
   T_{B_j}\right)
   \subseteq
   \left(\bigcup_{(B_1,\dots,B_M) \in \prod_{i=1}^M\calB_i}
   J(B_1,\dots,B_M)\right).
\label{eqn:inclusion}
\end{equation}
\label{clm:panda:dasubw2}
\end{claim}
Assuming the claims, the query can be computed by taking each tuple $(B_1,\dots,B_M) \in \prod_{i=1}^M\calB_i$ and running Yannakakis algorithm
to compute the output of the associated join query $J(B_1,\dots,B_M)$ within a runtime of
$\tilde O(2^{\dasubw(Q)}+|J(B_1,\dots,B_M)\cap Q|)$.
If we apply Yannakakis algorithm straight up on the join $J(B_1,\dots,B_M)$, then we
can attain the runtime $\tilde O(2^{\dasubw(Q)}+|J(B_1,\dots,B_M)|)$,
because every table $T_{B_j}$ has size bounded by $2^{\dasubw(Q)}$.
To reduce the runtime down to
$\tilde O(2^{\dasubw(Q)}+|J(B_1,\dots,B_M)\cap Q|)$, we semijoin-reduce every
table $T_{\chi(t)}$ in the join $J(B_1,\dots,B_M)$ with every input relation before we join the tables $T_{\chi(t)}$ together.
The above process has to be repeated for every tuple $(B_1,\dots,B_M) \in \prod_{i=1}^M\calB_i$.
There are $\prod_{i=1}^M|\calB_i|$ such tuples, which is a query-complexity
quantity.

We next prove Claim~\ref{clm:panda:dasubw1}. Fix a tuple $(B_1,\dots,B_M) \in \prod_{i=1}^M\calB_i$.
Suppose to the contrary that for {\em every} tree decomposition $(T,\chi)$
there is a tree node $t \in
V(T)$ such that $\chi(t)\notin \{B_1,\dots,B_M\}$.
Call the bag $\chi(t)$ a {\em missed} bag of the tree decomposition $(T,\chi)$.
Consider a bag selector $\bar\beta : \td(Q) \to 2^\calV$ where $\bar\beta(T,\chi)$ is
exactly the missed bag of the tree decomposition $(T,\chi)$. Note that
by definition of $\bB$ we have $\image(\bar\beta) = \calB_k$ for some $k\in [M]$.
This is a contradiction because $B_k\in\calB_k$ must then be the missed bag of some tree
decomposition, but it is not missed anymore (since it belongs to $\{B_1,\dots,B_M\}$).

Finally, we prove Claim~\ref{clm:panda:dasubw2}. Consider an output tuple $\mv a \in Q$.
For each $j \in [M]$, let $B_j$ denote the target in $\calB_j$ for which $\Pi_{B_j}(\mv a) \in
T_{B_j}$. Then, obviously $\mv a \in \Join_{j=1}^M T_{B_j}$.
This proves the first inclusion in~\eqref{eqn:inclusion}.
The second inclusion is obvious because the join $J(B_1,\dots,B_M)$ drops
some tables from the join $\Join_{j=1}^M T_{B_j}$.
\ep

For a simple example on the algorithm from the proof of Corollary~\ref{cor:panda:dasubw}, see Example~\ref{ex:intro:subw}.
\section{Discussions}
\label{sec:conclusion}

Our negative answer to Question 1 from Section~\ref{sec:the:problems} leads to a natural question: can we design an
algorithm whose runtime matches the entropic bound under the presence of FDs or
degree constraints? Worst-case optimal join algorithms~\cite{NPRR, skew, LFTJ, tetris}
were able to achieve this when there are no FDs (nor degree constraints). And, as shown in~\cite{csma}
there are classes of queries with FDs for which the answer is positive (using
the chain algorithm). A natural direction is to extend the class of queries
with FDs where the entropic bound can be met, beyond
what was shown in~\cite{csma}.

Along the same line, the next natural open question is to design algorithms
to evaluate disjunctive datalog rules matching the {\em entropic} bound
$\outputbound_{\overline\Gamma^*_n\cap \hdc}(P)$. From there, the possibility of
achieving $\edasubw$ and/or $\edafhtw$ is within reach. We already have an
example where $\panda$ was able to achieve $\edasubw$ and $\edafhtw$:
the $4$-cycle query.
In general, the inner-most column of Figure~\ref{fig:diagram} (i.e. the column with $X$-coordinate of $\overline\Gamma^*_n$ and $Y$-coordinate of $\hdc$) contains open
algorithmic questions: we do not know of algorithms meeting bounds involving both $\overline\Gamma^*_n$ and $\hdc$.
Another big open question is to remove the polylog factor from the runtime of
$\panda$.

As was mentioned right after Corollary~\ref{cor:adw:edasubw}, for queries with
only cardinality constraints, $\dasubw(Q) = O(\edasubw(Q)^4)$ and hence
bounded entropic submodular width implies bounded submodular width and vice
versa. It is open whether or not the same relationship holds when $Q$ has FDs
and/or degree bounds. 

The algorithmic results we formally stated in the paper apply only to full and to Boolean
conjunctive queries (Theorem~\ref{thm:main:panda:submodular}). This begs a 
natural question: ``what happens to proper
conjunctive queries and to aggregate queries (such as $\faq$-queries over a single
semiring, called the {\sf SumProd} or $\faqcs$ queries~\cite{faq,AM00})?''
Our technique and results easily extend
to the case of general conjunctive queries $Q$ (i.e. queries whose set of free 
variables isn't necessarily empty nor equal to $\calV$).  
To deal with these queries, the first minor change is to 
generalize the notions of $\maximinw$ and $\minimaxw$ defined in
Definition~\ref{defn:maximin:minimax}: In particular, the $\min_{(T,\chi)}$ should now  range only over 
``free-connex`` tree decompositions $(T,\chi)$ instead of ranging over all tree
decompositions. A ``free-connex'' tree decomposition is a tree decomposition
constructed from a GYO variable elimination ordering in which non-free variables are eliminated
before the free variables (see~\cite{faq} for how one obtains a tree
decomposition from a variable ordering).
Alternatively, a free-connex tree decomposition can be defined as a tree decomposition whose bags form a free-connex acyclic query~\cite{Segoufin13, Bagan2007}.
Achieving the $\minimaxw$ can be done in the exact same way as before. Achieving the $\maximinw$ requires
a second minor change:
we need a collection of auxiliary disjunctive datalog rules. These rules
are obtained using the distributivity law in exactly the same way it was applied
in~\eqref{eqn:4:cycle:distributivity}, except that the head conjunction is now only
over free-connex tree decompositions. In other words, the ``bag selector map''
$\beta$ in the proof of Corollary~\ref{cor:panda:dasubw} selects bags only from
``free-connex'' tree decompositions.
In the case of $\faqcs$ queries, we can easily achieve the $\dafhtw$-runtime
from Corollary~\ref{cor:panda:dafhtw}, with $\dafhtw$ replaced by the
width over free-connex tree decompositions as above.
However, achieving the $\dasubw$ for general $\faqcs$ queries remains an open problem.
To keep the paper accessible, we have decided against presenting the more
general treatment outlined in this paragraph, concentrating only on Boolean and
full conjunctive queries.


\bibliographystyle{acm}
\bibliography{main}

\newpage

\appendix

\section{Missing details from the introduction}
\label{app:sec:the:problems}

First, we prove the correctness of three bounds for the query from Example~\ref{ex:intro:1}.  

\begin{itemize}
\item Bound (a) follows from:
\begin{eqnarray*}
\log|Q| = h(A_1A_2A_3A_4) &\leq& h(A_1A_2)+h(A_3A_4)\leq 2\log N,
\end{eqnarray*}
which implies $|Q| \leq N^2$.

\item Bound (b) follows from:
\begin{eqnarray*}
3 \log N + 2 \log D &\geq& 
h(A_2A_3) + h(A_3A_4) + h(A_4A_1) + h(A_2|A_1) + h(A_1|A_2)\\
&\geq& h(A_3)+h(A_2A_3A_4)+ h(A_4A_1) + h(A_2|A_1) + h(A_1|A_2)\\
&\geq& h(A_3A_4A_1)+h(A_2A_3A_4)+ h(A_2|A_1) + h(A_1|A_2)\\
&\geq& h(A_3A_4A_1)+h(A_2A_3A_4)+ h(A_2|A_3A_4A_1) + h(A_1|A_2A_3A_4)\\
&=& 2h(A_1A_2A_3A_4)\\
&=& 2\log |Q|,
\end{eqnarray*}
which implies $|Q|\leq N^{3/2}\cdot D$.

\item Bound (c) follows from (b) by setting $D=1$.
\end{itemize}
Now, we prove the tightness of the three bounds:
\begin{itemize}
\item Bound (a) is tight on the following database instance:
$R_{12}=R_{34}=[N]\times[1]$, $R_{23}=R_{41}=[1]\times[N]$.
The output is $Q(A_1,A_2,A_3,A_4)=[N]\times[1]\times[N]\times[1]$.
\item Let $K\defeq\left\lfloor\sqrt{N}\right\rfloor$. Bound (c) is \emph{asymptotically} tight
 on the instance
$R_{12} = \setof{(i,i)}{i \in [K]}$,
$R_{23}=R_{34}=R_{41}=[K] \times [K]$. The output is
$Q(A_1, A_2, A_3, A_4)=\left\{(i, i, j, k)\suchthat i, j, k\in[K]\right\}.$
\item Bound (b) is tight on the following instance 
(which is a generalization of the previous one to $1\leq D\leq K$.)
\[R_{23}=R_{34}=R_{41}=[K] \times [K],\]
\[R_{12}=\left\{(i, j)\suchthat i, j\in[K], (j-i) \mod K < D\right\}.\]
\end{itemize}

\section{More on Shannon-flow inequalities and proof sequences}
\label{app:sec:ps}
This section presents extra results on Shannon flow inequalities and their proof sequences that go beyond the results of Section~\ref{sec:ps}.

\subsection{Bounding $\norm{\vec\delta}_1, \norm{\vec\mu}_1$, and $\norm{\vec\sigma}_1$ (w.r.t $\norm{\vec\lambda}_1$)}
\label{app:subsec:bounding-delta-sigma-mu}

Given a Shannon flow inequality $\inner{\vec\lambda, \mv h}\leq\inner{\vec\delta, \mv h}$
with a witness $(\vec\sigma,\vec\mu)$, there are various ways to construct a proof sequence for
$\inner{\vec\lambda, \mv h}\leq\inner{\vec\delta, \mv h}$.
(Theorem~\ref{thm:ps:construction:1} showed one possible construction,
and we will see more efficient constructions in Section~\ref{sec:poly-sized-proofseq}.)
What those various constructions have in common is that the length
of the resulting proof sequence depends on $\norm{\vec\sigma}_1$,
$\norm{\vec\delta}_1$ and/or $\norm{\vec\mu}_1$.
In turn, the runtime of the $\panda$ algorithm depends on the length of the proof sequence,
which provides a motivation for minimizing $\norm{\vec\sigma}_1$,
$\norm{\vec\delta}_1$ and $\norm{\vec\mu}_1$ as much as possible, which is our target in this section.
In particular, we want to replace the original inequality (and its witness)
with a new inequality that is ``just as good'' but has lower
$\norm{\vec\sigma}_1$,  $\norm{\vec\delta}_1$ and $\norm{\vec\mu}_1$.

This section is outlined as follows. We start Section~\ref{subsec:bounding-conditional-mu}
with some definitions that formalize what we meant by an inequality being ``just as good'' as another,
then we proceed to bounding the total of monotonicity terms of the form
$\mu_{X, Y}$ for
$X\neq\emptyset$.
In Section~\ref{subsec:bounding-whole-delta}, we bound the total of terms of the form
$\delta_{Y|\emptyset}$.
Section~\ref{subsec:bounding-mu} shows that this is equivalent to bounding monotonicity terms
$\mu_{\emptyset, Y}$,
hence we now have a bound on all monotonicity terms $\norm{\vec\mu}_1$.
Finally, Section~\ref{subsec:bounding-delta-and-sigma} shows that bounding
$\norm{\vec\mu}_1$ imposes a bound on both $\norm{\vec\delta}_1$ and $\norm{\vec\sigma}_1$.

\subsubsection{Bounding the total of $\mu_{X,Y}$ for $X\neq\emptyset$}
\label{subsec:bounding-conditional-mu}

\bdefn[$\vec\delta'$ dominated by $\vec\delta$]
\label{defn:delta-domination}
Given two vectors $\vec\delta, \vec\delta'\in\Q_+^\calP$,
we say that $\vec\delta'$ is \emph{dominated} by $\vec\delta$ if there exists a function $f:\calP\times\calP\rightarrow \Q_+$ that satisfies the following conditions.
\bi
\item If $f((X,Y),(X',Y'))\neq 0$, then $Y'\setminus X'\subseteq Y\setminus X$ and $X'\supseteq X$.
\item $\forall (X,Y)\in\calP, \quad \sum_{(X',Y')\in\calP}f((X,Y),(X',Y'))\leq \delta_{Y|X}$.
\item $\forall (X',Y')\in\calP, \quad \sum_{(X,Y)\in\calP}f((X,Y),(X',Y'))\geq \delta'_{Y'|X'}$.
\ei
The function $f$ is called the \emph{domination function for $(\vec\delta,\vec\delta')$}.
\edefn

\bdefn[Rational compatibility]
Given two vectors of rational numbers $\mv q\in\Q_+^l, \mv q'\in\Q_+^{l'}$(for some integers $l, l'$),
we say that $\mv q'$ is \emph{rationally-compatible} with $\mv q$
if the minimum common denominator of all entries in $\mv q$
is a common denominator (not-necessarily minimum) for all entries in $\mv q'$.
\edefn

\blmm
For any Shannon flow inequality $\inner{\vec\lambda, \mv h} \leq \inner{\vec\delta,\mv h}$
with a witness $(\vec\sigma,\vec\mu)$,
there exists a Shannon flow inequality $\inner{\vec\lambda', \mv h} \leq \inner{\vec\delta',\mv h}$
with a tight\footnote{See Definition~\ref{defn:tight-witness}.} witness $(\vec\sigma',\vec\mu')$ such that $\vec\delta'$ is dominated by $\vec\delta$ and $\vec\lambda'$ dominates $\vec\lambda$ and $\mu'_{X,Y}=0$ for all $\emptyset\neq X\subset Y\subseteq \calV$.
Moreover, $(\vec\lambda',\vec\delta',\vec\sigma',\vec\mu')$ is rationally-compatible with
$(\vec\lambda,\vec\delta,\vec\sigma,\vec\mu)$.
\label{lmm:bound-conditional-mu}
\elmm
\bp
W.L.O.G. we can assume the witness $(\vec\sigma,\vec\mu)$ to be tight.
We start by choosing $\vec\delta'=\vec\delta$, $\vec\lambda'=\vec\lambda$,
$(\vec\sigma',\vec\mu')=(\vec\sigma,\vec\mu)$.
Moreover, we initially choose the domination function $f$ for $(\vec\delta,\vec\delta')$ to be
\[
f((X,Y),(X',Y'))=
\begin{cases}
\delta_{Y|X} & \text{if $X'=X$ and $Y'=Y$}\\
0 & \text{otherwise.}
\end{cases}
\]
Similarly, the domination function $g$ for $(\vec\lambda',\vec\lambda)$ is chosen to be
\[
g((X',Y'),(X,Y))=
\begin{cases}
\lambda_{Y|X} & \text{if $X'=X$ and $Y'=Y$}\\
0 & \text{otherwise.}
\end{cases}
\]
Let $D$ be the common dominator of $(\vec\lambda',\vec\delta',\vec\sigma',\vec\mu')$, and $w$ be $1/D$.
While there are $\emptyset\neq X\subset Y\subseteq \calV$ where $\mu'_{X,Y}>0$ we apply the following.
If $\lambda'_{X}>0$, then we reduce both $\lambda'_{X}$ and $\mu'_{X,Y}$ by $w$ and increase $\lambda'_{Y}$ by $w$.
Moreover, we choose an arbitrary $Z$ (if any) where $g((\emptyset,X),(\emptyset,Z))>0$,
and we reduce $g((\emptyset,X),(\emptyset,Z))$ by $w$ and increase $g((\emptyset,Y),(\emptyset,Z))$ by $w$.
Otherwise (if $\lambda'_X=0$), since the witness $(\vec\sigma',\vec\mu')$ is tight and $\mu'_{X,Y}$ is increasing $\flow(X)$, there must be some dual variable that is reducing $\flow(X)$ back to $0$. In particular, we recognize the following three cases, which are depicted in Figure~\ref{fig:bound-conditional-mu}:
\be
\item If there is $W\subset X$ such that $\mu'_{W,X}>0$,
then we reduce both $\mu'_{W,X}$ and $\mu'_{X,Y}$ by $w$ and increase $\mu'_{W,Y}$ by $w$.
\item If there is some $Y'\supset X$ such that $\delta'_{Y'|X}>0$,
then we reduce both $\mu'_{X, Y}$ and $\delta'_{Y'|X}$ by $w$
and increase both $\delta'_{Y\cup Y'|Y}$ and $\mu'_{Y',Y\cup Y'}$ by $w$.
Moreover, we choose an arbitrary $W\subset Z\subseteq \calV$ (if any) where $f((W,Z),(X,Y'))>0$,
and we reduce $f((W,Z),(X,Y'))$ by $w$ and increase $f((W,Z),(Y,Y\cup Y'))$ by $w$.
\item If there is some $X'\incomp X$ such that $\sigma'_{X,X'}>0$,
then we reduce both $\mu'_{X,Y}$ and $\sigma'_{X,X'}$ by $w$ and increase each one of $\sigma'_{Y,X'}$, $\mu'_{X\cup X',Y\cup X'}$, $\mu'_{X\cap X',Y\cap X'}$ by $w$.
\ee
In all cases, we are maintaining $(\vec\sigma',\vec\mu')$ as a valid and tight witness.
We are also maintaining that $\vec\delta'$ is dominated by $\vec\delta$ and $\vec\lambda'$ dominates $\vec\lambda$.
To prove that the above process terminates, consider the following non-negative, bounded, integral function.
\[\phi(\vec\lambda',\vec\delta',\vec\sigma')\defeq
D\cdot\left[\sum_{B}\lambda'_B(n-|B|)
+\sum_{X\subset Y} \delta'_{Y|X}(|Y|-|X|)(n-|X|)
+\sum_{I\incomp J}\sigma'_{I,J}(2n-|I|-|J|)\right].\]
In all cases except (1), $\phi(\vec\lambda',\vec\delta',\vec\sigma')$ decreases by $1$.
Case (1) reduces the quantity $D\cdot \norm{\vec\mu'}_1$ by $1$.
Initially, we have $\norm{\vec\mu'}_1= \norm{\vec\mu}_1$. However, $ \norm{\vec\mu'}_1$ increases in case (3).
Hence, initially case (1) can maximally be repeated $D\cdot \norm{\vec\mu}_1$ consecutive times.
Later on, every time we increase some $\mu'_{X,Y}$ in cases (2) and (3), we can immediately check whether there is some $W\subset X$ where $\mu'_{W,X}>0$ and if so apply case (1). By doing so, we can amortize the cost of case (1) over cases (2) and (3).
\ep

\begin{figure}[ht!]
\newcommand{\drawdot}[5]
{
   \coordinate (#3) at(#1, #2);
   \fill (#1,#2) circle[radius=2pt] node[#4] {#5};
}
\tikzset
{
   ->-/.style=
   {
      decoration=
      {
         markings,
         mark=at position #1 with {\arrow{Latex}[scale=2]}
      },
      postaction={decorate},
      line width=1pt
   },
   ->-/.default=0.6
}
\centering
\begin{tikzpicture}[scale=1, every node/.style={transform shape}]
   \begin{scope}[shift={(0,0)}]
      \drawdot{0}{0}{Y}{left}{$Y$};
      \drawdot{0}{-2}{X}{left}{$X$};
      \drawdot{0}{-4}{W}{left}{$W$};
      \draw[->-,red] (Y)--(X) node[midway, right]{$\mu'_{X,Y}$};
      \draw[->-] (X)--(W) node[midway, right]{$\mu'_{W,X}$};
   \end{scope}
   \begin{scope}[shift={(0,-8)}]
      \drawdot{0}{0}{Y}{left}{$Y$};
      \drawdot{0}{-2}{X}{left}{$X$};
      \drawdot{0}{-4}{W}{left}{$W$};
      \draw[->-=.5,red] (Y)--(W) node[midway, right]{$\mu'_{W,Y}$};
   \end{scope}
   
   \begin{scope}[shift={(2,0)}]
      \drawdot{.5}{-1.5}{Y}{left}{$Y$};
      \drawdot{3.5}{-1.5}{Y'}{right}{$Y'$};
      \drawdot{2}{-3}{YnY'}{right}{$Y\cap Y'$};
      \drawdot{2}{-4}{X}{below}{$X$};
      \draw[->-,red] (Y)--([xshift=-.6pt]YnY') node[midway, left]{$\mu'_{X,Y}$};
      \draw[->-=.55,red] ([xshift=-.6pt]YnY')--([xshift=-.6pt]X);
      \draw[->-=.55,blue] ([xshift=+.6pt]X)--([xshift=+.6pt]YnY');
      \draw[->-,blue] ([xshift=+.6pt]YnY')--(Y') node[midway, right]{$\delta'_{Y'|X}$};
   \end{scope}
   \begin{scope}[shift={(2,-8)}]
      \drawdot{.5}{-1.5}{Y}{left}{$Y$};
      \drawdot{3.5}{-1.5}{Y'}{right}{$Y'$};
      \drawdot{2}{-3}{YnY'}{right}{$Y\cap Y'$};
      \drawdot{2}{0}{YuY'}{above}{$Y\cup Y'$};
      \drawdot{2}{-4}{X}{below}{$X$};
      \draw[->-,blue] (Y)--(YuY') node[pos=.6, left]{$\delta'_{Y\cup Y'|Y}$};
      \draw[->-,red] (YuY')--(Y') node[pos=.4, right]{$\mu'_{Y',Y\cup Y'}$};
      \draw[] (Y)--(YnY');
      \draw[] (YnY')--(X);
      \draw[] (YnY')--(Y');
   \end{scope}
   
   \begin{scope}[shift={(8,0)}]
      \drawdot{0}{-1.5}{Y}{left}{$Y$};
      \drawdot{0}{-2.5}{X}{left}{$X$};
      \drawdot{3}{-2.5}{X'}{right}{$X'$};
      \drawdot{1.5}{-4}{XnX'}{below}{$X\cap X'$};
      \drawdot{1.5}{-1}{XuX'}{above}{$X\cup X'$};
      \draw[DarkGreen, thick] (X)--(XnX')--(X')--(XuX')--cycle;
      \node[DarkGreen] at($(X)!.5!(X')$) {$\sigma'_{X,X'}$};
      \draw[->-=.65,red] (Y)--(X) node[midway,left] {$\mu'_{X,Y}$};
   \end{scope}
   \begin{scope}[shift={(8,-8)}]
      \drawdot{0}{-1.5}{Y}{left}{$Y$};
      \drawdot{0}{-2.5}{X}{left}{$X$};
      \drawdot{3}{-2.5}{X'}{right}{$X'$};
      \drawdot{1.5}{-4}{XnX'}{below}{$X\cap X'$};
      \drawdot{1.5}{-3}{YnX'}{above,xshift=-.05cm}{$Y\cap X'$};
      \drawdot{1.5}{-1}{XuX'}{below}{$X\cup X'$};
      \drawdot{1.5}{0}{YuX'}{above}{$Y\cup X'$};
      \draw[dotted] (X)--(XnX')--(X')--(XuX')--cycle;
      \draw[DarkGreen, thick] (Y)--(YnX')--(X')--(YuX')--cycle;
      \node[DarkGreen] at($(Y)!.5!(X')+(0,.25cm)$) {$\sigma'_{Y,X'}$};
      \draw[->-=.65,red!50,dotted] (Y)--(X);
      \draw[->-=.65,red] (YuX')--(XuX') node[midway, right] {$\mu'_{X\cup X',Y\cup X'}$};
      \draw[->-=.65,red] (YnX')--(XnX') node[midway, right] {$\mu'_{X\cap X',Y\cap X'}$};
   \end{scope}
   
   \foreach \x in {0, 4, 9.5}
      \draw[-latex,double, line width=4, gray] (\x, -5)--(\x, -7);
   \node at(0  , 1) {Case (1)};
   \node at(4  , 1) {Case (2)};
   \node at(9.5, 1) {Case (3)};
   \foreach \x in {1.5, 6.5}
      \draw[gray] (\x,.5)--(\x,-12);
\end{tikzpicture}
\caption{Illustration of the proof of Lemma~\ref{lmm:bound-conditional-mu}.}
\Description{Illustration of the proof of Lemma~\ref{lmm:bound-conditional-mu}.}
\label{fig:bound-conditional-mu}
\end{figure}

\bcor[The total of $\mu_{X,Y}$ for $X\neq\emptyset$ is $\leq\norm{\vec\lambda}_1$]
For any Shannon flow inequality $\inner{\vec\lambda, \mv h} \leq \inner{\vec\delta,\mv h}$
with a witness $(\vec\sigma,\vec\mu)$,
there exists a Shannon flow inequality $\inner{\vec\lambda, \mv h} \leq \inner{\vec\delta',\mv h}$
with a tight witness $(\vec\sigma',\vec\mu')$ such that $\vec\delta'$ is dominated by $\vec\delta$ and
\[\forall \emptyset\neq X\subseteq \calV, \quad \sum_{Y\supset X} \mu'_{X,Y} \leq \lambda_{X}.\]
Hence
\begin{equation}
\sum_{\emptyset\neq X\subset Y\subseteq \calV} \mu'_{X,Y}\leq \norm{\vec\lambda}_1.
\end{equation}
Moreover, $(\vec\lambda,\vec\delta',\vec\sigma',\vec\mu')$ is rationally-compatible with
$(\vec\lambda,\vec\delta,\vec\sigma,\vec\mu)$.
\label{cor:bound-conditional-mu}
\ecor

\subsubsection{Bounding the total of $\delta_{Y|\emptyset}$}
\label{subsec:bounding-whole-delta}

\blmm[The total of $\delta_{Y|\emptyset}$ is $\leq n\cdot \norm{\vec\lambda}_1$]
For any Shannon flow inequality $\inner{\vec\lambda, \mv h} \leq \inner{\vec\delta,\mv h}$
with a witness $(\vec\sigma,\vec\mu)$,
there exists a Shannon flow inequality $\inner{\vec\lambda, \mv h} \leq \inner{\vec\delta',\mv h}$
with a witness $(\vec\sigma',\vec\mu')$
such that $\vec\delta'$ is dominated by $\vec\delta$ and
\[\forall v\in \calV, \quad \sum_{Y\ni v}\delta'_{Y|\emptyset} \leq \norm{\vec\lambda}_1.\]
Hence
\begin{equation}
\sum_{Y\subseteq \calV}\delta'_{Y|\emptyset} \leq n\cdot \norm{\vec\lambda}_1.
\end{equation}
Moreover, $(\vec\lambda,\vec\delta',\vec\sigma',\vec\mu')$ is rationally-compatible with
$(\vec\lambda,\vec\delta,\vec\sigma,\vec\mu)$.
\label{lmm:bound-whole-delta}
\elmm

\bp
Fix an arbitrary variable $v\in \calV$ such that
$\sum_{Y\ni v}\delta_{Y|\emptyset} > \norm{\vec\lambda}_1$.
Define a function $\calF:\Q_+^\calP\rightarrow\Q_+^\calP$
such that for any vector $\mv t\in\Q_+^\calP$,
$\calF(\mv t)$ is a vector $\mv t'\in\Q_+^\calP$  defined as
\[
t'_{Y|X}\defeq
\begin{cases}
t_{Y|X} + t_{Y|X-\{v\}} + t_{Y-\{v\}|X-\{v\}}& \text{if $v\in X$ (hence $v \in Y$)}\\
0 & \text{otherwise (i.e. if $v\notin Y$ or $v \notin X$)}
\end{cases}
\]
Note that $\mv t'$ is dominated by $\mv t$ where the domination function $f$ is defined as
\begin{eqnarray*}
f((X,Y),(X\cup\{v\},Y\cup\{v\})) &=& t_{Y|X},\\
f((X,Y),(X',Y')) &=& 0 \quad\quad\text{otherwise.}
\end{eqnarray*}

\begin{claim}
The inequality $\inner{\calF(\vec\lambda), \mv h} \leq \inner{\calF(\vec\delta),\mv h}$
holds for all conditional polymatroids $\mv h$.
\label{clm:bound-whole-delta:1}
\end{claim}

(Notice that the inequality $\inner{\calF(\vec\lambda), \mv h} \leq \inner{\calF(\vec\delta),\mv h}$
does not necessarily have the form \eqref{eqn:sfi} of a Shannon flow inequality,
because it can have $\lambda'_{B\cup\{v\}|\{v\}}>0$.)
We will prove Claim~\ref{clm:bound-whole-delta:1} based on the following claim.

\begin{claim}
For any $\mv f\in \{
   \mv s_{I,J},
   \mv m_{X,Y},
   \mv c_{X,Y},
\mv d_{Y,X} \}$ and any conditional polymatroid $\mv h$,
the following inequality holds
\[\inner{\calF(\mv f),\mv h} \leq 0.\]
\label{clm:bound-whole-delta:2}
\end{claim}
Claim~\ref{clm:bound-whole-delta:2} holds because the following inequalities hold for all conditional polymatroids.
\begin{eqnarray*}
h(I\cup J\cup\{v\}|J\cup\{v\})-h(I\cup\{v\}|(I\cap J)\cup\{v\})&\leq&0, I \incomp J\\
-h(Y\cup\{v\}|\{v\})+h(X\cup\{v\}|\{v\})&\leq&0, X\subset Y\\
h(Y\cup\{v\}|\{v\})-h(Y\cup\{v\}|X\cup\{v\})-h(X\cup\{v\}|\{v\})&\leq&0, X\subset Y\\
-h(Y\cup\{v\}|\{v\})+h(Y\cup\{v\}|X\cup\{v\})+h(X\cup\{v\}|\{v\})&\leq&0, X\subset Y
\end{eqnarray*}

To prove Claim~\ref{clm:bound-whole-delta:1}, we construct a proof sequence for the original inequality $\inner{\vec\lambda, \mv h} \leq \inner{\vec\delta,\mv h}$,
which is possible thanks to Theorem~\ref{thm:ps:construction:1}.
Consider the corresponding inequality sequence
$\vec\delta=\vec\delta_0, \vec\delta_1, \dots, \vec\delta_\ell\geq\vec\lambda$ such that, for
every $i \in [\ell]$, $\vec\delta_{i} = \vec\delta_{i-1} + w_i \cdot \mv f_i$
for some $w_i > 0$ and $\mv f_i \in \{
   \mv s_{I,J},
   \mv m_{X,Y},
   \mv c_{X,Y},
\mv d_{Y,X} \}$.
\begin{eqnarray*}
\inner{\calF(\mv f_i),\mv h} &\leq& 0
\quad\quad\quad \text{(Claim~\ref{clm:bound-whole-delta:2})}\\
\inner{\calF(1/w_i(\vec\delta_i-\vec\delta_{i-1})),\mv h} &\leq& 0\\
\inner{1/w_i(\calF(\vec\delta_i)-\calF(\vec\delta_{i-1})),\mv h} &\leq& 0
\quad\quad\quad \text{(by linearity of $\calF$)}\\
\inner{\calF(\vec\delta_i),\mv h} &\leq& \inner{\calF(\vec\delta_{i-1}),\mv h}
\end{eqnarray*}
which proves Claim~\ref{clm:bound-whole-delta:1}.

Finally, let $\vec\delta'=\calF(\vec\delta)$, $\vec\lambda'=\calF(\vec\lambda)$,
$f$  be the domination function of $(\vec\delta,\vec\delta')$,
$D$ be the common denominator of $(\vec\delta,\vec\delta',\vec\lambda,\vec\lambda')$, and $w=1/D$.
Initially, we have $\sum_{Y\ni v}\delta'_{Y|\emptyset}=0$.
While there is $B\subseteq \calV$ such that $\lambda'_{B\cup\{v\}|\{v\}}>0$,
find $Y\ni v$ such that $f((\emptyset,Y),(\{v\},Y))>0$,
and reduce $\lambda'_{B\cup\{v\}|\{v\}}$, $f((\emptyset,Y),(\{v\},Y))$, and $\delta'_{Y|\{v\}}$ by $w$,
and increase $\lambda'_{B\cup\{v\}|\emptyset}$, $f((\emptyset,Y),(\emptyset,Y))$, and $\delta'_{Y|\emptyset}$ by $w$.
Note that this update maintains that $\inner{\vec\lambda',\mv h}\leq\inner{\vec\delta',\mv h}$ holds for all polymatroids $\mv h$
and that $\vec\delta'$ is dominated by $\vec\delta$.
At the end, $\vec\lambda'$ will dominate $\vec\lambda$ and we will have $\sum_{Y\ni v}\delta'_{Y|\emptyset}\leq\norm{\vec\lambda}_1$.
\ep

\subsubsection{Bounding $\norm{\vec\mu}_1$}
\label{subsec:bounding-mu}

\bcor[$\norm{\vec\mu}_1\leq n\cdot \norm{\vec\lambda}_1$]
For any Shannon flow inequality $\inner{\vec\lambda, \mv h} \leq \inner{\vec\delta,\mv h}$
with a witness $(\vec\sigma,\vec\mu)$,
there exists a Shannon flow inequality $\inner{\vec\lambda, \mv h} \leq \inner{\vec\delta'',\mv h}$
with a tight witness $(\vec\sigma'',\vec\mu'')$ such that $\vec\delta''$ is dominated by $\vec\delta$ and
\begin{equation}
\norm{\vec\mu''}_1\leq n\cdot \norm{\vec\lambda}_1.
\end{equation}
In particular
\begin{equation}
\forall \emptyset\neq X\subseteq \calV, \quad \sum_{Y\supset X} \mu''_{X,Y} \leq \lambda_{X},
\quad \text{hence}\quad \sum_{\emptyset\neq X\subset Y\subseteq \calV} \mu''_{X,Y}\leq \norm{\vec\lambda}_1
\label{eq:bound-mu-conditional}
\end{equation}
and
\begin{equation}
\sum_{Y\subseteq \calV} \mu''_{\emptyset,Y} \leq (n-1)\cdot\norm{\vec\lambda}_1.
\label{eq:bound-mu-whole}
\end{equation}
Moreover, $(\vec\lambda,\vec\delta'',\vec\sigma'',\vec\mu'')$ is rationally-compatible with
$(\vec\lambda,\vec\delta,\vec\sigma,\vec\mu)$.
\label{cor:bound-mu}
\ecor

\bp
First, we will apply Lemma~\ref{lmm:bound-whole-delta} to get a Shannon flow inequality
$\inner{\vec\lambda, \mv h} \leq \inner{\vec\delta',\mv h}$ where $\vec\delta'$ is dominated by $\vec\delta$
and $\sum_{Y\subseteq \calV}\delta'_{Y|\emptyset} \leq n\cdot \norm{\vec\lambda}_1$.
Then, we apply Corollary~\ref{cor:bound-conditional-mu} on $\inner{\vec\lambda, \mv h} \leq \inner{\vec\delta',\mv h}$
to get another Shannon flow inequality $\inner{\vec\lambda, \mv h} \leq \inner{\vec\delta'',\mv h}$
with a tight witness $(\vec\sigma'',\vec\mu'')$ such that $\vec\delta''$ is dominated by $\vec\delta'$
and \eqref{eq:bound-mu-conditional} holds.
Because $\vec\delta''$ is dominated by $\vec\delta'$, we have
\[
\sum_{Y\subseteq \calV}\delta''_{Y|\emptyset} \leq\sum_{Y\subseteq \calV}\delta'_{Y|\emptyset}\leq n\cdot \norm{\vec\lambda}_1.
\]
Let $\flow''(Z)$ denote the quantity $\flow(Z)$ measured on the vector
$(\vec\delta'',\vec\sigma'',\vec\mu'')$.
\begin{eqnarray*}
\sum_{\emptyset\neq Z\subseteq \calV} \flow''(Z) &=& \norm{\vec\lambda}_1\\
\sum_{Z\neq\emptyset} \delta''_{Z|\emptyset}
-\sum_{Z\neq\emptyset} \mu''_{\emptyset,Z}
-\sum_{\substack{I\incomp J\\ I\cap J=\emptyset}} \sigma''_{I,J} &=& \norm{\vec\lambda}_1
\end{eqnarray*}
\begin{eqnarray*}
\sum_{Z\neq\emptyset} \mu''_{\emptyset,Z}
\leq \sum_{Z\neq\emptyset} \delta''_{Z|\emptyset} - \norm{\vec\lambda}_1
\leq n\cdot \norm{\vec\lambda}_1-\norm{\vec\lambda}_1.
\end{eqnarray*}
\begin{eqnarray*}
\norm{\vec\mu''}_1 &=&
\sum_{Z\neq\emptyset} \mu''_{\emptyset,Z}
+\sum_{\emptyset\neq X\subset Y\subseteq \calV} \mu''_{X,Y}
\leq n\cdot \norm{\vec\lambda}_1.
\end{eqnarray*}
\ep

\subsubsection{Bounding $\norm{\vec\sigma}_1$ and $\norm{\vec\delta}_1$}
\label{subsec:bounding-delta-and-sigma}

\bcor[$2\norm{\vec\sigma}_1+\norm{\vec\delta}_1 \leq n^3 \cdot \norm{\vec\lambda}_1$]
For any Shannon flow inequality $\inner{\vec\lambda, \mv h} \leq \inner{\vec\delta,\mv h}$
with a witness $(\vec\sigma,\vec\mu)$,
there exists a Shannon flow inequality $\inner{\vec\lambda, \mv h} \leq \inner{\vec\delta'',\mv h}$
with a tight witness $(\vec\sigma'',\vec\mu'')$ such that $\vec\delta''$ is dominated by $\vec\delta$ and
\begin{eqnarray}
2\norm{\vec\sigma''}_1+\norm{\vec\delta''}_1 &\leq& n^3 \cdot \norm{\vec\lambda}_1,\label{eq:bound-sigma-and-delta}\\
\norm{\vec\mu''}_1 &\leq& n\cdot \norm{\vec\lambda}_1.
\end{eqnarray}
Moreover, $(\vec\lambda,\vec\delta'',\vec\sigma'',\vec\mu'')$ is rationally-compatible with
$(\vec\lambda,\vec\delta,\vec\sigma,\vec\mu)$.
\label{cor:2sigma<=n-cube-lambda}
\ecor

\bp
We apply Corollary~\ref{cor:bound-mu} and obtain $(\delta'',\sigma'',\mu'')$.
Let $\flow''(Z)$ denote the quantity $\flow(Z)$ measured on the vector
$(\vec\delta'',\vec\sigma'',\vec\mu'')$.
\begin{equation}
\sum_{Z\subseteq \calV} \flow''(Z)\cdot|Z|^2 \leq \sum_{B\neq\emptyset} \lambda_B \cdot |B|^2
\label{eq:sum-flow(Z)|Z|^2}
\end{equation}
Define the following quantities.
\begin{eqnarray*}
T_{\sigma''} &\defeq& \sum_{I\incomp J} \sigma''_{I,J}(|I\cup J|^2+|I\cap J|^2-|I|^2-|J|^2)\\
T_{\delta''} &\defeq& \sum_{X\subset Y} \delta''_{Y|X}(|Y|^2-|X|^2)\\
T_{\mu''} &\defeq& \sum_{X\subset Y} \mu''_{X,Y} (|X|^2-|Y|^2)
\end{eqnarray*}
Now \eqref{eq:sum-flow(Z)|Z|^2} becomes
\[T_{\sigma''}+T_{\delta''}+T_{\mu''} \leq \sum_{B\neq\emptyset} \lambda_B \cdot |B|^2\]
\[T_{\sigma''}+T_{\delta''}
-\sum_{Y\neq\emptyset}\mu''_{\emptyset,Y}\cdot |Y|^2
+\sum_{\emptyset\neq X\subset Y}\mu''_{X,Y} (|X|^2-|Y|^2)
\leq \sum_{B\neq\emptyset} \lambda_B \cdot |B|^2\]
\begin{eqnarray*}
T_{\sigma''}+T_{\delta''} &\leq&
\sum_{Y\neq\emptyset}\mu''_{\emptyset,Y}\cdot |Y|^2
+\sum_{B\neq\emptyset}|B|^2 \biggl(
\underbrace{\lambda_B-\sum_{Y\supset B}\mu''_{B,Y}}_{\text{$\geq 0$ by \eqref{eq:bound-mu-conditional}}}
+\sum_{\emptyset\neq X\subset B} \mu''_{X,B}\biggr)\\
&\leq&
n^2 \cdot \underbrace{\left(\sum_{Y\neq\emptyset}\mu''_{\emptyset,Y}\right)}_{\text{bounded by \eqref{eq:bound-mu-whole}}}
+n^2\cdot
\underbrace{\sum_{B\neq\emptyset} \biggl(
\lambda_B-\sum_{Y\supset B}\mu''_{B,Y}
+\sum_{\emptyset\neq X\subset B} \mu''_{X,B}\biggr)}_{=\norm{\vec\lambda}_1}\\
&\leq& n^3\cdot \norm{\vec\lambda}_1
\end{eqnarray*}

\begin{claim}
For any $I\incomp J$, we have $|I\cup J|^2+|I\cap J|^2-|I|^2-|J|^2\geq 2$.
\label{clm:2sigma<=n-cube-lambda:1}
\end{claim}

From Claim~\ref{clm:2sigma<=n-cube-lambda:1}, $T_{\sigma''}\geq 2\norm{\vec\sigma}_1$.
Moreover, for any $X\subset Y$, $|Y|^2-|X|^2\geq 1$.
Therefore, $T_{\delta''}\geq \norm{\vec\delta''}_1$, which proves \eqref{eq:bound-sigma-and-delta}.

To prove Claim~\ref{clm:2sigma<=n-cube-lambda:1}, let $a\defeq |I\setminus J|$, $b\defeq|J\setminus I|$, and $c=|I\cap J|$.
Because $I\incomp J$, we have $a\geq1$ and $b\geq 1$.
\[|I\cup J|^2+|I\cap J|^2-|I|^2-|J|^2=(a+b+c)^2+c^2-(a+c)^2-(b+c)^2=2ab\geq 2.\]
\ep

\subsection{Construction of a poly-sized proof sequence}
\label{sec:poly-sized-proofseq}
Given a fixed Shannon flow inequality $\inner{\vec\lambda,\mv h}\leq\inner{\vec\delta,\mv h}$,
Section~\ref{sec:panda} shows that the runtime of $\panda$ algorithm depends on
the length of the proof sequence that is being used for that inequality,
thus motivating the need to minimize that length as much as possible.
Section~\ref{subsec:sfi:motivations} shows that the inequality
$\inner{\vec\lambda,\mv h}\leq\inner{\vec\delta,\mv h}$
basically corresponds to a feasible solution to the linear program \eqref{eqn:dual:ddl:target},
whose size is $O(2^{2n})$. Our aim in this section is to construct a proof sequence for
$\inner{\vec\lambda,\mv h}\leq\inner{\vec\delta,\mv h}$ whose length is polynomial in the size of that linear program.
Hence, we are looking for a proof sequence whose length is $\poly(2^n)$.

We start by introducing a connection to flow networks in Section~\ref{subset:network-flows},
and then we present the actual construction of a poly-sized proof sequence in Section~\ref{subsec:poly-sized-proofseq}. The construction relies on many technical tools
among which is Edmond-Karp maximum flow algorithm~\cite{DBLP:journals/jacm/EdmondsK72}.
It also relies on the bounds developed earlier in Section~\ref{app:subsec:bounding-delta-sigma-mu}.

\subsubsection{Introduction: Connection to flow networks}
\label{subset:network-flows}
We will make use of the following definition. Given a Shannon flow inequality $\inner{\vec\lambda, \mv h} \leq \inner{\vec\delta,\mv h}$, define
\begin{equation}
   \filter(\vec\lambda) \defeq \left\{ F \suchthat \exists B \in 2^\calV \text{ where }
   \lambda_B > 0 \text{ and } B \subseteq F \right\}.
   \label{eqn:filter}
\end{equation}
\bthm[Construction of a proof sequence using a flow network]
Let $\inner{\vec\lambda, \mv h} \leq \inner{\vec\delta,\mv h}$ be a Shannon
flow inequality with witness $(\vec\sigma,\vec\mu)$.
Algorithm~\ref{algo:calB:target:ps} produces a proof sequence for the inequality
$\inner{\vec\lambda, \mv h} \leq \inner{\vec\delta,\mv h}$ with length at most
$2^n D(\norm{\vec\lambda}_1+\norm{\vec\sigma}_1)$, where
$D$ is the minimum common denominator
of all entries in $(\vec\lambda,\vec\delta,\vec\sigma)$.
\label{thm:ps:construction:2}
\ethm
\bp
Given $(\vec\lambda,\vec\delta,\vec\sigma,\vec\mu)$,
define a flow network $G(\vec\lambda,\vec\delta,\vec\sigma,\vec\mu)=(2^\calV, \calA)$
as follows. There is an arc (i.e. directed edge)
$(X,Y) \in\calA$ for every pair $(X,Y)$ for which
$X\subset Y$ and $\delta_{Y|X} >0$;
the capacity of $(X,Y)$ is $\delta_{Y|X}$.
These are called {\em up arcs}.
There is an arc $(Y,X)$ for every pair $(X,Y)$ such that $X \subset Y$;
The capacity of this arc is $+\infty$.
These are called {\em down arcs}.
Let $\calK$ denote the set of vertices $Z$ reachable from $\emptyset$
in $G(\vec\lambda,\vec\delta,\vec\sigma,\vec\mu)$. A pair
$(I,J) \in 2^\calV \times 2^\calV$ is {\em good} for
$\calK$ if $I\in \calK$, $J\in \calK$, $I\cup J
\notin \calK$, and $\sigma_{I,J} > 0$.

\begin{algorithm}[th]
  \caption{Constructing a proof sequence for $\inner{\vec\lambda, \mv h} \leq \inner{\vec\delta,\mv h}$}
  \label{algo:calB:target:ps}
  \begin{algorithmic}[1]
    \Require{Inequality $\inner{\vec\lambda, \mv h} \leq \inner{\vec\delta,\mv h}$ with
    witness $(\vec\sigma,\vec\mu)$}
     \State \proofseq $\gets ()$
     \For {each $B\in 2^\calV$}
     \State $t \gets \min\{\lambda_B,\delta_{B | \emptyset}\}$;
     \myskip $\lambda_B \gets \lambda_B - t$;
     \myskip $\delta_{B | \emptyset} \gets \delta_{B | \emptyset} - t$
     \EndFor
     \State $w \gets 1/D$
     \While {$\filter(\vec\lambda) \neq \emptyset$}
       \State $\calK \gets \{ Z \suchthat Z \text{ is reachable from $\emptyset$
       in $G(\vec\lambda,\vec\delta,\vec\sigma,\vec\mu)$} \}$
       \If {($\filter(\vec\lambda) \cap \calK \neq \emptyset$)}\Comment{Case 1}
          \State Find a shortest path $\emptyset = X_0, X_1, \dots,
          X_\ell = B$ in $G(\vec\lambda,\vec\delta,\vec\sigma,\vec\mu)$ from $\emptyset$
          to some $B$ where $\lambda_B>0$\label{line:find-path}
          \For {($j \gets 1$ {\bf to} $\ell$)} \Comment{Push $w$-flow to $\F$}
          \If {($(X_{j-1},X_j)$ is an up arc)}
                \State Append $w \cdot \mv c_{X_{j-1},X_j}$ to \proofseq;
                \myskip
                $\delta_{X_j|X_{j-1}} \gets \delta_{X_j|X_{j-1}}-w$
             \Else \Comment{$(X_{j-1},X_j)$ is a down arc}
                \State Append $w \cdot \mv d_{X_{j-1},X_j}$ to \proofseq;
                \myskip
                $\delta_{X_{j-1}|X_{j}} \gets \delta_{X_{j-1}|X_{j}}+w$
             \EndIf
          \EndFor
          \State $\lambda_{B} \gets \lambda_B-w$; \label{line:reduce:lambda}
          \label{line:delta:B}
        \ElsIf {(There exists a good pair $(I,J)$ for $\calK$)}\Comment{Case 2}
          \State Find a shortest path $\emptyset = X_0, X_1, \dots, X_\ell = I$
          \For {($j \gets 1$ {\bf to} $\ell$)} \Comment{Push $w$-flow to $I$}
          \If {($(X_{j-1},X_j)$ is an up arc)}
                \State Append $w \cdot \mv c_{X_{j-1},X_j}$ to \proofseq;
                \myskip
                $\delta_{X_j|X_{j-1}} \gets \delta_{X_j|X_{j-1}}-w$
             \Else \Comment{$(X_{j-1},X_j)$ is a down arc}
                \State Append $w \cdot \mv d_{X_{j-1},X_j}$ to \proofseq;
                \myskip
                $\delta_{X_{j-1}|X_{j}} \gets \delta_{X_{j-1}|X_{j}}+w$
             \EndIf
          \EndFor
          \State Append $w \cdot \mv d_{I,I \cap J}$, then
          $w \cdot \mv s_{I,J}$ to \proofseq
          \State $\sigma_{I,J} \gets \sigma_{I,J}-w$;
          \myskip $\delta_{I\cup J|J} \gets \delta_{I\cup J|J}+w$;
          \myskip $\delta_{I\cap J|\emptyset} \gets \delta_{I\cap
          J|\emptyset}+w$. \label{line:2nd:branch}
       \EndIf
    \EndWhile
     \State \Return \proofseq
  \end{algorithmic}
\end{algorithm}

For any set $\calK \subset 2^\calV$ such that $\filter(\vec\lambda) \cap \calK =
\emptyset$, define $\outflow(\calK) \defeq \sum_{Z \notin \calK} \flow(Z)$.
We claim that the algorithm maintains the following two invariants:
\bi
 \item {\bf Invariant 1:} all the quantities $\flow(Z)-\lambda_Z$ are unchanged from
    one iteration to the next iteration of the algorithm.
 \item {\bf Invariant 2:} for every $\calK$ where $\calK \cap
\filter(\vec\lambda)=\emptyset$,
at the beginning of each iteration we have
\[ \sum_{B\in\filter(\vec\lambda)} \lambda_B \leq \outflow(\calK). \]
\ei
The first invariant is simple to verify: we specifically modified all
the entries $(\vec\lambda,\vec\delta,\vec\sigma,\vec\mu)$ to keep the flows
unchanged, except for the very last element $B$ in line \ref{line:find-path}, where the value $\flow(B)$ for that
element $B$ is reduced by $w$. However, $\lambda_B$ is also reduced by $w$
at line~\ref{line:reduce:lambda}, keeping $\flow(B)-\lambda_B$ constant.
The second invariant is satisfied at the beginning of the very first iteration,
and hence it is satisfied the the beginning of every later iteration due to
the first invariant.

We now show that the algorithm produces a valid proof sequence.
At the beginning of every iteration,
the inequality $\inner{\vec\lambda, \mv h} \leq \inner{\vec\delta,\mv h}$
is a Shannon flow inequality witnessed by $(\vec\sigma,\vec\mu)$,
due to invariant 1 and Proposition~\ref{prop:sfi:ddl:target}.
(Note that we modify $\vec\lambda$ and the witness inside each iteration.)
Hence, if we can show that the algorithm does terminate, then the proof
sequence it produces is valid.
To show termination, we show that as long as there exists a $B \in
\filter(\vec\lambda)$
for which $\lambda_B > 0$, then either Case 1 or Case 2 applies in the algorithm.

Assume that at the beginning of some iteration,
we have $\filter(\vec\lambda) \cap \calK \neq \emptyset$.
Then from~\eqref{eqn:filter}, there is some $B\subseteq F$ where $\lambda_B>0$ and $F\in\calK$,
and thanks to the down arc $(F, B)$, we have $B\in\calK$.
Now, assume that at the beginning of some iteration (where $\lambda_B>0$ for some $B$),
we have $\filter(\vec\lambda) \cap \calK = \emptyset$, yet there is no good
pair $(I,J)$ for $\calK$.
Then, $\outflow(\calK) \geq \lambda_B > 0$ due to Invariant 2.
However, all variables $\delta_{Y|X}, \mu_{X,Y}, \sigma_{I,J}$ contribute a
non-positive amount to $\outflow(\calK)$, which is a contradiction.

Next, we bound the proof sequence's length.
When any one of the two cases applies,
$\norm{\vec\lambda}_1+\norm{\vec\sigma}_1$ was reduced by $w=1/D$,
and at most $2^n$ steps are added to the proof sequence.
Hence the overall length of the proof sequence is at most
$2^n D(\norm{\vec\lambda}_1+\norm{\vec\sigma}_1)$.
\ep

\subsubsection{Final construction of a poly-sized proof sequence}
\label{subsec:poly-sized-proofseq}
We start by extending the flow network that we constructed earlier in Theorem~\ref{thm:ps:construction:2}.
Then, we prove a lower bound on the maximum flow of the extended network in Lemma~\ref{lmm:flow>lambda_1}.
Theorem~\ref{thm:almost-poly-proofseq} presents an advanced construction of a proof sequence
that is ``almost'' polynomial. Among other technical tools, the construction uses Edmond-Karp maximum flow Algorithm as a black-box~\cite{DBLP:journals/jacm/EdmondsK72}.
Finally, Corollary~\ref{cor:poly-proofseq} makes the final step and bounds the length of the proof sequence
to be truly polynomial in $2^n$.

\bdefn[Extended flow network]
Given $(\vec\lambda,\vec\delta,\vec\sigma,\vec\mu)$,
we define a flow network $\bar G(\vec\lambda,\vec\delta,\vec\sigma,\vec\mu)$
(which is an extended version of the network $G(\vec\lambda,\vec\delta,\vec\sigma,\vec\mu)$ defined in the proof of Theorem~\ref{thm:ps:construction:2}) as follows.
The set of vertices of the network $\bar G$ is ${\bar{\mcalV}}\defeq2^{\calV}\cup \calT\cup\{\bar T\}$,
where $\calT$ is a set that contains a new vertex $T_{I,J}$ for every pair $I,J\subseteq \calV$ such that $I\incomp J$, i.e.
\[\calT\defeq\left\{T_{I,J}\suchthat I,J\subseteq \calV \wedge I\incomp J\right\},\]
and $\bar T$ is yet another vertex that represents the \emph{sink} of the network.
The \emph{source} of the network $\bar G$ is $\emptyset$.
The set of arcs $\bar \calA$ of $\bar G$ consists of the following subsets:
\bi
\item There is an arc
$(X,Y)$ for every pair $(X,Y)$ for which
$X\subset Y\subseteq \calV$ and $\delta_{Y|X} >0$;
the capacity of $(X,Y)$ is $\delta_{Y|X}$.
These are called {\em up arcs} (as in the network $G$ from Theorem~\ref{thm:ps:construction:2}).
\item There is an arc $(Y,X)$ for every pair $(X,Y)$ such that $X \subset Y\subseteq \calV$;
The capacity of this arc is $+\infty$.
These are called {\em down arcs} (as in the network $G$).
\item For ever pair $I, J\subseteq \calV$ where $I\incomp J, \sigma_{I, J}>0$,
there are two arcs $(I, T_{I,J})$ and $(J, T_{I,J})$ each of which has an infinite capacity ($+\infty$), and there is a third arc $(T_{I,J}, \bar T)$ whose capacity is $\sigma_{I,J}$.
\item For every $\emptyset\neq B\subseteq \calV$ where $\lambda_B>0$,
there is an arc $(B,\bar T)$ whose capacity is $\lambda_B$.
\ei
\label{defn:flow-net-barG}
\edefn

\blmm[Maximum flow is $\geq \norm{\vec\lambda}_1$]
For any $(\vec\lambda,\vec\delta,\vec\sigma,\vec\mu)$,
the maximum flow of the network $\bar G(\vec\lambda,\vec\delta,\vec\sigma,\vec\mu)$ given by Definition~\ref{defn:flow-net-barG} is $\geq \norm{\vec\lambda}_1$.
\label{lmm:flow>lambda_1}
\elmm
\bp
We will use the min-cut max-flow theorem,
and show that every cut has a capacity $\geq \norm{\vec\lambda}_1$
(where the capacity of a cut is the total capacity of arcs crossing that cut).
In particular, for any $\calC\subset{\bar{\mcalV}}$ such that $\emptyset\in\calC, \bar T\notin \calC$,
and $\calC'\defeq{\bar{\mcalV}}\setminus \calC$,
we will show that the capacity of the $\calC$-$\calC'$ cut is $\geq\norm{\vec\lambda}_1$.
\bi
\item If there are $X\subset Y\subseteq \calV$ where $Y\in\calC$ and $X\notin\calC$,
then the down arc $(Y,X)$ crosses the cut, hence the cut capacity is $+\infty$.
\item Otherwise (i.e. if for any $X\subset Y\subseteq \calV$ where $Y\in\calC$, we have $X\in\calC$),
if there is a pair $I,J\subseteq \calV$ where $I\incomp J$, $\sigma_{I,J}>0$, $I\in\calC$ and $T_{I,J}\not\in\calC$,
then the cut capacity is also $+\infty$.
\item Otherwise, let $\calK\defeq\calC\cap 2^\calV$, $\calK'\defeq 2^\calV\setminus \calK$.
\begin{eqnarray*}
\sum_{Z \in \calK'} \flow(Z) &\geq& \sum_{B\in\calK'} \lambda_B\\
\sum_{\substack{I,J\in\calK\\ I\incomp J,\; I\cup J\in \calK'}} \sigma_{I,J}
+\sum_{\substack{X\in\calK,\;Y\in\calK'\\X\subset Y}}\delta_{Y|X}
&\geq& \sum_{B\in\calK'} \lambda_B
\end{eqnarray*}
However, we have 3 types of arcs crossing the cut:
$(T_{I,J},\bar T)$ where $I\in\calK$,
up arcs $(X, Y)$ where $X\in\calK, Y\in\calK'$,
and $(B,\bar T)$ where $B\in\calK$.
Hence the cut capacity is at least
\begin{eqnarray*}
\sum_{\substack{I\in \calK\\I\incomp J}}\sigma_{I,J}
+\sum_{\substack{X\in\calK,\;Y\in\calK'\\X\subset Y}}\delta_{Y|X}
+\sum_{B\in\calK} \lambda_B\geq \norm{\vec\lambda}_1.
\end{eqnarray*}
\ei
\ep

\blmm[Getting rid of the domination assumption]
Let $\inner{\vec\delta'_{\ell'}, \mv h}\leq\inner{\vec\delta'_0, \mv h}$ be an equality
that has a proof sequence $\proofseq'$ of length $\ell'$,
and let $\vec\delta_0$ be a vector that dominates $\vec\delta'_0$.
Then, there exists a vector $\vec\delta_\ell$ that dominates $\vec\delta'_{\ell'}$
and an inequality $\inner{\vec\delta_\ell, \mv h}\leq\inner{\vec\delta_0, \mv h}$
that has a proof sequence $\proofseq$ of length $\ell$ that satisfies
\[\ell=O(3^{n}\cdot\ell').\]
\label{lmm:remove-domination}
\elmm
\bp
Let $f$ be the domination function of $(\vec\delta_0,\vec\delta'_0)$.
Let $w'\cdot \mv f'=\vec\delta'_1-\vec\delta'_0$ be the first proof step in $\proofseq'$,
where $w'>0$ and $\mv f' \in \{\mv s_{I,J},\mv m_{X,Y},\mv c_{X,Y},\mv d_{Y,X}\}$.
Initially, let $\vec\delta_1$ be identical to $\vec\delta_0$.
\bi
\item If $\mv f'=\mv s_{I, J}$ for some $I\incomp J$,
then $(I\cap J, I)$ is dominated by a total of $w'$
(i.e. $\sum_{(X, Y)}f((X, Y),(I \cap J, I))\geq w'$).
We keep looking for pairs $(X,Y)\in\calP$ that dominate $(I\cap J, I)$
(i.e. such that $f((X, Y), (I\cap J, I))>0$),
and we make those pairs $(X, Y)$ dominate $(J,I\cup J)$ instead
(i.e. we reduce $f((X, Y), (I\cap J, I))$ and increase $f((X, Y), (J, I\cup J))$ by the same amount).
We keep doing so until $(J, I\cup J)$ is dominated by a total of $w'$.
Now, $f$ is a domination function for $(\vec\delta_1, \vec\delta'_1)$.
\item If $\mv f'=\mv m_{X', Y'}$ for some $X'\subset Y'$, then
we look for $(\emptyset, Y)$ that dominate $(\emptyset, Y')$
and we make them dominate $(\emptyset, X')$ instead
until $(\emptyset, X')$ is dominated by a total of $w'$.
Now, $f$ is a domination function for $(\vec\delta_1, \vec\delta'_1)$.
\item If $\mv f'=\mv d_{Y',X'}$ for some $X'\subset Y'$, then
we look for $(\emptyset, Y)$ that dominate $(\emptyset, Y')$,
and we append $w\cdot \mv m_{Y', Y}$ to $\proofseq$ (if $Y'\neq Y$),
and reduce $f((\emptyset, Y), (\emptyset, Y'))$ by the same amount $w$.
We keep doing so until the total of $w$ is equal to $w'$.
Now, we append $w'\cdot \mv d_{Y',X'}$ to $\proofseq$,
and increase $f((\emptyset, X'), (\emptyset, X'))$ and $f((X', Y'), (X', Y'))$ by $w'$.
\item If $\mv f'=\mv c_{X', Y'}$ for some $X'\subset Y'$,
then we look for $(\emptyset, X)$ that dominate $(\emptyset, X')$,
and we append $w\cdot \mv m_{X', X}$ to $\proofseq$,
and reduce $f((\emptyset, X), (\emptyset, X'))$ by the same amount $w$,
until the total of $w$ is equal to $w'$.
Now, we look for $(X, Y)$ that dominates $(X', Y')$
\footnote{From Definition~\ref{defn:delta-domination},
if $f((X, Y), (X', Y'))>0$, then $Y'\setminus X'\subseteq Y\setminus X$ and $X'\supseteq X$.},
and append
$w_2\cdot \mv m_{A, X'}$
(where $A\defeq X'\setminus (Y\setminus X)$),
$w_2\cdot \mv s_{Y,A}$,
$w_2\cdot \mv c_{A,Y\cup A}$,
$w_2\cdot \mv m_{Y',Y\cup A}$ to $\proofseq$,
and we reduce $f((X, Y), (X', Y'))$ by the same amount $w_2$.
We keep doing so until the total of $w_2$ is equal to $w'$.
Finally, we increase $f((\emptyset, Y'),(\emptyset, Y'))$ by $w'$.
The total number of pairs $(X, Y)$ where $X\subset Y\subseteq \calV$ is $\leq 3^n$.
\ei
\ep

\begin{algorithm}[th!]
\caption{Constructing a poly-sized proof sequence for $\inner{\vec\lambda, \mv h} \leq \inner{\vec\delta,\mv h}$}
\label{algo:poly-proofSeq}
\begin{algorithmic}[1]
\Require{Inequality $\inner{\vec\lambda, \mv h} \leq \inner{\vec\delta,\mv h}$
with witness $(\vec\sigma,\vec\mu)$, where $\norm{\vec\lambda}_1=1$}
\State \proofseq $\gets ()$
\State $D\gets \Call{least-common-denominator}{(\vec\lambda,\vec\delta,\vec\sigma,\vec\mu)}$
\Comment{$D$ is the least common denominator of all entries}
\State $\vec\lambda\gets D \cdot\vec\lambda$;
\myskip $\vec\delta\gets D \cdot\vec\delta$;
\myskip $\vec\sigma\gets D\cdot\vec\sigma$;
\myskip $\vec\mu\gets D\cdot\vec\mu$
\Comment{Now, all entries are integers}
\State Find $\inner{\vec\lambda, \mv h} \leq \inner{\vec\delta',\mv h}$
with witness $(\vec\sigma',\vec\mu')$
where $\vec\delta'$ is dominated by $\vec\delta$
and $\norm{\vec\sigma'}_1\leq \frac{1}{2}n^3\norm{\vec\lambda}_1$\\
\Comment{Corollary~\ref{cor:2sigma<=n-cube-lambda}}
\State $\Delta\gets \max\left\{2^i \suchthat 2^i\leq \norm{\vec\lambda}_1 \text{ and $i\in \N$}\right\}$\label{ln:poly-initial-Delta}
\Comment{Now, we have $\Delta\leq\norm{\vec\lambda}_1< 2\Delta$ and $\norm{\vec\sigma'}_1<n^3\Delta$}
\While {$\norm{\vec\lambda}_1>0$} \label{ln:poly-while}
\State Construct $\bar G(\vec\lambda,\vec\delta',\vec\sigma',\vec\mu')$ \Comment{Definition~\ref{defn:flow-net-barG}}
\label{ln:poly-net-construction}
\State $P\gets$\Call{Edmond-Karp}{$\bar G(\vec\lambda,\vec\delta',\vec\sigma',\vec\mu')$}.
\Comment{Apply Edmond-Karp algorithm for maximum flow}\\
\Comment{P is a collection of augmenting paths}
\For {each augmenting path $\emptyset=X_0,X_1,\ldots, X_{\ell}=\bar T$ with capacity $w'$ in $P$}
\label{ln:poly-for-all-path}
\For {$j\gets 1$ to $\ell$}
\If{$(X_{j-1},X_j)=(B,\bar T)$ for some $\emptyset\neq B\subseteq \calV, \lambda_B>0$}
\State $\lambda_B \gets \lambda_B - w'$;
\myskip $\delta'_{B|\emptyset}\gets \delta'_{B|\emptyset}-w'$
\label{ln:poly-reduce-lambda}
\State $t\gets w'$
\While {$t > 0$}
\State Find $Y \supseteq B$ such that $\delta_{Y|\emptyset}>0$
\Comment{Possible because $\vec\delta$ dominates $\vec\delta'$}
\State $w\gets\min\{t, \delta_{Y|\emptyset}\}$;
\myskip $t\gets t-w$;
\myskip $\delta_{Y|\emptyset}\gets\delta_{Y|\emptyset}-w$
\If {$Y\supset B$}
\State Append $w\cdot \mv d_{Y, B}$ to $\proofseq$;
\myskip $\delta_{Y|B}\gets\delta_{Y|B}+w$
\EndIf
\EndWhile
\Else
\State $\vec\delta'_0\gets\vec\delta'$;
\myskip $\proofseq_0'\gets ()$;
\myskip $\vec\delta_0\gets\vec\delta$
\If {$(X_{j-1},X_j)$ is an up arc}
\State Append $w' \cdot \mv c_{X_{j-1},X_j}$ to $\proofseq_0'$;
\myskip $\vec\delta'\gets\vec\delta'+w' \cdot \mv c_{X_{j-1},X_j}$
\ElsIf {$(X_{j-1},X_j)$ is a down arc}
\State Append $w' \cdot \mv d_{X_{j-1},X_j}$ to $\proofseq_0'$;
\myskip $\vec\delta'\gets\vec\delta'+w' \cdot \mv d_{X_{j-1},X_j}$
\ElsIf {$(X_{j-1},X_j)=(I,T_{I,J})$ for some $I,J\subseteq \calV, I\incomp J, \sigma_{I,J}>0$}
         \State Append $w' \cdot \mv d_{I,I \cap J}$, then
          $w' \cdot \mv s_{I,J}$ to $\proofseq_0'$;
          \myskip \myskip $\vec\delta'\gets\vec\delta'+w' \cdot \mv d_{I,I \cap J}+w' \cdot \mv s_{I,J}$
          \State $\sigma'_{I,J} \gets \sigma'_{I,J}-w'$ \label{ln:poly-reduce-sigma}
\EndIf
\Comment{Now, $\proofseq_0'$ is a proof sequence for $\inner{\vec\delta', \mv h}\leq\inner{\vec\delta'_0, \mv h}$,
and $\vec\delta_0$ dominates $\vec\delta'_0$}
\State Find $\vec\delta$ dominating $\vec\delta'$
and a proof sequence $\proofseq_0$ for $\inner{\vec\delta, \mv h}\leq\inner{\vec\delta_0, \mv h}$
\Comment{Lemma~\ref{lmm:remove-domination}}
\State Append $\proofseq_0$ to $\proofseq$
\EndIf
\EndFor
\EndFor
\If {$\norm{\vec\lambda}_1<\Delta$} \label{ln:poly-lambda<Delta}
\State Find $\inner{\vec\lambda, \mv h} \leq \inner{\vec\delta',\mv h}$
with witness $(\vec\sigma',\vec\mu')$
where \State \quad\quad $\vec\delta'$ is dominated by $\vec\delta$
and $\norm{\vec\sigma'}_1\leq \frac{1}{2}n^3\norm{\vec\lambda}_1$
\Comment{Corollary~\ref{cor:2sigma<=n-cube-lambda}}
\State $\Delta\gets \max\left\{2^i \suchthat 2^i\leq \norm{\vec\lambda}_1 \text{ and $i\in \N$}\right\}$\label{ln:poly-update-Delta}
\Comment{Now, we have $\Delta\leq\norm{\vec\lambda}_1< 2\Delta$ and $\norm{\vec\sigma'}_1<n^3\Delta$}
\EndIf
\EndWhile
\State \Return \proofseq
\end{algorithmic}
\end{algorithm}

\bthm[Construction of a poly-sized proof sequence, modulo $\log D$]
For any Shannon flow inequality $\inner{\vec\lambda, \mv h}\leq \inner{\vec\delta, \mv h}$
with witness $(\vec\sigma,\vec\mu)$ where $\norm{\vec\lambda}_1=1$,
Algorithm~\ref{algo:poly-proofSeq} produces a proof sequence
whose length is $O(\log D \cdot n^3\cdot 2^{6n}\cdot 3^n)=O(\log D\cdot \poly(2^n))$,
where $D$ is the common denominator of all entries in $(\vec\lambda,\vec\delta,\vec\sigma,\vec\mu)$.
\label{thm:almost-poly-proofseq}
\ethm

\bp
Given a flow network $G$ with vertex set $\mcalV$ and arc set $\calA$,
Edmond-Karp algorithm~\cite{DBLP:journals/jacm/EdmondsK72, Cormen:2009:IAT:1614191} finds the maximum flow in time
$O(|{\mcalV}|\cdot |\calA|^2)$.
Edmond-Karp is based on augmenting paths.
In particular, it finds $O(|\mcalV|\cdot|\calA|)$ such paths,
each of which has length $\leq|\mcalV|$ and can be found in time $O(|\calA|)$.
Each path also has a \emph{capacity} $c$, which is the minimum capacity among arcs of that path.
The total capacity of all augmenting paths is equal to the maximum flow.

The flow network $\bar G(\vec\lambda,\vec\delta',\vec\sigma',\vec\mu')$
constructed in Line~\ref{ln:poly-net-construction} of Algorithm~\ref{algo:poly-proofSeq}
has a vertex set $\mcalV$ of size $|\bar \mcalV|=O(2^{2n})$
and an arc set $\calA$ of size $|\bar\calA|=O(2^{2n})$.
By Lemma~\ref{lmm:flow>lambda_1}, it has a maximum flow $\geq \norm{\vec\lambda}_1\geq \Delta$ .
The loop in Line~\ref{ln:poly-for-all-path} maintains
the quantities $\flow'(Z)-\lambda_Z$ for all $\emptyset\neq Z\subseteq \calV$ unchanged
(where $\flow'(Z)$ is the quantity $\flow(Z)$ measured on the vector
$(\vec\delta',\vec\sigma',\vec\mu')$.).
For each augmenting path in the loop, we either reduce some $\lambda_B$ (in Line~\ref{ln:poly-reduce-lambda})
or $\sigma'_{I,J}$ (in Line~\ref{ln:poly-reduce-sigma}).
After we are done with all paths in the loop,
the quantity $\norm{\vec\sigma'}_1+\norm{\vec\lambda}_1$ is reduced by at least $\Delta$,
and a total of $O(2^{6n})$ proof steps where appended to $\proofseq'_0$.
By Lemma~\ref{lmm:remove-domination}, the original proof sequence $\proofseq$
is extended by $O(2^{6n}\cdot 3^n)$ proof steps.

Before the loop in line~\ref{ln:poly-while}, we had
$\Delta\leq\norm{\vec\lambda}_1< 2\Delta$ and $\norm{\vec\sigma'}_1<n^3\Delta$,
which both continue to hold and $\Delta$ remains fixed as long as
$\norm{\vec\lambda}_1$ remains $\geq\Delta$ in line~\ref{ln:poly-lambda<Delta}.
The maximum number of iterations the loop in line~\ref{ln:poly-while} can make before $\norm{\vec\lambda}_1$ drops below $\Delta$ can be bounded as follows.
At each iteration, the quantity $\norm{\vec\sigma'}_1+\norm{\vec\lambda}_1$ is reduced by at least $\Delta$.
However, $\norm{\vec\lambda}_1$ cannot be reduced by more than $\Delta$ in total (since $\Delta\leq\norm{\vec\lambda}_1< 2\Delta$).
Also, $\norm{\vec\sigma'}_1$ cannot be reduced by more than $n^3\Delta$ in total.
Therefore the maximum number of iterations before $\norm{\vec\lambda}_1$ drops below $\Delta$ is bounded by $n^3+1$.

Every time $\norm{\vec\lambda}_1$ drops below $\Delta$, $\Delta$ gets divided by 2 (or a higher power of 2).
The initial $\Delta$ in line ~\ref{ln:poly-initial-Delta} was $\leq D$.
Therefore, the maximum number of times $\norm{\vec\lambda}_1$ can drop below $\Delta$ is $O(\log D)$.
\ep

\bprop[$\log D$ is polynomial]
Given an optimization problem of the form \eqref{eqn:ddl:target:first:version},
if the objective value is positive and bounded,
then there exists a vector $\vec\lambda=(\lambda_B)_{B\in\calB}$ satisfying the following conditions:
\bi
\item[(a)] $\norm{\vec\lambda}_1=1$.
\item[(b)] The optimization problem~\eqref{eqn:ddl:target:first:version}
has the same optimal objective value as the linear program~\eqref{eqn:ddl:target} (using this $\vec\lambda$).
\item[(c)] The linear program~\eqref{eqn:ddl:target} has an optimal dual solution $(\vec\delta^*,\vec\sigma^*,\vec\mu^*)$ such that the common denominator $D$ of all entries in $(\vec\lambda,\vec\delta^*,\vec\sigma^*,\vec\mu^*)$ satisfies
\[D\leq (2^{n})!\]
\ei
\label{prop:logD-is-poly}
\eprop
\bp
This proposition is a minor extension to Lemma~\ref{lmm:rewrite},
specialized to the optimization problem~\eqref{eqn:ddl:target:first:version}.
In particular, when we pick an optimal dual solution $(\mv z^*, \mv y^*)$ to \eqref{eqn:LP4},
we choose $(\mv z^*, \mv y^*)$ to be an extreme point of the following polyhedron
\[\{(\mv z, \mv y)\suchthat \mv A^T\mv y \geq \mv C \mv z, \mv 1_p^T\mv z \geq 1, (\mv z, \mv y)\geq \mv 0\}.\]
In \eqref{eqn:ddl:target:first:version}, all entries of $\mv A$ and $\mv C$ are in $\{1, 0, -1\}$.
By Cramer's rule, the common denominator $D^*$ of all entries in $(\mv z^*, \mv y^*)$ is
(the absolute value of) the determinant of an $m\times m$ matrix whose entries are all in $\{1,0,-1\}$.
Hence, $D\leq m!$.
\ep

\bcor[Construction of a poly-sized proof sequence]
Given any optimization problem of the form \eqref{eqn:ddl:target:first:version}
where the optimal objective value $\obj$ is positive and bounded,
there exists a Shannon flow inequality $\inner{\vec\lambda, \mv h}\leq\inner{\vec\delta,\mv h}$
satisfying the following conditions:
\bi
\item $\sum_{(X, Y)}\delta_{Y|X}n_{Y|X}\leq\obj.$
\item $\lambda_{Y|X}=0$ for all $(X, Y)\in\calP$ where $X\neq\emptyset$ or $Y\notin\calB$.
\item $\inner{\vec\lambda, \mv h}\leq\inner{\vec\delta,\mv h}$ has a proof sequence of length
$O(n^4\cdot 2^{7n}\cdot 3^n)=O(\poly(2^n))$.
\ei
\label{cor:poly-proofseq}
\ecor

\end{document}